\documentclass[11pt]{article}

\usepackage[margin=1in]{geometry}
\usepackage[T1]{fontenc}
\usepackage{lmodern}
\usepackage{amsmath,amssymb,amsthm,mathtools}
\usepackage{enumitem}
\usepackage{microtype}
\usepackage[dvipsnames]{xcolor}
\usepackage[colorlinks=true,linkcolor=BrickRed,citecolor=OliveGreen,urlcolor=NavyBlue]{hyperref}
\usepackage[capitalise,nameinlink]{cleveref}
\usepackage{thmtools}
\usepackage{booktabs}
\usepackage{tikz}

\declaretheorem[name=Theorem,numberwithin=section]{theorem}

\declaretheorem[name=Lemma,sibling=theorem]{lemma}
\declaretheorem[name=Corollary,sibling=theorem]{corollary}
\declaretheorem[name=Proposition,sibling=theorem]{proposition}

\declaretheorem[name=Definition,style=definition,sibling=theorem]{definition}
\declaretheorem[name=Remark,style=remark,sibling=theorem]{remark}

\newcommand{\Z}{\mathbb{Z}}
\newcommand{\Ot}{\widetilde{O}}
\newcommand{\eps}{\varepsilon}
\newcommand{\floor}[1]{\left\lfloor #1 \right\rfloor}
\newcommand{\ceil}[1]{\left\lceil #1 \right\rceil}
\newcommand{\mc}{\star}
\newcommand{\mono}{\operatorname{mono}}
\newcommand{\Zinf}{\Z\cup\{+\infty\}}
\newcommand{\set}[1]{\{#1\}}

\title{Barely Monotone $(\min,+)$-Convolution in Truly Subquadratic Time}

\author{
MohammadTaghi Hajiaghayi\thanks{University of Maryland. Email: \texttt{hajiaghayi@gmail.com}.}
\and
Danny Mittal\thanks{University of Maryland. Email: \texttt{dannymittal@gmail.com}.}
\and
Saeed Seddighin\thanks{Email: \texttt{saeedreza.seddighin@gmail.com}.}
}

\date{}

\hypersetup{pdftitle={Barely Monotone (min,+)-Convolution in Truly Subquadratic Time},
  pdfauthor={MohammadTaghi Hajiaghayi, Danny Mittal, Saeed Seddighin}}
\hypersetup{pageanchor=false}
\begin{document}

\maketitle

\begin{abstract}
The $(\min,+)$-convolution of two sequences $A$ and $B$ of length $n$ is the sequence $C$ with
$C[k]=\min_{i+j=k}(A[i]+B[j])$. For \emph{bounded} inputs, whose entries are integers in $\{0,\dots,O(n)\}$, prior
work computes it in truly subquadratic time when the inputs are monotone; the algorithm of Chi, Duan, Xie, and Zhang
(STOC 2022) takes expected $\Ot(n^{1.5})$ time. We introduce a monotonicity measure ranging from $0$ (monotone) to
$1/2$ (entirely non-monotone): a sequence has \emph{monotonicity} $\alpha$ if it can be partitioned into $O(n^\alpha)$
monotone subsequences, and by the Erd\H{o}s--Szekeres theorem every sequence has monotonicity at most $1/2$. We show
that truly subquadratic time is achievable even when just one input is barely monotone, that is, has monotonicity
$1/2-\Omega(1)$: if $A$ has monotonicity $\alpha$, we compute the convolution in expected time
$\Ot(n^{5/3+2\alpha/3})$ for every bounded $B$. If $B$ has monotonicity $\beta$ as well, the expected time improves
to $\Ot(n^{(3+\alpha+\beta)/2})$, which matches the monotone case for $\alpha=\beta=0$; this algorithm also allows
infinite entries placed arbitrarily.

We complement these algorithms with fine-grained reductions. Bounded $(\min,+)$-convolution reduces to bounded
monotone $(\min,+)$-convolution of length $N=O(n^{1.5})$, so an $O(N^{4/3-\eps})$-time algorithm for monotone
inputs would give an $O(n^{2-3\eps/2})$-time algorithm for bounded inputs. Similarly, entries bounded by $n$ reduce
to entries bounded by $N^x$ on sequences of length $N=\Theta(n^{2/(1+x)})$. We also show that if only $A$ has entries in
$\{0,\dots,M\}$, we can compute the convolution in $\Ot(n(M+1))$ time, and in $\Ot(n^{1.5}\sqrt M)$ time if $A$ may
also contain $+\infty$.
\end{abstract}

\thispagestyle{empty}
\newpage
\setcounter{page}{1}
\hypersetup{pageanchor=true}

\section{Introduction}
\label{sec:intro}

The All-Pairs Shortest Paths (APSP) problem is equivalent, up to subcubic reductions, to computing the
$(\min,+)$-product $C[i][j]=\min_k(A[i][k]+B[k][j])$ of two $n\times n$ matrices~\cite{VassilevskaWilliamsW18}. Until
recently the fastest known algorithm for both problems ran in $n^3/2^{\Omega(\sqrt{\log n})}$
time~\cite{Williams18}, and whether $O(n^{3-\eps})$ time is possible for some constant $\eps>0$ had been a central
open question for decades; very recently,
Alman and Vassilevska Williams~\cite{AlmanV26} gave the first truly subcubic algorithm for APSP with polynomially
bounded integer weights. The one-dimensional analogue of the $(\min,+)$-product is the
\emph{$(\min,+)$-convolution} of two sequences $A=(A[0],\dots,A[n-1])$ and $B=(B[0],\dots,B[n-1])$, namely the
sequence $C=A\mc B$ of length $2n-1$ with
\[
  C[k] \;=\; \min_{\substack{0\le i,j<n\\ i+j=k}} \bigl(A[i]+B[j]\bigr), \qquad 0\le k\le 2n-2.
\]
The two problems are closely related: as observed by Bremner et al.~\cite{BremnerCDEHILPT14}, the
$(\min,+)$-convolution of two sequences of length $n$ can be computed with $O(\sqrt n)$ $(\min,+)$-products of
$\sqrt n\times\sqrt n$ matrices. A truly subcubic algorithm for APSP therefore gives a truly subquadratic algorithm
for $(\min,+)$-convolution, which has long made the latter a natural intermediate target on the way to faster APSP
algorithms. $(\min,+)$-Convolution is also tied by fine-grained reductions to knapsack-type problems, tree
sparsity, and other problems~\cite{CyganMWW19,KunnemannPS17,BackursIS17}. The algorithm of~\cite{AlmanV26} yields a
truly subquadratic algorithm for $(\min,+)$-convolution with polynomially bounded integer entries, but with an
exponent very close to $2$. Structured inputs admit much faster algorithms, and they are the subject of this paper.

If all entries are integers in $\set{0,\dots,M}$, the convolution can be computed in $\Ot(n(M+1))$ time with the
fast Fourier transform, in the spirit of the algorithm of Alon, Galil, and Margalit~\cite{AlonGM97} for the matrix
case. For $M=\Theta(n)$, that is, for \emph{bounded} inputs with entries in $\set{0,\dots,O(n)}$, this gives no
improvement over the naive algorithm. If, in addition, both sequences are monotone, the problem becomes
substantially easier. Chan and Lewenstein~\cite{ChanL15} gave the first truly subquadratic algorithm for bounded
monotone $(\min,+)$-convolution, with expected running time $O(n^{1.859})$, using additive combinatorics, and showed
that the problem is closely related to bounded-difference $(\min,+)$-convolution and to binary jumbled indexing.
Chi, Duan, Xie, and Zhang~\cite{ChiDXZ22} then improved the expected running time to $\Ot(n^{1.5})$; we refer to
their algorithm as the CDXZ algorithm. Bringmann, D\"urr, and Polak~\cite{BringmannDP24} extended it to entries in
$\set{0,\dots,M}$, obtaining expected time $\Ot(n\sqrt{M+1})$, and showed that it suffices for one of the two
sequences to be monotone; Jin, Park, Saha, and Xu~\cite{JinPSX26} gave a deterministic version with an $n^{o(1)}$
overhead. These algorithms have found applications to knapsack problems~\cite{BringmannC22,BringmannDP24}. The
picture is similar for the $(\min,+)$-product, where bounded-difference instances and instances with bounded row- or
column-monotone matrices admit truly subcubic
algorithms~\cite{BringmannGSW16,WilliamsX20,GuPWX21,ChiDX22,ChiDXZ22,JinPSX26}.

Structured $(\min,+)$-convolution is also at the heart of our recent work on dynamic problems~\cite{HajiaghayiMS27}.
There we gave truly sublinear dynamic algorithms for $(\min,+)$-convolution and for
$(\min,+)$-matrix multiplication under \emph{marginal} updates, in which each update changes one input entry by
$\pm1$ and each query asks for one output entry. For $(\min,+)$-convolution we obtained $\Ot(n^{21/22})$ expected
amortized update time and $O(n^{21/22})$ worst-case query time, by combining the structure of bounded inputs with
additive combinatorics, even though the values themselves may grow without bound. We also reduced static
bounded-difference $(\min,+)$-convolution to this dynamic problem; update and query times of $O(n^{1/3-\eps})$ (with
$n^{2+o(1)}$ initialization time for arbitrary inputs) would improve the exponent $3/2$ for bounded-difference
inputs. For $(\min,+)$-matrix multiplication we obtained truly sublinear algorithms whenever the updates are
not too concentrated on a single row or column, and a matching hardness result based on the OMv conjecture shows
that this condition cannot be dropped. These results complement classical work on dynamic matrix inverse and dynamic
convolution.

All known algorithms for bounded inputs with running time close to $n^{1.5}$, however, rely on monotonicity, or on
closely related structure such as bounded differences, of at least one of the sequences. In this
paper we ask how much monotonicity is really needed. We measure the distance from monotonicity by the
\emph{monotonicity number} $\mono(A)$, the minimum number of monotone subsequences into which a sequence $A$ can be
partitioned. Every sequence of length $n$ has monotonicity number $O(\sqrt n)$, so we may write $\mono(A)=O(n^\alpha)$ with
$\alpha\in[0,1/2]$ and call $\alpha$ the \emph{monotonicity} of $A$; it ranges from $0$ for monotone sequences to
$1/2$ for entirely non-monotone ones, and the algorithms above cover only monotone sequences, at the extreme $\alpha=0$. We show that this
is far from necessary: bounded $(\min,+)$-convolution can be computed in truly subquadratic time even when just one
input is barely monotone, that is, has monotonicity $1/2-\Omega(1)$. We also study the
converse direction, that is, how improvements for monotone inputs can be transferred to less structured ones, and
give fine-grained reductions between bounded, monotone, and small-range instances.

\subsection{Our Results}
\label{sec:results}

\begin{table}[t]
\centering
\renewcommand{\arraystretch}{1.25}
\small
\begin{tabular}{@{}lll@{}}
\toprule
Problem & Result & Reference\\
\midrule
$\mono(A)\le n^\alpha$, $\mono(B)\le n^\beta$ & $\Ot(n^{(3+\alpha+\beta)/2})$, Las Vegas & \ref{thm:nearly-mono-1}\\
$\mono(A)\le n^\alpha$, $B$ bounded & $\Ot(n^{5/3+2\alpha/3})$, Las Vegas & \ref{thm:nearly-mono-2}\\
monotone with infinities & $\Ot(n^{1.5})$, Las Vegas & \ref{thm:nearly-mono-1}\\
$\mono(A)\le 2$, $B$ bounded & $O(n^{1.5-\eps})\Rightarrow$ bounded $O(n^{2-\eps})$ & \ref{prop:one-sided-lb}\\
$A\in[0,M]^n$, $B$ arb. & $\Ot(n(M+1))$, determ. & \ref{thm:one-bounded}\\
$A\in([0,M]\cup\set{\infty})^n$, $B$ arb., $1\le M\le n$ & $\Ot(n^{1.5}\sqrt M)$, determ. & \ref{thm:bounded-inf}\\
bounded $\to$ bounded monotone & length and values $O(n^{1.5})$ & \ref{thm:mono-red}\\
& monotone $O(N^{4/3-\eps})\Rightarrow$ bounded $O(n^{2-3\eps/2})$ & \ref{cor:four-thirds}\\
values $\le n$ $\to$ values $\le N^{x}$ & length $N=\Theta(n^{2/(1+x)})$ & \ref{thm:value-red}, \ref{cor:value-lb}\\
\bottomrule
\end{tabular}
\caption{Summary of our results. Here $n$ is the input length, $N$ is the length produced by a reduction, and
$\mono(\cdot)$ is the monotonicity number; in the value-range reduction, $0<x\le1$ is fixed. Unless stated
otherwise, finite entries are integers in $\set{0,\dots,O(n)}$, and ``arb.''\ means that $B$ may have arbitrary
integer or $+\infty$ entries. In the third row, the finite entries of each input form a monotone sequence and
$+\infty$ entries are placed arbitrarily. Las Vegas bounds are expected running times.}
\label{tab:results}
\end{table}

Throughout, $n$ denotes the length of the input sequences, $[n]=\set{0,1,\dots,n-1}$, and $\Ot(\cdot)$ suppresses
factors polylogarithmic in $n$. A sequence is \emph{monotone} if it is non-decreasing or non-increasing, and an
instance is \emph{bounded} if all entries are integers in $\set{0,\dots,O(n)}$. Our results are summarized in
\cref{tab:results}.

\emph{Barely monotone sequences.}
In the monotonicity number $\mono(A)$, the monotone subsequences need not be contiguous, and their directions may
differ. Monotone sequences have monotonicity number $1$, and by the Erd\H{o}s--Szekeres theorem~\cite{ErdosS35} every
sequence of length $n$ has monotonicity number $O(\sqrt n)$ (\cref{lem:greedy-partition}). Interleaving matters already for two subsequences: $(\min,+)$-convolution of two bounded sequences with monotonicity
number at most $2$ contains minimum-witness convolution (\cref{rem:min-witness}), for which no
$O(n^{1.5-\eps})$-time algorithm is known~\cite{LincolnPW20}, and the reduction of Bringmann, D\"urr, and Polak to
the case where both sequences are monotone does not apply. Our main result is the following.

\begin{theorem}[see \cref{thm:nearly-mono-1,thm:nearly-mono-2}]
\label{thm:intro-nearly-mono}
Let $A,B\in\set{0,\dots,O(n)}^n$ with $\mono(A)=O(n^{\alpha})$ and $\mono(B)=O(n^{\beta})$.
Then $A\mc B$ can be computed by a Las Vegas algorithm in expected time
\begin{enumerate}[label=(\alph*)]
  \item $\Ot\bigl(n^{(3+\alpha+\beta)/2}\bigr)$, and
  \item $\Ot\bigl(n^{5/3+2\alpha/3}\bigr)$, without any assumption on the monotonicity number of $B$.
\end{enumerate}
No partition into monotone subsequences needs to be given as part of the input.
\end{theorem}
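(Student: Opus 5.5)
We first compute a partition of each input into few monotone subsequences, since none is given. The greedy procedure of \cref{lem:greedy-partition} does this: repeatedly extract a longest non-decreasing or non-increasing subsequence, each in $O(n\log n)$ time. By Erd\H{o}s--Szekeres, a sequence with $\mono(A)=k$ contains a monotone subsequence of length at least $|A|/k$. Hence each round removes a $1/k$ fraction of what remains, and $O(k\log n)$ rounds leave nothing. This yields $A=A_1\cup\dots\cup A_p$ with $p=\Ot(n^\alpha)$, and similarly $B=B_1\cup\dots\cup B_q$ with $q=\Ot(n^\beta)$. We view each $A_s$ as a sequence of length $n$ with $+\infty$ outside its support. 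Then $A\mc B$ is the entrywise minimum of the $A_s\mc B_t$ (for (a)) or of the $A_s\mc B$ (for (b)). So it suffices to convolve a monotone sequence with infinities interspersed, at a cost sensitive to its support size.

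\textbf{Part (a).} The key step is a support-sensitive extension of the CDXZ algorithm. Let $X$ have finite entries forming a monotone sequence on support $S_X$, and likewise $Y$ on $S_Y$, with values $O(n)$ and infinities placed arbitrarily. The goal is to compute $X\mc Y$ in $\Ot(n+\sqrt{|S_X||S_Y|}\cdot n^{1/2})$, or a similar bound that sums correctly. The plan is to rerun the CDXZ scheme. Split indices into blocks of length $L$. On a monotone support, the values within a block span a range, and these ranges sum to $O(n)$ over all blocks. Pairs of blocks whose value ranges are small are handled by bounded-range FFT convolution, in the style of the $\Ot(n(M+1))$ algorithm of \cref{thm:one-bounded}, with infinities treated as absent terms. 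The remaining pairs are handled by the CDXZ/Bringmann--D\"urr--Polak random-shift and additive-combinatorics step. I would adjust $L$ per pair to balance against $|S_X|$ and $|S_Y|$.

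Summing over pairs $(s,t)$ with $\sum_s|S_{A_s}|=\sum_t|S_{B_t}|=n$, Cauchy--Schwarz gives $\sum_{s,t}\sqrt{|S_{A_s}||S_{B_t}|}\le\sqrt{pq}\,n$. Multiplied by $n^{1/2}$ this is $\Ot(n^{(3+\alpha+\beta)/2})$. The additive $\Ot(n)$ per pair contributes $pq\,n$, which is also within this bound since $\alpha,\beta\le1/2$. For $p=q=1$ we recover the monotone-with-infinities case.

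\textbf{Part (b).} Here $B$ is arbitrary bounded, and we trade off two methods. Pick a threshold $\Delta$. For each monotone piece $A_s$, cut its index range into segments on which the values of $A_s$ vary by at most $\Delta$. Monotonicity gives $O(n/\Delta)$ segments per piece, and $O(pn/\Delta)$ over all pieces. On a segment, subtract its base value, so the entries lie in $[0,\Delta]$. Then convolve the segment against $B$ using the one-sided bounded-range algorithm of \cref{thm:one-bounded}, or its infinity variant \cref{thm:bounded-inf}, restricted to the relevant window of $B$. Alternatively, handle long sparse segments naively. The crux is the balance: the naive/FFT cost per segment versus the number of segments. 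I would choose segment lengths $\ell$ and value ranges $\Delta$ with $\ell\Delta\approx n$-type constraints and apply \cref{thm:bounded-inf} at cost $\Ot(\ell^{1.5}\sqrt\Delta)$ per window of $B$ of length $\ell$. I expect the parameters to balance at $n^{5/3+2\alpha/3}$, i.e.\ $p^{2/3}n^{5/3}$.

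The main obstacle is part (a)'s support-sensitive CDXZ bound. The additive-combinatorics step (Balog--Szemer\'edi--Gowers style covering) must be redone so that its cost depends on the supports rather than on $n$, while the infinities break the contiguity of blocks. I would first try to reindex each $A_s$ by compressing its support to a contiguous monotone sequence of length $|S_{A_s}|$. Indices would then be handled by a two-dimensional (index, value) bucketing, so that the original CDXZ argument applies nearly verbatim.
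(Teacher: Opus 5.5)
Your partition step is fine (that some part of an optimal partition has length at least $|A|/k$ is pigeonhole, not Erd\H{o}s--Szekeres). However, neither part of the proposal contains an algorithm that achieves the stated bound.

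\textbf{Part (a).} Splitting into pairs $(A_s,B_t)$ only pays off if each pair runs in time $\Ot(n+\sqrt{n|S_X||S_Y|})$, and you never establish such a bound.

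\begin{itemize}
\item Even for supports of linear size, this is exactly the case $m_A=m_B=1$ of \cref{thm:nearly-mono-1}: monotone finite entries with arbitrarily placed infinities. That case is itself new. It contains minimum-witness convolution, and CDXZ does not cover it because their analysis bounds the number of intervals of infinite entries.
\item A support-insensitive $\Ot(n^{1.5})$ per pair would only give $\Ot(pq\,n^{1.5})$, well above the target.
\item Your route to support sensitivity does not work. The block/bounded-range/BSG scheme you describe is the Chan--Lewenstein approach; CDXZ has no additive-combinatorics step.
\item Compressing the supports of $A_s$ and $B_t$ to contiguous sequences destroys the constraint $i+j=k$. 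Outputs are indexed by sums of original indices, and the two compressions do not preserve these sums.
\end{itemize}

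The missing idea is how to run the CDXZ bit-by-bit refinement when a part is interleaved with other indices. The paper does not split into pairs at all:
\begin{itemize}
\item It defines segments for each part pair $(P_s,Q_t)$ via the filled sequences $\bar A_s,\bar B_t$. These are monotone on all of $[n]$, so gaps do not fragment segments.
\item The price is that a segment's length no longer equals its number of valid pairs $|\tau\cap V_k^{s,t}|$.
\item This is handled by the aliasing filter. At each level, a global product over all finite entries also records high parts modulo a fresh random prime $q_\lambda$, so only aliased false segments need to be scanned.
\item Balancing the product size $n2^{h-\lambda}q_\lambda$ against the expected scan cost $\approx Rn^2\,2^\lambda/(\mathsf P q_\lambda)$ yields $\Ot(n^{1.5}\sqrt{m_Am_B})$.
\end{itemize}

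\textbf{Part (b).} The proposed scheme is quadratic for every choice of $\ell$ and $\Delta$, already for the single increasing sequence $A[i]=i$.

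\begin{itemize}
\item There a segment of length $\ell'$ has value range about $\ell'$, and it must be convolved against all $n$ entries of $B$.
\item Over $n/\ell'$ windows of $B$, \cref{thm:bounded-inf} costs $\Ot(n\sqrt{\ell'\cdot\ell'})=\Ot(n\ell')$. \cref{thm:one-bounded} also costs $\Ot(n\ell')$, and brute force costs $O(n\ell')$.
\item The segment lengths sum to $n$, so the total is $\Omega(n^2)$ no matter how you balance.
\item With several interleaved parts, \cref{thm:one-bounded} does not even apply, since a segment carries $+\infty$ at the other parts' indices.
\end{itemize}

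What is missing is a way to share work across segments, which the paper gets from hashing with a random prime:
\begin{itemize}
\item It computes the coarse convolution $\widetilde C$ at scale $D$ once, in $\Ot(n^2/D)$ time.
\item It performs one global product of low parts and of coarse values modulo a random prime from $[\mathsf P,2\mathsf P]$, in $\Ot(nD\mathsf P)$ time (\cref{lem:framework}).
\item It then subtracts only the erroneous pairs. These are found with the reporting structure of \cref{lem:reporting} on $\widetilde C$, queried for every group $G$ of a part and every $j\in[n]$, which is $O(m'n^2/D)$ queries for $m'$ parts.
\item Because the spans of the groups of one part are disjoint, there are only $O(m'n^2)$ candidate triples. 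Each is falsely reported with probability $O(\log n/\mathsf P)$.
\item Setting $D=\mathsf P\approx(m'n)^{1/3}$ gives $\Ot(n^{5/3}m^{2/3})$.
\end{itemize}
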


Since we may assume $\alpha,\beta\le1/2$, both bounds are at most quadratic up to
polylogarithmic factors, and they are truly subquadratic under mild conditions: bound~(a) whenever $\alpha+\beta<1$, which holds in
particular when both exponents are below $1/2$, and bound~(b) whenever $\alpha<1/2$, for every bounded $B$. For $\alpha=\beta=0$, bound~(a)
matches the $\Ot(n^{1.5})$ bound of CDXZ while allowing any constant number of interleaved monotone parts, and in general the bound is $\Ot(n^{1.5}\sqrt{\mono(A)\mono(B)})$. Bound~(a) also holds for sequences
with infinite entries (\cref{thm:nearly-mono-1}), provided that the finite entries can be partitioned into few
monotone subsequences. When the finite entries of each sequence are monotone, the infinite entries may be placed
arbitrarily and the expected running time is $\Ot(n^{1.5})$; this covers both bounded monotone
$(\min,+)$-convolution and minimum-witness convolution. Bound~(b) is of interest when only $A$ is structured: for a
constant number of interleaved monotone subsequences in $A$, it gives $\Ot(n^{5/3})$ time for arbitrary bounded $B$,
while for monotone $A$ an $\Ot(n^{1.5})$ bound is already known (\cref{rem:one-part}). A simple reduction
(\cref{prop:one-sided-lb}) shows that an $O(n^{1.5-\eps})$-time algorithm for $\mono(A)\le 2$ and arbitrary bounded
$B$ would give an $O(n^{2-\eps})$-time algorithm for bounded $(\min,+)$-convolution.

Computing the monotonicity number exactly is NP-hard~\cite{Wagner84}, and the $1.71$-approximation of Fomin,
Kratsch, and Novelli~\cite{FominKN02} (stated for sequences without repeated values) is too slow for our purposes.
A simple greedy algorithm, however, finds a partition into $O(\mono(A)\log n)$ monotone subsequences in
$\Ot(n\cdot\mono(A))$ time (\cref{lem:greedy-partition}); the extra logarithmic factor in the number of parts is
absorbed by our bounds.

\emph{One bounded sequence.}
The standard $\Ot(n(M+1))$ bound for two sequences with entries in $\set{0,\dots,M}$ also holds when only $A$ has
such entries.

\begin{theorem}[see \cref{thm:one-bounded}]
\label{thm:intro-one-bounded}
Let $A\in\set{0,\dots,M}^n$ and let $B\in(\Zinf)^n$ be arbitrary. Then $A\mc B$ can be computed
deterministically in $\Ot(n(M+1))$ time.
\end{theorem}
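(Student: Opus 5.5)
The plan is to show that for each output index $k$, the set of $B$-values that can matter is a window of width about $M$ above the smallest relevant $B$-value. We then encode $B$ relative to a running threshold so that FFT-based bounded convolution applies.

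\textbf{Reduction to small range.} Observe that $C[k]\in[m_k,m_k+M]$, where $m_k=\min_{i+j=k}B[j]$ over valid $j$, since $A\ge 0$ and $A\le M$. So only pairs with $B[j]\le m_k+M$ matter. The difficulty is that $m_k$ varies with $k$, and $B$ itself is unbounded. I will therefore decompose $B$ by value. Sort the finite values of $B$ and cut the value axis into consecutive buckets $I_t=[tM',(t+1)M')$ with $M'=M+1$, and let $B_t$ be $B$ restricted to entries in $I_t$, shifted down by $tM'$ (other entries set to $+\infty$). Then $A\mc B=\min_t (A\mc B_t + tM')$, and each $A\mc B_t$ has both inputs in $\{0,\dots,M\}\cup\{\infty\}$. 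The issue is that there may be $n$ nonempty buckets, and each costs $\Ot(n M)$ naively.

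\textbf{Making the cost output-sensitive.} For $A\mc B_t$ with $A$ finite everywhere, the result at index $k$ is finite iff some $j\in\operatorname{supp}(B_t)$ has $k-j\in[n]$. Moreover, bucket $t$ is useful at $k$ only if no bucket $t'\le t-2$ hits $k$, since a lower bucket two steps down already gives a value at most $(t-1)M'+M<tM'$. Hence each $k$ needs only the two lowest buckets that hit it. I will compute, for each $k$, $t_1(k)=\min\{t: \exists j\in B_t,\ k-j\in[n]\}$. This is a sliding-window minimum over the bucket labels of $B$, computable in $O(n)$ time. The answer at $k$ is then determined by buckets $t_1(k)$ and $t_1(k)+1$.

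\textbf{Relative encoding.} Rather than convolving per bucket, define a single sequence per residue layer. Process $k$ in blocks of length $n$ (there are $O(1)$ of them) and use the standard ``min via counting'' FFT: for each value $v\in\{0,\dots,3M\}$ and threshold, compute with FFT whether some pair has $A[i]+B[j]-(\text{base})\le v$. To make the base depend on $k$, I will use the fact that $t_1(k)$ changes monotonically only piecewise. If instead I use the bucket structure directly, I can split the $k$-axis into segments of constant $t_1$. The intended route is to replace $B[j]$ by $B'[j]=B[j]-c$ clipped to $\{0,\dots,2M'\}\cup\{\infty\}$, where $c=t_1M'$ is constant on a segment. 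The key lemma then is that the total length of $j$-ranges touched by the segments is $O(n\log n)$. I will obtain it via a dyadic decomposition: split $[0,2n-2]$ recursively, and for an interval of $k$ of length $L$ the relevant $j$ and $i$ ranges have length $O(L)$. On each dyadic node, use the minimum bucket over that node as the base and convolve the $O(L)$-length pieces with values in $O(M)$, in $\Ot(LM)$ time. The sum over levels gives $\Ot(nM)$.

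\textbf{Main obstacle.} The hardest step is ensuring that a dyadic node's base is valid for all $k$ in it, since $t_1(k)$ can vary inside the node and the clipped entries above $c+2M'$ might be needed. I would handle this by assigning each $k$ to the maximal dyadic node in which $t_1$ varies by at most $1$. I then need to argue that such canonical nodes cover each $k$, and that the pieces inside each node are handled by a two-window FFT with $O(M)$-bounded values. The charging argument, bounding the total size of these nodes by $O(n\log n)$, is what I expect to require the most care. If it fails, I would fall back on the Alon--Galil--Margalit-style layering by $\lfloor B[j]/M'\rfloor$ combined with sliding-window minima.
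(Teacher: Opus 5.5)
There is a genuine gap. You localize the useful entries of $B$ only in \emph{value} (buckets $t_1(k)$ and $t_1(k)+1$). The FFT cost, however, is governed by the \emph{index} span of the pieces of $A$ and $B$ that each group of outputs needs, and you never bound that span. What must be charged is not the size of your nodes, which are disjoint, but this span.

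The claim your dyadic scheme rests on is that for a $k$-interval of length $L$ the relevant $j$- and $i$-ranges have length $O(L)$. This is false even for your canonical nodes. Take $n\ge4$ a power of two and $M\ge2$, and let $B[j]=10M'$ for $j<n/2$, $B[n/2]=10M'-1$, $B[j]=5M'$ for $j>n/2$, with $A[n/2]=0$ and $A[i]=M$ otherwise. Then $t_1$ equals $10$, $9$, $5$ on $[0,n/2)$, $\set{n/2}$, $(n/2,n)$ respectively. So $\set{n/2}$ is a maximal aligned dyadic node on which $t_1$ varies by at most $1$, yet the unique witness of $C[n/2]$ is $(i,j)=(n/2,0)$, at distance $n/2$ from the node.

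Conversely, convolving each constant-$t_1$ segment against its whole window $[0,\max K]$ of indices costs $\Theta(n^2M)$ when $B$ decreases in steps of $M'$. So the ``key lemma'' and its charging argument are left unproved, and your fallback is the per-bucket layering you already found too slow.

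The missing observation is an index localization that comes from the monotonicity of prefix minima. For $k<n$ the window of valid $j$ is $[0,k]$, so $t_1(k)=\floor{B_{\min}[k]/M'}$ is non-increasing. A witness $j\le k$ satisfies $B_{\min}[j]\le B[j]\le C[k]\le B_{\min}[k]+M<(t_1(k)+2)M'$, hence $t_1(j)\in\set{t_1(k),t_1(k)+1}$. Thus $j$ lies in the constant-$t_1$ segment $K$ of $k$ or in the segment immediately preceding it, and $i=k-j$ is smaller than the total length $\ell$ of these two segments.

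It therefore suffices to do one FFT per segment. Truncate $A$ to length $\ell$, and restrict $B$ to these two segments and to the value window $[t_1M',(t_1+2)M')$. This costs $O(\ell(M+1)\log n)$. Every index lies in at most two such unions, so the total is $O(n(M+1)\log n)$, and the outputs $k\ge n$ follow by reversing both sequences. No dyadic decomposition is needed: this is essentially the paper's argument, with the levels $g(k)$ of $B_{\min}$ and the intervals $J_\gamma=K_{\gamma-1}\cup K_\gamma$.
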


The proof uses the prefix minima of $B$ to estimate each of the first $n$ output entries up to an additive error of
$M$, and then splits the instance into subinstances in which both sequences have a small value range; this is in the
spirit of the convolution-and-prediction technique of Bateni et al.~\cite{BateniHSS18}. The argument uses that every
entry of $A$ is finite. Once $A$ may contain $+\infty$, we cannot expect $\Ot(n(M+1))$ time: already for
$A\in\set{0,+\infty}^n$ the problem contains \emph{minimum-witness convolution}, where for sets $S,T\subseteq[n]$ one
asks, for every $k$, for the smallest $j\in T$ with $k-j\in S$. The best algorithms for this problem, and for the
$(\min,+)$-convolution of an arbitrary sequence with any $\set{0,+\infty}$-valued mask, take $\Ot(n^{1.5})$
time~\cite{BabaiF09,ChanH20}, and no $O(n^{1.5-\eps})$-time algorithm is known even for minimum-witness
convolution~\cite{LincolnPW20}. Allowing finite entries of $A$ in $\set{0,\dots,M}$, in addition to $+\infty$, costs
only a factor of $\sqrt M$ over this mask case.

\begin{theorem}[see \cref{thm:bounded-inf}]
\label{thm:intro-bounded-inf}
Let $A\in(\set{0,\dots,M}\cup\set{+\infty})^n$ and let $B\in(\Zinf)^n$ be arbitrary, where $1\le M\le n$.
Then $A\mc B$ can be computed deterministically in $\Ot(n^{1.5}\sqrt M)$ time.
\end{theorem}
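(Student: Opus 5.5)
The plan is to generalize the standard $\Ot(n^{1.5})$ algorithm for the $\set{0,+\infty}$-mask case: sort the entries of $B$ by value, cut them into blocks, locate for every output index the first block that can contribute, and brute-force within that block. The finite values of $A$ spread the true minimum over a window of width $M$ beyond that block. I handle this window with a two-variable FFT on value-range-$M$ instances, one per block threshold.

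\textbf{Step 1 (blocks and first contributing block).} Sort the finite entries of $B$ by value and cut the sorted order into $n/g$ consecutive blocks $P_1,\dots,P_{n/g}$ of size $g$, where $g$ will be chosen later. Let $\tau_c=\max_{j\in P_c}B[j]$ and $\sigma_c=\min_{j\in P_c}B[j]$, so $\tau_c\le\sigma_{c+1}$. Let $S=\set{i: A[i]<+\infty}$. For every block $c$, compute the Boolean convolution of the indicator of $S$ with the indicator of $P_c$ by FFT, in $O(n\log n)$ time per block and $\Ot(n^2/g)$ time in total. For each $k$, let $c^*(k)$ be the smallest $c$ such that some $i\in S$ and $j\in P_c$ satisfy $i+j=k$. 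If no such $c$ exists, then $C[k]=+\infty$.

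\textbf{Step 2 (localizing the optimum).} Fix $k$ and write $c=c^*(k)$. Every pair with $j$ in an earlier block is infeasible, since it uses $i\notin S$. Some feasible pair uses $j\in P_c$, so $C[k]\le\tau_c+M$. Any feasible $j$ in a later block has $A[i]+B[j]\ge B[j]\ge\tau_c$. Hence any such $j$ can be optimal only if $B[j]\in[\tau_c,\tau_c+M]$. So $C[k]$ is the minimum of two quantities:
\begin{itemize}
  \item the brute-force minimum over $j\in P_c$, checking $A[k-j]+B[j]$ directly, which costs $O(g)$ per $k$ and $O(ng)$ overall;
  \item the minimum over $j$ with $B[j]\in[\tau_c,\tau_c+M]$ (ties with equal value are harmless, since such $j$ are simply also covered here).
\end{itemize}

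\textbf{Step 3 (window part by small-range FFT).} For each threshold $\tau=\tau_c$, define $B_\tau[j]=B[j]-\tau\in\set{0,\dots,M}$ when $B[j]\in[\tau,\tau+M]$, and $B_\tau[j]=+\infty$ otherwise. Now both $A$ and $B_\tau$ have finite entries in $\set{0,\dots,M}$, with infinities placed anywhere. Form the polynomials $\sum_{i\in S}x^iy^{A[i]}$ and $\sum_{j:B_\tau[j]<\infty}x^jy^{B_\tau[j]}$, simply omitting infinite entries, and multiply them by two-dimensional FFT in $\Ot(nM)$ time. The coefficient structure gives $(A\mc B_\tau)[k]$ for all $k$ as the smallest $y$-exponent with a nonzero coefficient of $x^k$; to avoid cancellation, use nonnegative coefficients or counts. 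For each $k$ with $c^*(k)=c$, take $\tau_c+(A\mc B_\tau)[k]$. Over all $n/g$ thresholds this costs $\Ot(n^2M/g)$.

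\textbf{Balancing.} The total time is $\Ot(n^2/g+ng+n^2M/g)$. Choosing $g=\sqrt{nM}$, which lies in $[1,n]$ because $1\le M\le n$, gives $\Ot(n^{1.5}\sqrt M)$. Everything is deterministic, since we use only sorting, exact FFT over the integers, and brute force.

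The main point to verify carefully is the correctness of Step 2's localization: that $c^*(k)$ really certifies the upper bound $\tau_c+M$, and that blocks before $c^*(k)$ contribute nothing. This relies exactly on $A$'s finite values lying in $[0,M]$ and on all infinite entries of $A$ being excluded through $S$. The rest is bookkeeping.
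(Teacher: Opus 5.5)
Your proposal is correct, and it takes a somewhat different and simpler route than the paper, within the same overall framework.

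The paper organizes the finite entries of $B$ into value buckets $\beta(j)=\lfloor B[j]/M\rfloor$. Each heavy bucket (at least $\tau$ elements) is its own block, and light buckets are grouped greedily into blocks of fewer than $2\tau$ elements, so every block consists of whole buckets. After the per-block FFT identifies the first block with a feasible index, the paper scans that block (if it is light) to find the exact first bucket $b^*(k)$. It then argues that only buckets $b^*$ and $b^*+1$ can contain a witness. Heavy buckets are evaluated with a precomputed range-$M$ convolution $A\mc B_b$, and light ones by brute force.

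You drop the bucket structure altogether. You cut the sorted order into equal-size blocks, anchor the window at the block maximum $\tau_c$, and brute-force the whole first contributing block. For the window $[\tau_c,\tau_c+M]$ you precompute one range-$M$ convolution per block. Your localization is sound: no feasible index precedes $c^*(k)$, you have $C[k]\le\tau_c+M$, and $B[j]\ge\tau_c$ for all later blocks. Extra indices in the window (for example, earlier-block entries equal to $\tau_c$) only add genuine pairs, so they cannot push the minimum below $C[k]$.

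The trade-off is that you perform a window convolution for every block, whereas the paper does so only for heavy buckets. In the worst case both counts are $O(n/\tau)$, so both analyses give the same $O((n^2M/\tau+n\tau)\log n)$ bound with $\tau=\Theta(\sqrt{nM})$. Your version avoids the heavy/light case distinction and the step of determining $b^*(k)$ exactly.
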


For comparison, if $A$ takes $h$ distinct finite values, splitting it into $h$ masks gives
$\Ot(hn^{1.5})$ time~\cite{BabaiF09}; \cref{thm:intro-bounded-inf} is faster when $h$ is larger than $\sqrt M$.

\emph{Reductions.}
In the other direction, we reduce bounded $(\min,+)$-convolution to bounded monotone $(\min,+)$-convolution. The
simplest way to make a sequence monotone is to add a steep linear function, which preserves the convolution up to a
known additive term but increases the largest entry to $\Theta(n^2)$. We limit this growth in the values by
increasing the sequence length.

\begin{theorem}[see \cref{thm:mono-red}]
\label{thm:intro-mono}
Let $A,B\in\set{0,\dots,U}^n$ and let $w\ge 1$ be an integer. In $O(nw)$ time one can construct
non-decreasing sequences $A',B'$ of length $N=2nw$ with entries in $\set{0,\dots,n(2\floor{U/w}+3)}$,
such that $A\mc B$ can be computed from $A'\mc B'$ in $O(nw)$ time.
\end{theorem}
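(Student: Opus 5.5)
The plan is to trade value range for length by writing every entry in base $w$: put the quotient into the value and the remainder into the position. Write $A[i]=q_iw+r_i$ with $q_i=\floor{A[i]/w}\le\floor{U/w}$ and $0\le r_i<w$, and likewise $B[j]=q'_jw+r'_j$. Set $c=2\floor{U/w}+3$. Give index $i$ a block of $2w$ consecutive positions $2wi,\dots,2wi+2w-1$ in $A'$, so $N=2nw$. The real entry of $A[i]$ sits at position $2wi+r_i$ with value $ci+q_i$. The term $ci$ is a steep linear function, but its slope is taken per block rather than per position. It is what makes $A'$ non-decreasing, and it keeps all values below $cn=n(2\floor{U/w}+3)$. $B'$ is built the same way.

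The key point is how to fill the other positions so that $A'$ stays non-decreasing without creating spurious small sums. Positions $2wi+t$ with $t\ge r_i$ get the value $ci+q_i$. Such an entry is an \emph{overestimate}: under the decoding below it represents $wq_i+t\ge A[i]$. Positions with $t<r_i$ get the value of the previous block, $c(i-1)+q_{i-1}$ (and $0$ for $i=0$). Monotonicity holds because $c>\floor{U/w}$. All of this is built in $O(nw)$ time.

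For decoding, take a position $p$ of $A'\mc B'$ and a candidate output index $k\in\set{\floor{p/2w},\floor{p/2w}-1}$. If $C'[p]-ck\in[0,2\floor{U/w}]$, output the candidate $w(C'[p]-ck)+(p-2wk)$ for $C[k]$. Then take the minimum over all $p$. This is $O(nw)$ work.

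Correctness has two directions.
\begin{itemize}
\item The true pair $(i,j)$ with $i+j=k$ lands at $p=2wk+r_i+r'_j$ with value $ck+q_i+q'_j$. This value lies in the window, so $C[k]$ is among the candidates once we also know no candidate is too small.
\item Take any pair of positions of $A'$ and $B'$ whose sum is $p$, whose value is $C'[p]$, and which passes the window test for $k$. If both entries are ``at-or-after'' fills of blocks $i,j$, then the value is $c(i+j)+q_i+q'_j$. Because consecutive multiples of $c$ are separated by more than $2\floor{U/w}$, the window forces $i+j=k$. The decoded number is then $w(q_i+q'_j)+t+t'\ge A[i]+B[j]\ge C[k]$.
\item If at least one entry is a ``before'' fill, it carries the value of block $i-1$ although it sits in block $i$. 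The sum's value then belongs to block sum $k'$ with $ck'\le c(\text{position-block})-c$. Either it fails every window, or it is decoded with the index $k'$ that the value dictates.
\end{itemize}

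I expect this last case to be the main obstacle. There I would check that the decoded amount $w(\text{value}-ck')+(p-2wk')$ is still at least the true sum of the represented original entries: the position offset relative to block $k'$ only grows, while the value part is exact. Making the two candidate $k$'s per position, the window width, and the constant $3$ in $c$ fit together exactly is where the bookkeeping has to be done carefully. If the before-fill case fails to decode soundly, the fallback is to reverse the roles, placing real entries at offset $w-1-r_i$ with the sign of $t$ flipped in the decoding, so that all fills in a block become overestimates.
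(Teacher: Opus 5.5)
\textbf{Gap: the completeness step fails.} Your first item says the true pair lands at $p=2wk+r_i+r'_j$ with a value in the window. But the decoder only sees $C'[p]=(A'\mc B')[p]$, the minimum over \emph{all} pairs of positions summing to $p$.

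For $1\le k\le 2n-2$, choose blocks with $i_1+j_1=k-1$ and offsets $t_1\in[r_{i_1},2w-1]$, $t_2\in[r'_{j_1},2w-1]$ with $t_1+t_2=2w+r_i+r'_j$. This is possible since $2w+r_i+r'_j\in[2w,4w-2]$. This pair also sums to $p$, and its value is at most $c(k-1)+2\floor{U/w}=ck-3$. So $C'[p]-ck<0$, and the window test for $k$ fails exactly where $C[k]$ would be read off. The exact information for block sum $k$ sits at offsets $r_i+r'_j\le 2w-2$, which is precisely where the cheaper tails of block sum $k-1$ land in your non-decreasing instance.

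Wherever the test does pass for $k$, the minimizing pair has block sum $k$. Your own accounting then gives $A[\beta(x)]+B[\beta(y)]$ plus an excess that vanishes only at real positions. Apart from the block-$0$ fills discussed below, every candidate therefore strictly overestimates $C[k]$.

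For example, $A=B=(0,0)$, $U=3$, $w=2$, $c=5$ give $A'=B'=(0,0,0,0,5,5,5,5)$. Then $C'[p]=0$ for $p\le 6$ and $C'[p]=5$ for $7\le p\le 10$, and the decoder returns $3$ for $C[1]=C[2]=0$.

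The fill value $0$ in block $0$ is a second, independent error. For $A=(3)$, $B=(0)$, $U=3$, $w=2$ you get $A'=(0,1,1,1)$ and $B'=(0,0,0,0)$, and $C'[0]=0$ decodes to $0\ne C[0]=3$.

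By contrast, the case you flagged is harmless for blocks $i\ge 1$. A before-fill in block $i$ represents block $i-1$ with offset $2w+t$, which overestimates $A[i-1]$. Since $c>2\floor{U/w}$, the window forces $k$ to equal the sum of the represented blocks.

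\textbf{Missing idea.} You need to read each output only at positions that no cheaper block sum can reach, and to make every position of a block encode its own entry rather than borrow a neighbour's value. The paper sets $A'[2wi+j]=iK+\floor{(A[i]+j)/w}$ for all $j\in[2w]$. It reads $C'$ at $2w(m+1)+u-1$ for $u\in\set{1,\dots,w}$, that is, at offsets $2w+u-1$ of block sum $m$. There:
\begin{itemize}
\item the only competing pairs have block sum $m+1$ and value at least $(m+1)K$;
\item for block sum $m$, the split $j+j'=2w+u-1$ runs over at least $w$ consecutive $j$, so \cref{lem:floor} makes the minimum exactly $mK+1+\floor{(C[m]+u)/w}$;
\item these $w$ values determine $C[m]$.
\end{itemize}

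\textbf{Your fallback.} With the original fill rule and the sign flipped, the fills after the real entry become underestimates. So the fill rule must be mirrored too: own value up to the real entry, the next block's value after it. Even then, the last block has no successor to borrow from, and its tail needs a separate argument. Filling it with the largest admissible value $n(2\floor{U/w}+3)$ fails. For $A=(3,3)$, $B=(0,3)$, $U=3$, $w=2$, the tail pair $A'[7]+B'[1]=10$ undercuts the true pair $A'[4]+B'[4]=12$ at position $8$, so $C[2]=6$ is decoded as $2$.
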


With $U=n$ and $w=\ceil{\sqrt n}$, both the length and the value bound of the monotone instance are $\Theta(n^{1.5})$.

\begin{corollary}[see \cref{cor:four-thirds}]
\label{cor:intro-four-thirds}
If bounded monotone $(\min,+)$-convolution of length $N$ can be computed in time $O(N^{4/3-\eps})$ for some
$\eps>0$, then bounded $(\min,+)$-convolution of length $n$ can be computed in time $O(n^{2-3\eps/2})$.
\end{corollary}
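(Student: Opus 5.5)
The plan is to apply \cref{thm:intro-mono} with $U=O(n)$ and $w=\ceil{\sqrt n}$ and then run the hypothetical monotone algorithm on the resulting instance. Given bounded $A,B\in\set{0,\dots,U}^n$ with $U\le cn$ for a constant $c$, the theorem produces, in $O(nw)=O(n^{1.5})$ time, non-decreasing sequences $A',B'$. Their length is $N=2nw=\Theta(n^{1.5})$, and their entries lie in $\set{0,\dots,n(2\floor{U/w}+3)}$. Since $U/w\le cn/\sqrt n=c\sqrt n$, the largest entry is at most $n(2c\sqrt n+3)=O(n^{1.5})=O(N)$.

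The one point to check is that $(A',B')$ really is a \emph{bounded} monotone instance of length $N$. The definition of bounded asks for entries in $\set{0,\dots,O(N)}$, and we have just verified this. The implied constant depends only on $c$, so the assumed $O(N^{4/3-\eps})$ algorithm applies with a fixed constant. If that algorithm is only stated for lengths of a specific form, or requires $A'$ and $B'$ to have equal length, we pad with the last entry. Padding keeps the sequences non-decreasing and bounded. The extra output entries are discarded; any spurious terms they create would have to be checked against the recovery procedure of \cref{thm:intro-mono}, but I expect them to be harmless because they only affect indices beyond those used.

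Next we compute $A'\mc B'$ in time $O(N^{4/3-\eps})$. Substituting $N=\Theta(n^{3/2})$ gives
\[
  O\bigl(n^{\frac32(\frac43-\eps)}\bigr)=O\bigl(n^{2-3\eps/2}\bigr).
\]
Finally, \cref{thm:intro-mono} recovers $A\mc B$ from $A'\mc B'$ in $O(nw)=O(n^{1.5})$ further time.

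The total running time is $O(n^{1.5}+n^{2-3\eps/2})$. For $\eps\le 1/3$ this equals $O(n^{2-3\eps/2})$. For larger $\eps$ we can replace $\eps$ by $\min(\eps,1/3)$ without loss of generality, since the hypothesis only gets weaker. There is no real obstacle here: everything follows from \cref{thm:intro-mono} and exponent arithmetic.
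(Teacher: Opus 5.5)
Your main argument is the paper's proof: apply \cref{thm:intro-mono} with $w=\ceil{\sqrt n}$, note that $N=\Theta(n^{1.5})$ and that the entries are $O(n^{1.5})=O(N)$, run the monotone algorithm, and read off $A\mc B$. Two side remarks are wrong as written, though.

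First, padding is not needed, since $A'$ and $B'$ already have the same length $N$ and the hypothesis covers every $N$. Moreover, padding with the last entry is not harmless. A padded position $p\ge N$ of $A'$ pairs with a low position of $B'$ and reaches output indices $\kappa\ge N$. The readout~\eqref{eq:readout} does use these indices, up to $\kappa=2N-2w$. Write $q=\floor{U/w}$ and $K=2q+3$, and suppose more than $2w$ entries are padded (and $n\ge2$). At $\kappa=2N-2w$ (i.e.\ $m=2n-2$, $u=1$), the positions $(N+2w,\,N-4w)$ give value at most $(n-1)K+q+2+(n-2)K+q=(2n-2)K-1$. The true value is at least $(2n-2)K+1$, so the readout fails. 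If padding were ever required, use a large value such as $2nK=O(N)$ instead. This keeps both sequences non-decreasing and bounded, and a padded pair never attains a readout minimum.

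Second, for $\eps>1/3$, replacing $\eps$ by $\min(\eps,1/3)$ also weakens the conclusion, to $O(n^{1.5})$. So it is not a without-loss-of-generality step. The right observation is that no algorithm can run in time $O(N^{4/3-\eps})$ with $\eps>1/3$, because it must write $2N-1$ output entries. The paper makes this point through the assumption $T(N)\ge N$.
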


A reduction with threshold $3/2$ instead of $4/3$ would show that any polynomial improvement of the CDXZ algorithm
transfers to bounded inputs. Other choices of $w$ give a trade-off (\cref{cor:tradeoff}): for every constant
$\gamma\in[0,1)$, an $O(n^{2-\eps})$-time algorithm for monotone instances of length $\Theta(n^{1+\gamma})$ with
entries bounded by $O(n^{2-\gamma})$ would give an $O(n^{2-\eps}+n^{1+\gamma})$-time algorithm for bounded instances
of length $n$. A similar trade-off between length and value range applies to bounded inputs without monotonicity.

\begin{theorem}[see \cref{thm:value-red}]
\label{thm:intro-value}
Let $A,B\in\set{0,\dots,U}^n$ and let $1\le L\le U$ be an integer. In $O(nU/L)$ time one can construct
sequences $A',B'$ of length $N=2n(\floor{U/L}+1)$ with entries in $\set{0,\dots,2L}$, such that $A\mc B$
can be computed from $A'\mc B'$ in $O(N)$ time.
\end{theorem}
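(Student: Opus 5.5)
We plan to write each entry in mixed radix with base $L$: set $q=\floor{U/L}$ and write $A[i]=a_iL+r_i$ with $0\le r_i<L$ and $0\le a_i\le q$, and similarly $B[j]=b_jL+s_j$. The natural construction spreads each position $i$ of $A$ over a block of $q+1$ slots, so that the quotient $a_i$ is encoded as a \emph{position} and only the remainder $r_i$ stays a \emph{value}. Concretely, we place the new sequence at index $i(2q+2)+a_i$ with value $r_i$ (or $r_i+L$ after a shift, to keep entries in $\set{0,\dots,2L}$) and fill every other slot with a padding value. We do the same for $B$ with $b_j$. The total length is then about $n(2q+2)=2n(\floor{U/L}+1)=N$, which matches the statement. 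The factor $2$ in the block width ensures that sums of in-block offsets, $a_i+b_j\le 2q$, never spill into the next block. An index $k'=(i+j)(2q+2)+t$ with $0\le t\le 2q$ therefore determines $k=i+j$ and $t=a_i+b_j$ uniquely.

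Since $\set{0,\dots,2L}$ has no $+\infty$, the padding must be the largest allowed value $2L$, and real entries get values $r_i<L$. Any pair involving a padding slot then has sum at least $2L$, while every genuine pair has sum $r_i+s_j\le 2L-2$. Hence $A'\mc B'[k(2q+2)+t]$ equals $\min\{r_i+s_j : i+j=k,\ a_i+b_j=t\}$ whenever that set is nonempty, and is at least $2L$ otherwise. The value $2L$ can also arise from two paddings or from a padding paired with a real entry, but in both cases it is distinguishable because it exceeds $2L-2$.

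To recover $C[k]$, we compute $\min_t\bigl(tL+\min(\text{entry at }k(2q+2)+t)\bigr)$ over the $t$ whose entry is below $2L$, that is, $\min_{i+j=k}\bigl((a_i+b_j)L+r_i+s_j\bigr)$, which is exactly $C[k]$. This takes $O(q+1)$ time per $k$, so $O(N)$ in total. Building $A'$ and $B'$ takes $O(N)=O(nU/L)$ time, using $L\le U$. We may need to adjust the block width from $2q+1$ to $2q+2$ so that the length is exactly $2n(q+1)$; the extra unused slot per block only holds padding.

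The main obstacle is getting the padding argument right when no infinity symbol is available. We must check that a padding slot can never be paired into a sum that beats a genuine sum at the same output index, including paddings sitting at offsets beyond $2q$ or straddling blocks. The choice of real values $<L$ and padding value $2L$ handles this, since any sum involving a padding is at least $2L>2L-2$. The only other point to check is that the offset carry never crosses a block boundary, which is guaranteed by the block width $2q+2>2q$.
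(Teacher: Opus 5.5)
Your proof is correct and is essentially the paper's construction with the two coordinates transposed. The paper concatenates $\floor{U/L}+1$ value layers of length $2n$, with the quotient as the major coordinate and the index as the minor one. You instead use index-major blocks of width $2q+2$, with the quotient $a_i$ as the in-block offset. The padding value $2L$, the separation of genuine sums ($\le 2L-2$) from sums involving padding ($\ge 2L$), and the readout $\min_t(tL+\cdot)$ are the same in both.

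Delete the parenthetical about shifting real entries to $r_i+L$. With that shift, genuine sums could exceed padded sums, which would break exactly this separation.
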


\begin{corollary}[see \cref{cor:value-lb}]
\label{cor:intro-value}
Let $x\in(0,1]$ be a constant. If $(\min,+)$-convolution with entries in $\set{0,\dots,N^{x}}$ can be computed in
time $O(N^{1+x-\eps})$ for some $\eps>0$, then bounded $(\min,+)$-convolution of length $n$ can be computed in time
$O(n^{2-2\eps/(1+x)})$.
\end{corollary}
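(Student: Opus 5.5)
This follows directly from \cref{thm:intro-value}, by instantiating it with a suitable value range $L$ and accounting for time and length. Take bounded inputs $A,B\in\set{0,\dots,U}^n$ with $U=O(n)$; after padding we may assume $U=n$. Choose $L=\ceil{n^{2x/(1+x)}}$. Then $1\le L\le U$, since $2x/(1+x)\le 1$ for $x\le 1$. \cref{thm:intro-value} yields, in $O(nU/L)$ time, sequences $A',B'$ of length
\[
  N=2n(\floor{U/L}+1)=\Theta\bigl(n\cdot n^{1-2x/(1+x)}\bigr)=\Theta\bigl(n^{2/(1+x)}\bigr),
\]
whose entries lie in $\set{0,\dots,2L}$. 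From $A'\mc B'$ we recover $A\mc B$ in $O(N)$ time. The construction cost $O(nU/L)=O(N)$ is also linear in $N$.

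Next check that the entries fit the hypothesis. We have $N^x=\Theta(n^{2x/(1+x)})=\Theta(L)$, so $2L\le cN^x$ for some constant $c$. To match the hypothesis exactly, with entries in $\set{0,\dots,N^x}$, either absorb the constant $c$, or pad the sequences with $+\infty$ or with large dummy entries placed beyond the relevant indices. A cleaner fix is to increase the length by a constant factor: padding $A'$ and $B'$ with extra entries at the end multiplies $N$ by $c^{1/x}$, which is a constant since $x$ is constant. The padding entries must not affect the outputs we need. Padding the tail with the value $2L$ only influences output indices that we discard, because we read only the original indices below $2N-1$; moreover the reduction only ever uses output indices coming from the original positions. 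This constant-factor bookkeeping is the only delicate point, and I expect it to be routine.

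Finally, run the assumed algorithm on $A',B'$ in time $O(N^{1+x-\eps})$. Since $1+x-\eps=(1+x)(1-\eps/(1+x))$, we get
\[
  N^{1+x-\eps}=\Theta\bigl(n^{\frac{2}{1+x}(1+x-\eps)}\bigr)=\Theta\bigl(n^{2-2\eps/(1+x)}\bigr).
\]
The overhead $O(N)=O(n^{2/(1+x)})$ is dominated by this term as long as $2/(1+x)\le 2-2\eps/(1+x)$, i.e.\ $\eps\le x$. We may assume $\eps\le x$: any algorithm must spend at least $\Omega(N)$ time reading its input, so the hypothesis is only meaningful for $\eps\le x$. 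Summing the three stages gives total time $O(n^{2-2\eps/(1+x)})$.
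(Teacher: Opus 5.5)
Your proposal is correct and is essentially the paper's proof. You instantiate \cref{thm:value-red} with $U=n$ and $L\approx n^{2x/(1+x)}$, obtain $N=\Theta(n^{2/(1+x)})$, and absorb the $O(N)$ overhead by assuming $\eps\le x$.

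The paper avoids your padding step by taking $L=\max(1,\floor{\tfrac12 n^{2x/(1+x)}})$. Then $N\ge 2n^2/L\ge 4n^{2/(1+x)}$, which gives $2L\le N^x$ directly.

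If you do pad, your stated justification is wrong: padded positions do pair with original positions at readout indices $\kappa\ge N$. This is harmless, because any such pair has value at least $2L$, exactly like the dummies in \cref{thm:value-red}, and \eqref{eq:value-readout} only uses values below $2L$.
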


In particular, a polynomial improvement of the $\Ot(n(M+1))$ bound of \cref{thm:intro-one-bounded} for $M=n^x$, even
when $B$ also has entries in $\set{0,\dots,M}$, would give a polynomial improvement for bounded inputs.

\emph{Historical perspective.}
Chan, Vassilevska Williams, and Xu~\cite{ChanVX23} proposed the Strong Min-Plus Convolution Hypothesis, which
postulates that bounded $(\min,+)$-convolution requires $n^{2-o(1)}$ time. Before~\cite{AlmanV26}, our reductions
could therefore be read as conditional lower bounds: $N^{4/3-o(1)}$ for bounded monotone instances
(\cref{cor:intro-four-thirds}), $N^{1+x-o(1)}$ for entries bounded by $N^x$ (\cref{cor:intro-value}), and
$n^{1.5-o(1)}$ for $\mono(A)\le2$ and arbitrary bounded $B$ (\cref{prop:one-sided-lb}). Since~\cite{AlmanV26} together
with the reduction of Bremner et al.\ gives a truly subquadratic algorithm for bounded $(\min,+)$-convolution, we
state these results only as reductions, which relate the exponents of the different variants.

\subsection{Technical Overview}
\label{sec:overview}

\emph{Barely monotone sequences.}
Both algorithms build on the framework of CDXZ~\cite{ChiDXZ22}. We write $A[i]=D\widetilde A[i]+a_i$ with
$0\le a_i<D$, and similarly for $B$. The coarse convolution $\widetilde C=\widetilde A\mc\widetilde B$ restricts the
coarse value of an optimal pair for $C[k]$ to $\widetilde C[k]$ or $\widetilde C[k]+1$. In the basic form of the
framework, which we use for bound~(b), one multiplies polynomials that record the coarse values only modulo a random prime $p$, keeping the FFT small,
and then subtracts the erroneous terms whose coarse values agree with the target only modulo $p$. For monotone sequences, the
indices with the same coarse value form intervals, and erroneous terms can be handled interval by interval. If $A$
is a union of several monotone subsequences, the indices of one subsequence with the same coarse value only form an
interval of that subsequence. For bound~(b) we process each such group of $A$ against each entry of $B$ with a
simple reporting structure; since the groups of one subsequence have disjoint spans, the false positives caused by
indices of other subsequences can be charged to the random prime.

For bound~(a) we use the bit-by-bit refinement of CDXZ, taking the random prime $p$ itself as the splitting scale
$D$, so that the coarse sums are computed exactly. Erroneous terms are tracked as \emph{segments}:
maximal intervals of indices $i$, for a fixed output $k$, on which all rounded values of $A[i]$ and $B[k-i]$ are
constant. A segment whose coarse sum differs from the target is retained near the current minimum only if a fixed integer
associated with it is close to a nonzero multiple of $p$, which is unlikely. With interleaved parts, we define segments separately for each pair of parts. To prevent indices of other parts
from breaking a segment, we fill the gaps in each part to obtain a \emph{filled} sequence that is monotone on all
of $[n]$. Subtracting a segment then requires counting the pairs $(i,k-i)$ that belong to the two parts,
a range-restricted convolution count. We avoid computing most of these counts with an
\emph{aliasing filter}: at every bit level, the polynomial products also record the coarse values modulo a fresh
random prime $q_\lambda$ (or $q_\lambda=1$ when no filtering is needed), and only erroneous segments that agree with the target modulo $q_\lambda$ need to be
counted, which we do by scanning them. Larger $q_\lambda$ means larger polynomials but less scanning, and balancing
the two at every level gives $\Ot(n^{1.5}\sqrt{m_Am_B})$ time for $m_A$ and $m_B$ parts. Nothing in this argument
depends on the values at indices outside the parts, which is why infinite entries can be placed arbitrarily.

\emph{One bounded sequence.}
Let $A\in\set{0,\dots,M}^n$ with $M\ge1$, and let $B_{\min}$ denote the prefix minima of $B$. For $k<n$ we have
$B_{\min}[k]\le C[k]\le B_{\min}[k]+M$: the lower bound holds since $A\ge0$, and the upper bound is witnessed by
the position of the prefix minimum together with an entry of $A$, which is at most $M$. So only entries
$B[j]\le B_{\min}[k]+M$ matter for $C[k]$. For this sketch, assume $B[0]<\infty$. We group the outputs $k$ by
$\floor{(B_{\min}[0]-B_{\min}[k])/M}$. Each group is an interval of indices, and the relevant entries of $B$ lie in
the same index interval or the preceding one, within a value window of width $2M$. Running the FFT on every group takes
$\Ot(n(M+1))$ time in total, and the outputs $k\ge n$ are handled by reversing both sequences.

\emph{Bounded entries with infinities.}
We partition the finite entries of $B$ into value buckets of width $M$. For each output $k$, the first bucket
containing some $B[j]$ with $A[k-j]<\infty$ determines $C[k]$ up to an additive $2M$; only this bucket and the next
can contain the optimum. To find the first bucket, we follow the approach for mask convolutions~\cite{ChanH20}.
We group the buckets into $O(n/\tau)$ blocks, with each large bucket forming its own block. One FFT per block
identifies the first relevant block for each output; if that block is small, we scan it. The two relevant buckets are evaluated by brute force if they are small and with a precomputed bounded
convolution if they are large. With $\tau=\sqrt{nM}$ the total time is $\Ot(n^{1.5}\sqrt M)$.

\emph{From bounded to monotone.}
Adding $i(U+1)$ to $A[i]$ makes $A$ non-decreasing, but the largest
entry can grow to order $n(U+1)$. Instead, we replace
each entry $x=A[i]$ by a block of $2w$ entries $iK+\floor{(x+j)/w}$, $j=0,\dots,2w-1$. Each block is non-decreasing,
and since the rounded values are $O(1+U/w)$, an offset of $K=\Theta(1+U/w)$ between consecutive blocks is
enough to make the whole sequence monotone, with entries of order $n(1+U/w)$. The rounding loses information, but it
can be recovered from the convolution: the minimum of $\floor{(x+j)/w}+\floor{(y+s-j)/w}$ over any $w$ consecutive
values of $j$ equals $\floor{(x+y+s+1)/w}-1$ and hence depends only on $x+y$ (\cref{lem:floor}). Reading off the
appropriate entries of $A'\mc B'$, we learn $\floor{(C[k]+u)/w}$ for every $u\in\set{1,\dots,w}$, and these values
determine $C[k]$.

\emph{Reducing the value range.}
To go from entries bounded by $U$ to entries bounded by $O(L)$, we split each sequence into $\floor{U/L}+1$ layers.
Layer $c$ keeps the entries in $[cL,(c+1)L)$, shifted down by $cL$, and replaces every other entry by the dummy
value $2L$. We pad the layers to length $2n$, again with the value $2L$, and concatenate them. At the output positions $2nd+k$ we then see the
best contribution of all layer pairs $(c,c')$ with $c+c'=d$; a value below $2L$ can only come from two genuine
entries, so after adding back $dL$ and minimizing over $d$ we recover $C[k]$.

\subsection{Related Work}

Cygan, Mucha, W\k{e}grzycki, and W\l{}odarczyk~\cite{CyganMWW19} and K\"unnemann, Paturi, and
Schneider~\cite{KunnemannPS17} established fine-grained equivalences between $(\min,+)$-convolution and a number of
other problems. Lincoln, Polak, and Vassilevska Williams~\cite{LincolnPW20} studied a family of intermediate matrix
products and their convolution analogues; for the latter, which include minimum-witness convolution,
$\Ot(n^{1.5})$-time algorithms are known but no $O(n^{1.5-\eps})$-time algorithms. Chan, Vassilevska Williams, and Xu~\cite{ChanVX23} studied the fine-grained complexity of a range of intermediate
matrix and convolution problems. Chan and Lewenstein~\cite{ChanL15} connected binary jumbled indexing to bounded monotone
$(\min,+)$-convolution. Amir et al.~\cite{AmirCLL14} studied the hardness of jumbled indexing over alphabets of any fixed size $r\ge3$,
relating it to 3SUM. For the
matrix case, the bounded-difference and monotone instances mentioned above arise in algorithms for language edit
distance and RNA folding~\cite{BringmannGSW16}, and for range mode and single-source replacement
paths~\cite{GuPWX21}. For a survey of fine-grained
complexity, see~\cite{VassilevskaWilliams18}.

\emph{Organization.}
After the preliminaries in \cref{sec:prelims}, we give our algorithms for barely monotone sequences in
\cref{sec:nearly}. \Cref{sec:mono-red} gives the reduction from bounded to bounded monotone $(\min,+)$-convolution,
and \cref{sec:values} treats value ranges and one bounded sequence, with and without infinite entries.
\Cref{sec:history} describes the history of this work.

\section{Preliminaries}
\label{sec:prelims}

We fix notation and the model of computation, and recall two standard tools: bounded $(\min,+)$-convolution via
the FFT, and a bound on the probability that a random prime divides a fixed nonzero integer.

For an integer $n\ge 1$ we write $[n]=\set{0,1,\dots,n-1}$, and for integers $a\le b$, an interval $[a,b]$ used as a
set of indices denotes $\set{a,a+1,\dots,b}$. We write $\mathbf 1[\mathcal E]$ for the indicator of a condition
$\mathcal E$, and $\min\emptyset=+\infty$.
Sequences are indexed from $0$, and entries are in $\Zinf$ unless stated otherwise.
For sequences $A$ of length $n$ and $B$ of length $m$, their $(\min,+)$-convolution is the sequence
$A\mc B$ of length $n+m-1$ with
\[
(A\mc B)[k]=\min\set{A[i]+B[j] \;:\; i\in[n],\, j\in[m],\, i+j=k},
\]
with the convention $x+\infty=\infty$.
We say that a pair $(i,j)$ is a \emph{witness} for output $k$ if $i+j=k$ and $A[i]+B[j]=(A\mc B)[k]<\infty$.
A sequence is \emph{non-decreasing} if $A[i]\le A[i+1]$ for all $i$, \emph{non-increasing} if
$A[i]\ge A[i+1]$ for all $i$, and \emph{monotone} if it is either.
For a set $S$ of indices, a sequence is \emph{monotone on $S$} if its restriction to $S$ (in increasing order of
indices) is monotone.
The notation $\Ot(\cdot)$ suppresses factors polylogarithmic in $n$, and in running-time bounds, $\log n$ stands
for $\log(n+2)$.

We work in the word RAM model with words of $\Theta(\log n)$ bits.
All finite entries are integers bounded in absolute value by a polynomial in $n$, so that they fit into $O(1)$
words. In \cref{thm:one-bounded,thm:bounded-inf}, only the value range of $A$ is bounded by $M$; the finite entries of $B$
may range over a polynomially large interval.
Randomized algorithms are Las Vegas: they always return the correct answer, and we bound their expected
running time.

The following standard FFT lemma uses the technique of Alon, Galil, and Margalit~\cite{AlonGM97};
see also~\cite{CyganMWW19,BringmannDP24}.

\begin{lemma}[Bounded $(\min,+)$-convolution]
\label{lem:fft}
Let $A\in(\Zinf)^{n}$ and $B\in(\Zinf)^{m}$, and suppose that there are integers $a,b$ and $M\ge 0$ such that
every finite entry of $A$ lies in $[a,a+M]$ and every finite entry of $B$ lies in $[b,b+M]$.
Then $A\mc B$ can be computed deterministically in $O((n+m)(M+1)\log(n+m+M))$ time.
\end{lemma}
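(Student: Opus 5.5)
We prove \cref{lem:fft} by the standard encoding of $(\min,+)$ values as exponents of a polynomial, followed by one FFT-based multiplication. First normalize. Replace every finite entry $A[i]$ by $A[i]-a\in\set{0,\dots,M}$ and every finite $B[j]$ by $B[j]-b\in\set{0,\dots,M}$. For every $k$ with a finite output, the convolution of the shifted sequences equals $(A\mc B)[k]-a-b$, so undoing the shift at the end costs $O(n+m)$ time. Infinite entries are simply omitted from the encoding below, and an output is declared $+\infty$ exactly when no finite pair contributes to it.

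Next encode the values in exponents. Use a bivariate polynomial, or equivalently a univariate one via Kronecker substitution:
\[
P(x,y)=\sum_{i:\,A[i]<\infty} x^{i}y^{A[i]},\qquad Q(x,y)=\sum_{j:\,B[j]<\infty} x^{j}y^{B[j]}.
\]
Substitute $x\mapsto z^{2M+1}$, so that $P$ becomes a univariate polynomial in $z$ of degree $O(n(M+1))$, and similarly for $Q$. Since the $y$-degrees of the product $PQ$ are at most $2M$, distinct pairs (index sum $k$, value sum $v$) map to distinct exponents $k(2M+1)+v$. The coefficient of $z^{k(2M+1)+v}$ in $PQ$ therefore counts the pairs $(i,j)$ with $i+j=k$, both entries finite, and $A[i]+B[j]=v$. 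All coefficients are nonnegative integers at most $\min(n,m)$, and both polynomials have degree $O((n+m)(M+1))$.

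Multiply $P$ and $Q$ with a deterministic FFT over a suitable domain. Either use complex FFT with $O(\log(n+m+M))$-bit precision, which suffices because the coefficients are small integers and rounding recovers them exactly, or use a number-theoretic transform modulo a fixed prime larger than $\min(n,m)$ that supports the needed transform length (found deterministically, or handled by Sch\"onhage--Strassen-style multiplication). This takes $O((n+m)(M+1)\log(n+m+M))$ time. Finally, for each $k$, scan $v=0,\dots,2M$ and output the smallest $v$ with a nonzero coefficient, or $+\infty$ if there is none, plus $a+b$. This scan takes $O((n+m)(M+1))$ time in total.

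The only real subtlety is making the polynomial multiplication deterministic and exact in the word RAM within the stated bound, i.e., choosing FFT precision or a modulus so that the small integer coefficients are recovered exactly. This is standard, and I would cite it rather than reprove it. Everything else is routine bookkeeping.
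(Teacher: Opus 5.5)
Your proposal is correct and follows the paper's proof: the same shifted bivariate encoding with values in the exponents, Kronecker substitution with base $2M+1$ (where you should also state $y\mapsto z$ explicitly), one exact FFT product with coefficients at most $\min(n,m)$, and reading off the smallest nonzero value exponent for each $k$. The paper handles the exactness of the multiplication at the same level of detail you do, appealing to a suitable prime field or fast integer multiplication.
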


\begin{proof}
Consider the bivariate polynomials
\[
P_A(x,z)=\sum_{i:\,A[i]<\infty} x^{A[i]-a}z^{i}\qquad\text{and}\qquad P_B(x,z)=\sum_{j:\,B[j]<\infty}x^{B[j]-b}z^{j}.
\]
The coefficient of $x^{e}z^{k}$ in $P_AP_B$ is the number of pairs $(i,j)$ with $i+j=k$ and
$A[i]+B[j]=a+b+e$. Hence $(A\mc B)[k]=a+b+\min\set{e: [x^ez^k]P_AP_B>0}$ (and $+\infty$ if no such $e$
exists).
All coefficients are nonnegative integers bounded by $\min(n,m)$.
Substituting $z=x^{2M+1}$ yields univariate polynomials of degree $O((n+m)(M+1))$ from which the
coefficients of $P_AP_B$ can be read off unambiguously (since $0\le e\le 2M$, the pair $(e,k)$ is determined by $e+(2M+1)k$), and the product can be computed exactly with the
FFT over a suitable prime field (or via fast integer multiplication) in the stated time.
\end{proof}

\begin{lemma}
\label{lem:prime}
There is an absolute constant $c_0$ such that the following holds.
Let $\mathsf{P}\ge 2$. The interval $[\mathsf{P},2\mathsf{P}]$ contains at least $\mathsf{P}/(c_0\ln(2\mathsf{P}))$
primes, and if $p$ is chosen uniformly at random among them, then $\Pr[p\mid z]\le c_0\ln(|z|+1)/\mathsf{P}$ for
every nonzero integer $z$.
Moreover, $p$ can be sampled in $\Ot(\mathsf{P})$ time.
\end{lemma}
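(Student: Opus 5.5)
The lemma has three parts, and I will handle them in order: a lower bound on the number of primes in $[\mathsf{P},2\mathsf{P}]$, a bound on the divisibility probability, and the sampling procedure.

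\textbf{Counting primes.} I will use an explicit Chebyshev-type estimate. The central binomial coefficient $\binom{2\mathsf{P}}{\mathsf{P}}$ lies between $4^{\mathsf{P}}/(2\mathsf{P}+1)$ and $4^{\mathsf{P}}$. Every prime in $(\mathsf{P},2\mathsf{P}]$ divides it exactly once, while primes $p\le\mathsf{P}$ contribute at most $2\mathsf{P}$ each. This gives $\pi(2\mathsf{P})-\pi(\mathsf{P})=\Omega(\mathsf{P}/\ln \mathsf{P})$ for large $\mathsf{P}$. Equivalently, I may simply cite the prime number theorem, or Bertrand-type bounds such as $\pi(2x)-\pi(x)\ge x/(3\ln(2x))$ for $x\ge 2$; that bound is a known explicit form, and for small $\mathsf{P}$ the interval contains at least one prime by Bertrand's postulate. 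The constant $c_0$ absorbs the bounded range of small $\mathsf{P}$, since for each such $\mathsf{P}$ the interval contains at least one prime.

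\textbf{Divisibility probability.} A nonzero integer $z$ has at most $\log_{\mathsf{P}}|z|\le \ln(|z|+1)/\ln \mathsf{P}$ distinct prime divisors that are at least $\mathsf{P}$, because their product divides $z$. Dividing by the number of primes in the interval gives
\[
\Pr[p\mid z]\le \frac{\ln(|z|+1)/\ln\mathsf{P}}{\mathsf{P}/(c\ln(2\mathsf{P}))}=O\!\left(\frac{\ln(|z|+1)}{\mathsf{P}}\right),
\]
using $\ln(2\mathsf{P})/\ln\mathsf{P}\le 2$ for $\mathsf{P}\ge2$. Choosing $c_0$ as the maximum of the constants arising in the two steps finishes this part.

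\textbf{Sampling.} I will run the sieve of Eratosthenes on $[1,2\mathsf{P}]$ in $O(\mathsf{P}\log\log\mathsf{P})$ time, list the primes in $[\mathsf{P},2\mathsf{P}]$, and pick one uniformly at random using a random index. This takes $\Ot(\mathsf{P})$ time on the word RAM, assuming $\mathsf{P}$ is polynomial in $n$, which holds in all later applications.

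\textbf{Main obstacle.} The only nontrivial ingredient is the explicit prime-count lower bound with an absolute constant valid for all $\mathsf{P}\ge2$. I will handle it by citing a standard explicit Chebyshev bound rather than reproving it.
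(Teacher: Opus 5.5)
Your proposal is correct and follows the paper's proof step for step: a cited Chebyshev-type count of primes in $[\mathsf{P},2\mathsf{P}]$, the bound $\log_{\mathsf{P}}|z|$ on the number of prime divisors of $z$ that are at least $\mathsf{P}$ combined with $\ln(2\mathsf{P})\le 2\ln\mathsf{P}$ (so $c_0=2c$ works), and the sieve of Eratosthenes for sampling. The one thing to fix is your optional central-binomial sketch, which does not close as stated: the facts you list give only $\ln\prod_{\mathsf{P}<p\le 2\mathsf{P}}p\ge \mathsf{P}\ln 4-O(\ln\mathsf{P})-\pi(\mathsf{P})\ln(2\mathsf{P})$, where the subtracted term is of order $\mathsf{P}$, so you would also need either an upper bound $\pi(x)\le c\,x/\ln x$ with $c<\ln 4$ or the bookkeeping from Erd\H{o}s's proof of Bertrand's postulate (primes in $(2\mathsf{P}/3,\mathsf{P}]$ do not divide $\binom{2\mathsf{P}}{\mathsf{P}}$, primes above $\sqrt{2\mathsf{P}}$ divide it at most once, and $\prod_{p\le x}p\le 4^x$); since you ultimately cite the estimate, as the paper does, the proof itself is unaffected.
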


\begin{proof}
By Chebyshev-type estimates (or the prime number theorem), the interval $[\mathsf{P},2\mathsf{P}]$ contains at least
$\mathsf{P}/(c_1\ln(2\mathsf{P}))$ primes for an absolute constant $c_1$; we take $c_0=2c_1$.
A nonzero integer $z$ has at most $\log_{\mathsf{P}}|z|$ distinct prime factors that are at least $\mathsf{P}$.
Hence, using $\ln(2\mathsf{P})\le2\ln\mathsf{P}$, we get $\Pr[p\mid z]\le \log_{\mathsf{P}}|z|\cdot c_1\ln(2\mathsf{P})/\mathsf{P}\le 2c_1\ln|z|/\mathsf{P}\le c_0\ln(|z|+1)/\mathsf{P}$.
The primes in $[\mathsf{P},2\mathsf{P}]$ can be listed with the sieve of Eratosthenes in $\Ot(\mathsf{P})$ time.
\end{proof}

\section{Barely Monotone Sequences}
\label{sec:nearly}

Both algorithms for barely monotone sequences combine monotone partitions with the framework of CDXZ. We first
prove bound~(b) of \cref{thm:intro-nearly-mono}, which needs a partition of $A$ only (\cref{sec:nearly-2}), and then the
more involved bound~(a) (\cref{sec:nearly-1}).

\subsection{Monotone Partitions}
\label{sec:partitions}

A \emph{monotone partition} of a sequence $A$ of length $n$ is a partition $[n]=P_1\cup\dots\cup P_m$ of its
index set such that $A$ is monotone on every $P_s$. Thus $\mono(A)$ is the minimum size of a monotone partition.

\begin{lemma}
\label{lem:greedy-partition}
There is a deterministic algorithm that, given a sequence $A$ of length $n$, computes a monotone partition of $A$
into at most $\min\set{\mono(A)\cdot(\ln n+1),\ 3\sqrt n+\log_2 n+1}$ parts in
$O(\min\set{\mono(A)\log n,\sqrt n}\cdot n\log n)$ time. In particular, $\mono(A)=O(\sqrt n)$.
\end{lemma}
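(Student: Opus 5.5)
The algorithm is the classical greedy set-cover heuristic: repeatedly find a longest monotone subsequence of the indices not yet covered, make it a part, delete it, and stop once everything is covered. A longest non-decreasing or non-increasing subsequence of a sequence of length $r$ (with repeated values allowed) can be computed by patience sorting with binary search in $O(r\log r)$ time. Each round therefore costs $O(n\log n)$, and it suffices to bound the number of rounds by $\min\set{\mono(A)(\ln n+1),\,3\sqrt n+\log_2 n+1}$. Since the algorithm does not know $\mono(A)$, it simply runs until all indices are covered, and the running time is the number of rounds times $O(n\log n)$. This matches the claimed bound, because $3\sqrt n+\log_2 n+1=O(\sqrt n)$ and $\mono(A)(\ln n+1)=O(\mono(A)\log n)$.

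\emph{First bound.} Let $m=\mono(A)$ and fix an optimal partition into $m$ monotone parts. If $r$ indices remain uncovered, the optimal parts restricted to these indices are still monotone and cover them. Hence one of the restricted parts has at least $r/m$ elements, so the greedy step removes at least $r/m$ indices. After $t$ rounds at most $n(1-1/m)^t\le ne^{-t/m}$ indices remain. After $t=\lceil m\ln n\rceil$ rounds fewer than one index remains, that is, none. So the number of rounds is at most $m\ln n+1\le m(\ln n+1)$.

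\emph{Second bound.} This is the Erdős–Szekeres step, which I expect to be the main obstacle because the constants must be tracked. Erdős–Szekeres gives that every sequence of length $r$ has a monotone subsequence of length at least $\lceil\sqrt r\,\rceil$: label each index by the lengths of the longest non-decreasing subsequence ending there and the longest non-increasing subsequence ending there; these pairs are distinct, even with ties. So while $r$ indices remain, a round removes at least $\sqrt r$ of them. I would split the process into phases. While $r\in(n/4^{\ell+1},n/4^\ell]$, each round removes at least $\sqrt n/2^{\ell+1}$ indices. Hence the phase lasts at most
\[
\frac{n/4^\ell}{\sqrt n/2^{\ell+1}}+1=\frac{2\sqrt n}{2^\ell}+1
\]
rounds. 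Summing the geometric series over the phases $\ell\ge0$ with $n/4^\ell\ge1$ gives at most $4\sqrt n$ rounds plus about $\tfrac12\log_2 n+1$. To reach the constant $3$, I would instead use the cleaner differential-style estimate. The recurrence $r\mapsto r-\sqrt r$ needs about $2\sqrt n$ steps to go from $n$ to $0$, since $\sqrt{r}-\sqrt{r-\sqrt r}\ge 1/2$. Rounding effects (using $\lceil\sqrt r\,\rceil$ and the final small values) account for the additive $\log_2 n+1$ and leave slack for $3\sqrt n$. Concretely, I would show that $\sqrt{r'}\le\sqrt r-\tfrac12$ whenever $r'\le r-\sqrt r$ and $r\ge1$, which gives at most $2\sqrt n+1$ rounds and is comfortably within the bound.

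Finally, $\mono(A)$ is at most the number of parts produced, so $\mono(A)=O(\sqrt n)$.
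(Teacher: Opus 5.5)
Your proof is correct. The algorithm, the $O(n\log n)$ cost per round, and the $\mono(A)(\ln n+1)$ bound are the same as in the paper. One nitpick: take $t=\lfloor m\ln n\rfloor+1$ rather than $\lceil m\ln n\rceil$, so that $t>m\ln n$ holds strictly; for $n=1$ your choice gives $t=0$.

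For the Erd\H{o}s--Szekeres bound your final argument differs from the paper's. The paper uses dyadic phases $n/4^{h+1}<|Y|\le n/4^h$. Each round in such a phase removes at least $\sqrt{n/4^{h+1}}$ indices, and the phase has width only $3n/4^{h+1}$. This gives at most $\tfrac32\sqrt{n/4^h}+1$ rounds per phase and $3\sqrt n+\log_2 n+1$ in total. Your first phase computation is this same argument. It loses the constant only because you divide the whole $n/4^\ell$ instead of this width.

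Your potential argument tracks $\sqrt{|Y|}$ instead. A round starting with $r\ge1$ indices leaves $r'\le r-\sqrt r$, and $(\sqrt r-\tfrac12)^2=r-\sqrt r+\tfrac14$, so the potential drops by at least $\tfrac12$ per round. Hence there are at most $2\sqrt n$ rounds, in fact at most $2\sqrt n-1$, since a round starts only when $\sqrt{|Y|}\ge1$. This is shorter, needs no rounding to $\lceil\sqrt r\,\rceil$ and no additive logarithmic term, and gives a stronger bound that implies the stated one. Your sentence about rounding effects accounting for $\log_2 n+1$ is unnecessary and can be dropped. The paper's phase argument buys nothing extra here beyond matching the constants in the statement.
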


\begin{proof}
The algorithm maintains the set $Y\subseteq[n]$ of indices not yet assigned, initially $Y=[n]$.
While $Y\neq\emptyset$, it computes a longest subsequence of $A|_Y$ that is non-decreasing and a longest one that
is non-increasing (each in $O(|Y|\log|Y|)$ time~\cite{Fredman75}), removes the longer of the two from $Y$, and
makes it a part.

Let $m=\mono(A)$. Restricting an optimal monotone partition to $Y$ shows that $A|_Y$ can be partitioned into at
most $m$ monotone subsequences, so one of them has length at least $|Y|/m$. Hence each iteration leaves at most a fraction
$1-1/m$ of the remaining indices. If $m=1$, one iteration suffices. Otherwise, after $t$ iterations
$|Y|\le n(1-1/m)^t\le ne^{-t/m}$, which is less than $1$ for $t>m\ln n$. Thus there are at most $m\ln n+1$
iterations.
Moreover, by the Erd\H{o}s--Szekeres theorem~\cite{ErdosS35}, every sequence of length $r$ contains a monotone
subsequence of length at least $\sqrt r$. While $|Y|>r_0$ for a threshold $r_0$, each iteration removes at least
$\sqrt{r_0}$ elements. Hence, for $h=0,1,2,\dots$, the number of iterations that start with
$n/4^{h+1}<|Y|\le n/4^{h}$ is at most $(3n/4^{h+1})/\sqrt{n/4^{h+1}}+1=\tfrac32\sqrt{n/4^{h}}+1$. Summing over the
at most $\log_4n+1$ relevant values of $h$ gives at most $3\sqrt n+\log_2n+1$ iterations.
Each iteration takes $O(n\log n)$ time.
\end{proof}

By \cref{lem:greedy-partition}, both algorithms can afford to compute a partition of $A$ into
$O(\min\set{\mono(A)\log n,\allowbreak\sqrt n})$ parts and, for bound~(a), of $B$ into $O(\min\set{\mono(B)\log n,\sqrt n})$
parts; since their running times depend polynomially on the number of parts, this loses only polylogarithmic
factors.

\begin{remark}
\label{rem:min-witness}
The $(\min,+)$-convolution of two bounded sequences with monotonicity number at most~$2$ already contains
minimum-witness convolution (see
\cref{rem:min-witness-conv}): given $S,T\subseteq[n]$, let $A[i]=0$ for $i\in S$ and $A[i]=2n$ otherwise, and let
$B[j]=j$ for $j\in T$ and $B[j]=2n$ otherwise.
Then $A$ is constant on $S$ and on $[n]\setminus S$, $B$ is increasing on $T$ and constant on $[n]\setminus T$,
and $(A\mc B)[k]<2n$ if and only if some $j\in T$ has $k-j\in S$, in which case $(A\mc B)[k]$ is the smallest
such $j$.
\end{remark}

\subsection{The Coarse-Value Framework}
\label{sec:framework}

We use the framework of CDXZ~\cite{ChiDXZ22} in the following form.
Let $A,B\in\set{0,\dots,U}^n$ with $U=O(n)$, write $C=A\mc B$, and let $D\ge1$ be an integer.
Write
\[
A[i]=D\widetilde A[i]+a_i,\qquad B[j]=D\widetilde B[j]+b_j,\qquad 0\le a_i,b_j<D,
\]
and let $\widetilde C=\widetilde A\mc\widetilde B$ be the \emph{coarse convolution}.

\begin{lemma}
\label{lem:coarse}
For every output $k$ and every witness $(i,j)$ of $C[k]$ we have
$\widetilde A[i]+\widetilde B[j]\in\set{\widetilde C[k],\widetilde C[k]+1}$.
\end{lemma}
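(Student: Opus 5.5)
The plan is to compare the sum of coarse values of the witness against the coarse convolution from both sides, using only the decomposition $A[i]=D\widetilde A[i]+a_i$ with $0\le a_i<D$ (and likewise for $B$). Fix an output $k$ and a witness $(i,j)$, so that $i+j=k$ and $A[i]+B[j]=C[k]<\infty$. Write $s=\widetilde A[i]+\widetilde B[j]$. The lower bound $s\ge\widetilde C[k]$ is immediate, since $(i,j)$ is one of the pairs over which $\widetilde C[k]=\min_{i'+j'=k}(\widetilde A[i']+\widetilde B[j'])$ is taken. All that remains is the upper bound $s\le\widetilde C[k]+1$.

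For the upper bound, let $(i',j')$ be a pair with $i'+j'=k$ that attains $\widetilde C[k]$. Optimality of the witness gives $A[i]+B[j]\le A[i']+B[j']$. I will bound the two sides separately in terms of coarse values:
\begin{itemize}
\item the remainders are nonnegative, so $A[i]+B[j]\ge Ds$;
\item each remainder is at most $D-1$, so $A[i']+B[j']\le D\widetilde C[k]+2(D-1)$.
\end{itemize}
Chaining these gives $Ds\le D\widetilde C[k]+2D-2$. Hence $s\le\widetilde C[k]+2-2/D<\widetilde C[k]+2$. Since $s$ and $\widetilde C[k]$ are integers, $s\le\widetilde C[k]+1$.

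There is no real obstacle here. The only point needing care is to use the strict bound $a_i,b_j\le D-1$ rather than $a_i,b_j<D$, so that the slack is strictly less than $2$ and the integrality step applies. The entries are finite, being in $\set{0,\dots,U}$, so no infinity conventions interfere.
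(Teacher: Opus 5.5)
Your proof is correct and follows essentially the same argument as the paper: a pair attaining $\widetilde C[k]$ has value at most $D\widetilde C[k]+2D-2$, while any pair with coarse sum at least $\widetilde C[k]+2$ has value at least $D\widetilde C[k]+2D$, so it cannot be a witness. Your caveat about using $a_i,b_j\le D-1$ rather than $a_i,b_j<D$ is unnecessary, since for integers the two are equivalent and the strict form already gives $s<\widetilde C[k]+2$.
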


\begin{proof}
Every pair $(i',j')$ with $i'+j'=k$ has $\widetilde A[i']+\widetilde B[j']\ge\widetilde C[k]$.
A pair $(i',j')$ attaining $\widetilde C[k]$ has $A[i']+B[j']\le D\widetilde C[k]+2D-2$, so
$C[k]\le D\widetilde C[k]+2D-2$.
A pair with $\widetilde A[i]+\widetilde B[j]\ge\widetilde C[k]+2$ has value at least
$D\widetilde C[k]+2D>C[k]$, so it is not a witness.
\end{proof}

Let $p$ be a prime, and consider the trivariate polynomials
\[
F_A(x,y,z)=\sum_{i\in[n]}x^{a_i}\,y^{\widetilde A[i]\bmod p}\,z^{i},\qquad
F_B(x,y,z)=\sum_{j\in[n]}x^{b_j}\,y^{\widetilde B[j]\bmod p}\,z^{j}.
\]
For an output $k$ and an offset $\delta\in\set{0,1}$, let $t=\widetilde C[k]+\delta$ be the \emph{target}, and
let
\[
Q_{k,\delta}(x)=\sum_{\substack{0\le e\le 2p-2\\ e\equiv t\ (\mathrm{mod}\ p)}}[y^ez^k]\,(F_AF_B)
=\sum_{\substack{i+j=k\\ \widetilde A[i]+\widetilde B[j]\equiv t\ (\mathrm{mod}\ p)}}x^{a_i+b_j}.
\]
Call a pair $(i,j)$ with $i+j=k$ \emph{erroneous for $(k,\delta)$} if
$\widetilde A[i]+\widetilde B[j]\equiv t\pmod p$ but $\widetilde A[i]+\widetilde B[j]\neq t$, and let
$R_{k,\delta}(x)=\sum x^{a_i+b_j}$, where the sum ranges over the pairs that are erroneous for $(k,\delta)$.

\begin{lemma}
\label{lem:framework}
For every $k$, if $C[k]<\infty$ then
\[
C[k]=\min_{\delta\in\set{0,1}}\Bigl(D(\widetilde C[k]+\delta)+\min\set{e:[x^e](Q_{k,\delta}-R_{k,\delta})>0}\Bigr),
\]
where the inner minimum of an empty set is $+\infty$.
Given $\widetilde C$ and all polynomials $R_{k,\delta}$, the sequence $C$ can be computed in
$\Ot(nDp)$ time.
\end{lemma}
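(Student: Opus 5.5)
The proof has two parts: the identity, and then the algorithm that evaluates it.

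\emph{The identity.} First observe that $Q_{k,\delta}-R_{k,\delta}=\sum x^{a_i+b_j}$ over exactly the pairs $(i,j)$ with $i+j=k$ and $\widetilde A[i]+\widetilde B[j]=t=\widetilde C[k]+\delta$. Indeed, $Q_{k,\delta}$ sums over all pairs whose coarse sum is congruent to $t$ modulo $p$, and $R_{k,\delta}$ removes those where the congruence holds but equality fails. Hence this difference has nonnegative coefficients. Its smallest exponent $e$ with a positive coefficient satisfies
\[
D(\widetilde C[k]+\delta)+e=\min\set{A[i]+B[j] : i+j=k,\ \widetilde A[i]+\widetilde B[j]=t}.
\]
For the inequality ``$\ge$'': every such term is the value $A[i]+B[j]$ of an actual pair, so the right-hand side of the claimed formula is at least $C[k]$. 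For ``$\le$'': when $C[k]<\infty$ a witness exists, and by \cref{lem:coarse} its coarse sum is $\widetilde C[k]+\delta$ for some $\delta\in\set{0,1}$. For that $\delta$ the term is at most $C[k]$. One remark on the range of $Q_{k,\delta}$: the $y$-exponents in $F_AF_B$ are sums of two residues, so they lie in $[0,2p-2]$. Summing over all $e$ in this range with $e\equiv t \pmod p$ therefore captures every pair whose coarse sum is congruent to $t$.

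\emph{The algorithm.} Next, compute $F_AF_B$. In $x$ the degree is less than $2D$, in $y$ less than $2p$, and in $z$ less than $2n$, and all coefficients are at most $n$. Kronecker substitution with $y=x^{2D}$ and $z=x^{4Dp}$, followed by a single FFT over a suitable prime field (as in \cref{lem:fft}), takes $O(nDp\log(nDp))$ time. From the product we read off, for each $k$ and each $\delta\in\set{0,1}$, the polynomial $Q_{k,\delta}$. It is obtained by adding the at most two slices $[y^ez^k]$ with $e\in\set{t\bmod p,\ t\bmod p+p}$. Each slice has $O(D)$ coefficients, so all of them together cost $O(nD)$ time.

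Then subtract the given $R_{k,\delta}$ coefficientwise and scan for the least positive coefficient, which takes $O(D)$ per pair $(k,\delta)$. Outputs with $C[k]=\infty$ cannot occur, since all entries are finite. The total running time is $\Ot(nDp)$.

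The only delicate step is checking that the Kronecker packing is unambiguous and that the coefficient bounds keep the FFT exact. Both follow as in \cref{lem:fft}.
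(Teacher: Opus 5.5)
Your proposal is correct and follows the paper's proof essentially step for step. You identify $Q_{k,\delta}-R_{k,\delta}$ as the generating polynomial of the pairs with coarse sum exactly $\widetilde C[k]+\delta$, use \cref{lem:coarse} to place a witness among them, and compute $F_AF_B$ by Kronecker substitution and one FFT before reading off each $Q_{k,\delta}$ in $O(D)$ time; your explicit substitution $y=x^{2D}$, $z=x^{4Dp}$ and the remark about the two $y$-slices $t\bmod p$ and $t\bmod p+p$ only make explicit details that the paper leaves implicit.
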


\begin{proof}
By definition, $Q_{k,\delta}-R_{k,\delta}=\sum x^{a_i+b_j}$, where the sum ranges over the pairs $(i,j)$ with
$i+j=k$ and $\widetilde A[i]+\widetilde B[j]=\widetilde C[k]+\delta$, all of which have value
$D(\widetilde C[k]+\delta)+a_i+b_j$. So each term of the outer minimum is the minimum value of such a pair, and
by \cref{lem:coarse} some witness of $C[k]$ is among these pairs for some $\delta\in\set{0,1}$.
The polynomial $F_AF_B$ has $x$-degree at most $2D-2$, $y$-degree at most $2p-2$ and $z$-degree at most $2n-2$, and
its coefficients are nonnegative integers at most $n$. By Kronecker substitution it can be computed exactly with
the FFT in $\Ot(nDp)$ time, after which every $Q_{k,\delta}$ can be read off in $O(D)$ time.
\end{proof}

It therefore suffices to compute $\widetilde C$ and to enumerate the erroneous pairs. If $p$ is chosen at random, a
fixed pair is erroneous only with small probability.

\begin{lemma}
\label{lem:erroneous-prob}
Let $p$ be a uniformly random prime in $[\mathsf{P},2\mathsf{P}]$. For every fixed $k$, $\delta$ and integer $v\neq\widetilde C[k]+\delta$
with $|v|\le n^{O(1)}$, we have $\Pr[v\equiv \widetilde C[k]+\delta\pmod p]=O(\log n/\mathsf{P})$.
\end{lemma}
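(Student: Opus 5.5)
The plan is to reduce directly to \cref{lem:prime}. Fix $k$, $\delta$ and $v$, and set $t=\widetilde C[k]+\delta$ and $z=v-t$. Note that $k$, $\delta$, $v$ and $\widetilde C[k]$ do not depend on the random choice of $p$: the coarse convolution $\widetilde C$ is determined by $A$, $B$ and $D$ alone. So $z$ is a fixed integer, and it is nonzero by the assumption $v\neq t$. The event $v\equiv t\pmod p$ is exactly the event $p\mid z$.

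To apply \cref{lem:prime}, we need $\mathsf{P}\ge2$. We may assume this: if $\mathsf{P}<2$, the claimed bound $O(\log n/\mathsf{P})$ is at least a constant and hence trivial. With $\mathsf{P}\ge 2$, \cref{lem:prime} gives
\[
\Pr[p\mid z]\le c_0\ln(|z|+1)/\mathsf{P}.
\]

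It remains to bound $|z|$ polynomially in $n$. We have $|z|\le|v|+|t|$. Here $|v|\le n^{O(1)}$ by hypothesis. For $t$, since all entries lie in $\set{0,\dots,U}$ with $U=O(n)$, we get $0\le\widetilde A[i],\widetilde B[j]\le U$. Hence $0\le\widetilde C[k]\le 2U$ and $0\le t\le 2U+1=O(n)$. Therefore $|z|\le n^{O(1)}$, so $\ln(|z|+1)=O(\log n)$ (recall that $\log n$ means $\log(n+2)$). This gives the bound $O(\log n/\mathsf{P})$.

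No step is a real obstacle. The only point needing care is to state explicitly that $z$ is fixed before $p$ is sampled. This holds because $t$ depends only on the inputs and on $D$, which does not depend on $p$. One caveat: if later sections choose $D=p$, then $\widetilde C$ does depend on $p$, and this lemma would not apply verbatim. Within the present framework, however, $D$ is fixed independently of $p$.
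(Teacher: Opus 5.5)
Your proposal is correct and is the paper's argument: the paper's proof is the one-line application of \cref{lem:prime} to the fixed nonzero integer $z=v-\widetilde C[k]-\delta$. Your additional checks (that $z$ is determined before $p$ is sampled because $D$ is fixed in this framework, that $|z|$ is polynomially bounded so $\ln(|z|+1)=O(\log n)$, and the trivial case $\mathsf{P}<2$) are exactly the details the paper leaves implicit.
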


\begin{proof}
Apply \cref{lem:prime} to $z=v-\widetilde C[k]-\delta\neq 0$.
\end{proof}

To enumerate erroneous pairs, we use the following reporting structure.

\begin{lemma}
\label{lem:reporting}
Let $X\in\Z^{N}$, let $p\ge1$ be an integer, and let $\delta\in\Z$. After $O(N+p)$ preprocessing time, the following queries can
be answered in $O(\log N)$ time plus $O(1)$ time per reported position: given an interval $[\ell,r]$ and an integer $\Delta$, report all
$\kappa\in[\ell,r]$ with $X[\kappa]+\delta\equiv\Delta\pmod p$ and $X[\kappa]+\delta\neq\Delta$.
\end{lemma}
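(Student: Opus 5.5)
We bucket positions by residue and answer each query with one binary search per bucket, then walk the bucket. In preprocessing we compute $X[\kappa]+\delta$ and its residue $\rho(\kappa)=(X[\kappa]+\delta)\bmod p$ for every $\kappa\in[N]$. For each residue class $r\in\set{0,\dots,p-1}$ we build the list $L_r$ of positions $\kappa$ with $\rho(\kappa)=r$, in increasing order of $\kappa$. One pass over $\kappa=0,\dots,N-1$ with an array of $p$ list heads does this in $O(N+p)$ time, and the lists come out sorted automatically.

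Equality must be excluded. Within one list $L_r$, the positions with $X[\kappa]+\delta=\Delta$ for a given $\Delta$ can be numerous, and skipping them one by one would cost time not charged to any reported position. To fix this, we also precompute, for each position in $L_r$, a pointer to the next position in $L_r$ whose value $X[\kappa']$ differs from $X[\kappa]$. Equivalently, we split each $L_r$ into maximal runs of consecutive list entries with equal value and link each run to the next run. A right-to-left scan over each list builds these pointers in $O(N)$ total time.

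To answer a query $([\ell,r],\Delta)$, we set $\rho=\Delta\bmod p$ and binary search $L_\rho$ for the first position $\ge\ell$, in $O(\log N)$ time. We then walk forward while the position is $\le r$. At a run whose common value equals $\Delta-\delta$ we jump over the whole run via its pointer in $O(1)$. Otherwise we report every element of the run. Every position in $[\ell,r]$ that lies in $L_\rho$ satisfies the congruence, so each reported element is correct and none is missed.

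The skipping is cheap because consecutive runs have different values, so at most one of any two consecutive runs has value $\Delta-\delta$. The number of skips is therefore at most one more than the number of non-skipped runs visited. Each non-skipped run inside the range contributes at least one report, apart from a possible final partial run cut off at $r$, which still contributes a report if it is non-skipped. The total is $O(\log N+1+\#\text{reports})$. The only subtle point is this amortization of skipped equal-value runs; the rest is standard bucketing.
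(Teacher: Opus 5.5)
Your proposal is correct and is essentially the paper's proof. Both use residue-class lists $L_\rho$ in which each entry points to the next entry of different $X$-value, together with a binary search for $\ell$. Both then observe that every jump over entries with $X[\kappa]+\delta=\Delta$ is immediately followed by a report or by the end of the walk, which bounds the walk by $O(1+z)$ steps for $z$ reported positions.
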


\begin{proof}
For every residue $\rho\in[p]$, store the sorted list $L_\rho$ of all positions $\kappa$ with
$X[\kappa]+\delta\equiv\rho\pmod p$, and for every entry of $L_\rho$ a pointer $\mathrm{nxt}$ to the next entry
of $L_\rho$ whose $X$-value differs from its own (or to the end of the list).
To answer a query, binary search for the first entry of $L_{\Delta\bmod p}$ that is at least $\ell$, and walk
along the list while the current position is at most $r$: if the current position $\kappa$ satisfies
$X[\kappa]+\delta=\Delta$, follow $\mathrm{nxt}$, which skips only entries of value $X[\kappa]$; otherwise report
$\kappa$ and move to the next entry.
Every step either reports a position or is a jump that is immediately followed by a reported position or by the
end of the walk, so the walk takes at most $2z+1$ steps, where $z$ is the number of reported positions.
\end{proof}

Let $P_1,\dots,P_{m}$ be a monotone partition of $A$. For each part $P_s$, the indices $i\in P_s$ with a fixed
value of $\widetilde A[i]$ form a contiguous range of $P_s$ (in increasing index order), because $A$, and hence
$\widetilde A$, is monotone on $P_s$. We call these ranges the \emph{groups} of $A$ (at scale $D$); for a group
$G\subseteq P_s$ we write $\widetilde A[G]$ for its common coarse value and $I_G=[\min G,\max G]$ for its
\emph{span}. The groups have the following properties:
\begin{enumerate}[label=(G\arabic*)]
\item\label{it:g1} Every index belongs to exactly one group, and $G=P_s\cap I_G$ for every group $G\subseteq P_s$.
\item\label{it:g2} The spans of the groups of a fixed part $P_s$ are pairwise disjoint; hence the sum of the span
lengths of all groups is at most $mn$.
\item\label{it:g3} Each part has at most $\floor{U/D}+1$ groups, so there are at most $m(U/D+1)$ groups in total.
\end{enumerate}

\subsection{An Algorithm Using Only One Monotone Partition}
\label{sec:nearly-2}

\begin{theorem}
\label{thm:nearly-mono-2}
Let $A,B\in\set{0,\dots,U}^n$ with $U=O(n)$, and let $m=\mono(A)$. Then $A\mc B$ can be computed by a Las Vegas
algorithm in expected time $\Ot(n^{5/3}m^{2/3})$.
\end{theorem}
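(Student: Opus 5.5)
We plan to follow the CDXZ framework of \cref{sec:framework} at a single scale $D$ with a random prime $p\in[\mathsf P,2\mathsf P]$. By \cref{lem:framework}, it suffices to compute the coarse convolution $\widetilde C$ and all the polynomials $R_{k,\delta}$. The FFT step then costs $\Ot(nDp)$. First, we compute a monotone partition of $A$ into $m'=O(m\log n)$ parts with \cref{lem:greedy-partition}, in $\Ot(nm)$ time, and form the groups of $A$ at scale $D$. The coarse sequences have entries in $\set{0,\dots,O(n/D)}$, so $\widetilde C$ can be computed by \cref{lem:fft} in $\Ot(n\cdot n/D)=\Ot(n^2/D)$ time. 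Instead of this, one could recurse, but the direct bound suffices for the parameters below.

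The main work is enumerating the erroneous pairs. For every group $G$ of $A$ and every index $j\in[n]$ of $B$, the pairs $(i,j)$ with $i\in G$ contribute to the outputs $k=i+j$ with $k\in I_G+j$. For fixed $\delta$, such a pair is erroneous iff $\widetilde A[G]+\widetilde B[j]\equiv\widetilde C[k]+\delta\pmod p$ with inequality. We therefore build the reporting structure of \cref{lem:reporting} on $X=\widetilde C$ with shift $\delta$, once for each $\delta\in\set{0,1}$, in $O(n+p)$ preprocessing. For each pair $(G,j)$ we query the interval $I_G+j$ with $\Delta=\widetilde A[G]+\widetilde B[j]$. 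Each reported $k$ with $k-j\in G$ is a genuine erroneous pair, and we add $x^{a_{k-j}+b_j}$ to $R_{k,\delta}$. Reported $k$ with $k-j\notin G$ are false positives, and we discard them. Membership $k-j\in G$ is tested in $O(1)$ by storing, for each index, its part and its group.

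The number of queries is $n$ times the number of groups, which is $O(n\cdot m'(n/D+1))$ by \ref{it:g3}. This costs $\Ot(n^2m/D)$ in total. The number of reported positions is charged to the random prime. For a fixed triple $(G,j,k)$ with $k\in I_G+j$, we have $|\Delta|\le n^{O(1)}$, and \cref{lem:erroneous-prob} gives a report probability of $O(\log n/\mathsf P)$ whenever $\Delta\ne\widetilde C[k]+\delta$. Here the key point is that $\Delta$ and $\widetilde C$ do not depend on $p$. By \ref{it:g2}, the span lengths sum to at most $m'n$, so the number of triples is at most $m'n\cdot n$. The expected number of reports is therefore $\Ot(n^2m/\mathsf P)$. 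This bound covers both the true erroneous pairs and the false positives from other parts lying inside a span. This charging of false positives, which is what \ref{it:g1}--\ref{it:g2} are designed for, is the step I expect to need the most care. It is also what forces the extra $n$ factor, compared with the monotone case, where groups are whole intervals.

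The total expected time is then
\[
\Ot\bigl(nDp+n^2/D+n^2m/D+n^2m/p\bigr),
\]
and correctness holds always, so the algorithm is Las Vegas. Taking $p\approx D$, this becomes $\Ot(nD^2+n^2m/D)$. We balance with $D=\Theta(n^{1/3}m^{1/3})$ and $\mathsf P=D$, which gives $\Ot(n^{5/3}m^{2/3})$. Finally, we check that the sampling of $p$ in $\Ot(\mathsf P)$ time and the greedy partition in $\Ot(nm)$ time are dominated by this bound. For $m\ge\sqrt n$ we simply fall back to the quadratic algorithm.
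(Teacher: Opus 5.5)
Your proposal is correct and follows the paper's proof essentially step for step. It uses the same greedy partition, groups at scale $D$, and reporting structure on $\widetilde C$ queried on $I_G+j$ with $\Delta=\widetilde A[G]+\widetilde B[j]$; it discards reports with $k-j\notin G$, charges all reports (genuine and false) to the random prime via \ref{it:g2}, and balances with $D\approx\mathsf P\approx(nm')^{1/3}$. The only adjustment needed is to set $D$ and $\mathsf P$ from the computed number of parts $m'$ rather than from $\mono(A)$, which cannot be computed efficiently; the fallback for $m\ge\sqrt n$ is harmless but unnecessary, since the greedy partition already gives $m'=O(\sqrt n)$.
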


\begin{proof}
Using \cref{lem:greedy-partition}, we compute a monotone partition of $A$ into parts
$P_1,\dots,P_{m'}$, where $m'=O(\min\set{m\log n,\sqrt n})$, in $\Ot(n\min\set{m,\sqrt n})$ time; this is within the claimed bound.
The multiplication costs $\Ot(nD\mathsf{P})$, the queries $\Ot(m'n^2/D)$, and processing the reports an expected
$\Ot(m'n^2/\mathsf{P})$; we choose $D$ and $\mathsf{P}$ to equalize these.
Let $D=\ceil{(m'n)^{1/3}}$ and $\mathsf{P}=\max\set{2,\ceil{(m'n)^{1/3}}}$, and let $p$ be a uniformly random prime in $[\mathsf{P},2\mathsf{P}]$.

The sequences $\widetilde A,\widetilde B$ have entries in $\set{0,\dots,\floor{U/D}}$, so by \cref{lem:fft},
$\widetilde C$ can be computed in $\Ot(n(1+U/D))=\Ot(n^2/D)$ time.

For each $\delta\in\set{0,1}$, build the structure of \cref{lem:reporting} on $X=\widetilde C$ with
shift $\delta$.
For every group $G$ of $A$, say $G\subseteq P_s$, every $j\in[n]$, and every $\delta\in\set{0,1}$, query the
interval $I_G+j$ with $\Delta=\widetilde A[G]+\widetilde B[j]$. For every reported output $\kappa$, let
$i=\kappa-j\in I_G$; if $i\in P_s$, add the term $x^{a_i+b_j}$ to $R_{\kappa,\delta}$, and otherwise discard
$\kappa$ (a \emph{false report}).

To see that this enumerates exactly the erroneous pairs, let $(i,j)$ be erroneous for $(k,\delta)$, and let $G$ be
the group containing $i$, say $G\subseteq P_s$. Then
$k=i+j\in I_G+j$, and since $\widetilde A[i]=\widetilde A[G]$, the output $k$ is reported by the query for
$(G,j,\delta)$; as $i\in P_s$ the term is added. Conversely, every added term corresponds to a pair $(i,j)$ with
$i\in P_s\cap I_G=G$ by~\ref{it:g1}, so $\widetilde A[i]+\widetilde B[j]=\Delta$, and the query guarantees that
$(i,j)$ is erroneous for $(i+j,\delta)$. Each pair $(i,j)$ is considered only in the query for the unique group
containing $i$, so every erroneous pair is added exactly once. By \cref{lem:framework}, the algorithm is correct
for every choice of $p$.

The polynomial multiplication takes $\Ot(nD\mathsf{P})$ time by \cref{lem:framework}.
By~\ref{it:g3} the number of queries is $O(n\cdot m'(U/D+1))=O(m'n^2/D)$, each with an
$O(\log n)$ overhead.
A query for $(G,j,\delta)$ reports $\kappa\in I_G+j$, possibly as a false report, only if $\widetilde A[G]+\widetilde B[j]\neq\widetilde C[\kappa]+\delta$
but the two are congruent modulo $p$. All quantities involved are determined before $p$ is sampled and bounded by
$O(n)$, so by \cref{lem:erroneous-prob} this happens with probability $O(\log n/\mathsf{P})$. The number of triples $(G,j,\kappa)$ with $\kappa\in I_G+j$ is $n\sum_G|I_G|\le m'n^2$ by~\ref{it:g2}, so by
linearity of expectation (over these triples and the two choices of $\delta$) the expected number of reports is
$O(m'n^2\log n/\mathsf{P})$.
In total, the expected running time is
\[
\Ot\Bigl(\frac{n^2}{D}+nD\mathsf{P}+\frac{m'n^2}{D}+\frac{m'n^2}{\mathsf{P}}\Bigr)=\Ot\bigl(n^{5/3}m'^{2/3}\bigr)
=\Ot\bigl(n^{5/3}m^{2/3}\bigr). \qedhere
\]
\end{proof}

\begin{remark}
\label{rem:one-part}
The algorithm of \cref{thm:nearly-mono-2} requires only that the entries of $B$ be bounded by $O(n)$.
For $m=1$, the one-monotone reduction of~\cite[Theorem~8]{BringmannDP24} and the bounded monotone algorithm
of~\cite{ChiDXZ22,BringmannDP24} already give the faster expected time $\Ot(n^{1.5})$.
Our result applies when $A$ has several interleaved monotone subsequences: for $m=O(1)$,
\cref{thm:nearly-mono-2} computes the $(\min,+)$-convolution with an arbitrary bounded sequence in $\Ot(n^{5/3})$ time.
\end{remark}

For every fixed $m\ge2$, improving the exponent $5/3$ to below $3/2$ would also improve bounded
$(\min,+)$-convolution, by the following reduction.

\begin{proposition}
\label{prop:one-sided-lb}
Let $2\le m\le n$, and suppose that $(\min,+)$-convolution of sequences $A,B\in\set{0,\dots,2n+1}^n$ with $\mono(A)\le m$ and no restriction on the monotonicity number of $B$ can be solved in time $T(n,m)$. Then bounded $(\min,+)$-convolution of length $n$ can be solved in time
$O\bigl((1+\sqrt n/(m-1))\cdot T(n,m)+n^{1.5}\log n\bigr)$. In particular, if $T(n,m)=O(m\,n^{1.5-\eps})$ for some
$\eps>0$ and some $m=m(n)$ with $2\le m\le\sqrt n$, then bounded $(\min,+)$-convolution can be solved in
$O(n^{2-\eps}+n^{1.5}\log n)$ time; this holds already for $m=2$ and an $O(n^{1.5-\eps})$-time algorithm.
\end{proposition}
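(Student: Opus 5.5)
The plan is to decompose $A$ by \emph{value classes}. A sequence that takes only $m-1$ distinct values, apart from a constant padding value, has a monotone partition into at most $m$ parts: one constant part per value and one part for the padding. Let $h=\lceil\sqrt n\rceil$ and write $A[i]=h\,\eta_i+\lambda_i$ with $0\le\lambda_i<h$. The high parts $\eta_i$ take at most $O(\sqrt n)$ distinct values. Split these values into $g=\lceil (U/h+1)/(m-1)\rceil=O(1+\sqrt n/(m-1))$ groups of at most $m-1$ consecutive values each. For a group $\Gamma$, define
\[
A_\Gamma[i]=\begin{cases} h\,\eta_i-c_\Gamma & \text{if }\eta_i\in\Gamma,\\ 2n+1 & \text{otherwise,}\end{cases}
\]
where the shift $c_\Gamma$ makes the kept values lie in a window of width $O(h(m-1))\le O(n)$ starting at $0$. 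Then $A_\Gamma\in\set{0,\dots,2n+1}^n$ and $\mono(A_\Gamma)\le m$. The padding value $2n+1$ is chosen so that, after the shift, a padded entry is never better than a genuine one, and it is detected exactly as in \cref{rem:min-witness}. Each call $A_\Gamma\mc B$ is an instance of the assumed algorithm, so all calls together cost $O((1+\sqrt n/(m-1))\,T(n,m))$.

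From these calls we obtain, for each $k$, the coarse estimate $E[k]=\min_\Gamma\bigl(c_\Gamma+(A_\Gamma\mc B)[k]\bigr)=\min_{i+j=k}(h\eta_i+B[j])$, ignoring calls whose output is at least $2n+1-\max B$, which can only come from padding. Since $0\le\lambda_i<h$, this gives $E[k]\le C[k]<E[k]+h$. The second phase recovers $C[k]$ exactly from this additive-$h$ estimate in $\Ot(n^{1.5})$ time. I would imitate the ``predict, then FFT on a window'' idea of \cref{lem:coarse} and the sketch of \cref{thm:intro-one-bounded}. Only pairs with $h\eta_i+B[j]<E[k]+h$ matter, so in the relevant pairs both $B[j]$ and $h\eta_i$ are pinned to a window of width $O(h)$ around a value determined by $E[k]$. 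I would then split $B$ (or $C$) into value buckets of width $h$, as in the overview of \cref{thm:intro-bounded-inf}. For each bucket the relevant entries of $A$ have values in a window of width $O(h)$, and \cref{lem:fft} costs $O(n h\log n)$ per batch. Arranging the batches so that their total length is $O(n)$ gives the $O(n^{1.5}\log n)$ term.

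The main obstacle is this exact-recovery step. A naive bucketing may make the sizes of the paired index sets multiply rather than add. If the bucket-by-bucket FFT cannot be organized with total length $O(n)$, my fallback is to move the low parts into the oracle calls instead. I would refine each group $\Gamma$ further by the value of $\lfloor\lambda_i\cdot(m-1)/h\rfloor$, keeping the exact values $A[i]$. This still leaves only $m-1$ distinct coarse classes per call, but the constancy within a class, and hence $\mono(A_\Gamma)\le m$, must then be re-checked, and it is exactly this tension that the $n^{1.5}\log n$ FFT term is meant to absorb.

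The stated consequence is arithmetic: with $T(n,m)=O(mn^{1.5-\eps})$ and $2\le m\le\sqrt n$, the first term is $O((1+\sqrt n/(m-1))\,m\,n^{1.5-\eps})=O(n^{2-\eps})$, since $m/(m-1)\le 2$ and $m\le\sqrt n$. For $m=2$ this is the case of an $O(n^{1.5-\eps})$-time algorithm.
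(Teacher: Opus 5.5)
\textbf{There is a genuine gap, and it is the exact-recovery phase you flagged.} Recovering $C[k]$ from an estimate with $E[k]\le C[k]<E[k]+h$, $h=\ceil{\sqrt n}$, is not post-processing; it carries the whole difficulty of the problem. An estimate of the same quality needs no oracle. With $\widetilde A[i]=\floor{A[i]/h}$ and $\widetilde B[j]=\floor{B[j]/h}$, \cref{lem:fft} computes $\widetilde A\mc\widetilde B$ in $O(n^{1.5}\log n)$ time, and $h\,(\widetilde A\mc\widetilde B)[k]$ determines $C[k]$ up to an additive $2h-2$. So any $\Ot(n^{1.5})$ recovery procedure from an additive-$O(\sqrt n)$ estimate would solve bounded $(\min,+)$-convolution in $\Ot(n^{1.5})$ time with no oracle at all. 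That would be far beyond anything known, and your oracle calls would be superfluous.

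The bucket-by-bucket FFT fails for the reason you suspected. For a fixed value bucket of $B$, the window of relevant $A$-values is centered at $E[k]-B[j]$, which depends on $k$. You are therefore forced to pair every bucket of $A$ with every bucket of $B$, which is $\Theta(n)$ bucket pairs, each needing an FFT of length up to $n$. \Cref{thm:one-bounded} escapes this only because its prediction comes from prefix minima, which confine the relevant $j$ to an index interval.

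Your fallback fails for a different reason. Once you keep the exact values $A[i]$ inside a value class, the class is no longer constant. The restriction of $A$ to it can then need many monotone parts, so $\mono(A_\Gamma)\le m$ is lost.

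\textbf{The missing idea is to decompose by indices rather than by values.} Then every subinstance keeps the exact entries of $A$ and no recovery is needed.
\begin{enumerate}[label=(\roman*)]
\item By the Erd\H{o}s--Szekeres theorem, $A$ has a monotone partition into $r=O(\sqrt n)$ parts, and \cref{lem:greedy-partition} computes one in $O(n^{1.5}\log n)$ time. This, not an FFT, is the source of the $n^{1.5}\log n$ term.
\item Group the parts into $\ceil{r/(m-1)}=O(1+\sqrt n/(m-1))$ groups of at most $m-1$ parts each.
\item For each group, set $A_{\mathcal G}[i]=A[i]$ on the union of the group and $A_{\mathcal G}[i]=2n+1$ elsewhere. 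The padding is constant, so it is one extra monotone part and $\mono(A_{\mathcal G})\le m$.
\item Every pair $(i,j)$ appears with its original value in the instance of the group containing $i$. Pairs involving padding have value at least $2n+1$, while every output has a genuine pair of value at most $2n$. Hence $A\mc B$ is simply the entrywise minimum of the $A_{\mathcal G}\mc B$, which takes $O(n^{1.5})$ time to combine.
\end{enumerate}
Your closing arithmetic for the ``in particular'' consequence is correct.
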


\begin{proof}
Let $A,B\in\set{0,\dots,n}^n$. Compute a monotone partition of $A$ into $r=O(\sqrt n)$ parts with
\cref{lem:greedy-partition}, and group the parts into $\ceil{r/(m-1)}$ groups of at most $m-1$ parts each. For a group
$\mathcal G$ with union $S_{\mathcal G}$, let $A_{\mathcal G}[i]=A[i]$ for $i\in S_{\mathcal G}$ and $A_{\mathcal G}[i]=2n+1$
otherwise. Then $\mono(A_{\mathcal G})\le m$, since the indices outside $S_{\mathcal G}$ carry a constant value. We claim
that $A\mc B$ is the entrywise minimum of the sequences $A_{\mathcal G}\mc B$. Indeed, every pair $(i,j)$ appears with its original value $A[i]+B[j]$ in the instance of the group containing $i$.
Any pair involving the value $2n+1$ has value at least $2n+1$, while every output $k\in[2n-1]$ has an original pair
of value at most $2n$; so pairs involving $2n+1$ never attain the minimum over all instances. The running time is $O(\ceil{r/(m-1)}\cdot T(n,m))$ plus $O(n^{1.5}\log n)$ for the
partition and $O(n\sqrt n)$ for combining the results.
If $T(n,m)=O(m\,n^{1.5-\eps})$ for some $m=m(n)$ with $2\le m\le\sqrt n$, we run the reduction with this $m$; the total time is
$O\bigl((\sqrt n/(m-1)+1)\cdot m\,n^{1.5-\eps}+n^{1.5}\log n\bigr)=O(n^{2-\eps}+n^{1.5}\log n)$, using $m\le\sqrt n$.
\end{proof}

Hence, for a constant number $m\ge2$ of monotone parts in $A$ and arbitrary bounded $B$, an algorithm faster than
$n^{1.5}$ by a polynomial factor would carry over to bounded $(\min,+)$-convolution, while \cref{thm:nearly-mono-2}
gives $\Ot(n^{5/3})$. More generally, for $2\le m\le\sqrt n$, the threshold $m\,n^{1.5}$ of the reduction meets the
upper bound $\Ot(n^{5/3}m^{2/3})$ of \cref{thm:nearly-mono-2} at $m=\Theta(\sqrt n)$.

\subsection{Both Sequences Barely Monotone}
\label{sec:nearly-1}

Bound~(a) of \cref{thm:intro-nearly-mono} extends to sequences with infinite entries, provided that their finite
entries can be partitioned into few monotone subsequences.

\begin{theorem}
\label{thm:nearly-mono-1}
Let $A,B\in(\set{0,\dots,U}\cup\set{+\infty})^n$, where $U=O(n)$. Suppose that the finite entries of $A$ can be
partitioned into $m_A$ monotone subsequences and those of $B$ into $m_B$ monotone subsequences.
Then $A\mc B$ can be computed by a Las Vegas algorithm in expected time $\Ot(n+n^{1.5}\sqrt{m_Am_B})$.
\end{theorem}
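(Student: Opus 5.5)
We follow the bit-by-bit refinement of CDXZ~\cite{ChiDXZ22}, as outlined in \cref{sec:overview}. First, we apply \cref{lem:greedy-partition} to the finite entries of $A$ and of $B$. This gives monotone partitions into $O(m_A\log n)$ and $O(m_B\log n)$ parts. For each part $P$ we then form a \emph{filled} sequence $A_P$ that is monotone on all of $[n]$. It agrees with $A$ on $P$, and every gap of $P$ is filled with the value of the nearest index of $P$ to its left. Since $C[k]=\min_{P,P'}(A_P\mc B_{P'})[k]$ restricted to pairs $(i,k-i)\in P\times P'$, only indices inside the parts matter, and infinite entries play no role. We sample one random prime $p\approx$ a suitable scale $D$ and write values in base $p$ bit by bit. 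Concretely, we define levels $\lambda=0,1,\dots,O(\log n)$ with rounding $\widetilde A^{(\lambda)}[i]=\floor{A[i]/p_\lambda}$ at geometrically decreasing scales $p_\lambda$. We compute the coarsest convolution exactly with \cref{lem:fft}. We then refine level by level, maintaining for each $k$ and each pair of parts the exact value $\widetilde C^{(\lambda)}[k]$, which by \cref{lem:coarse} lies within an additive $1$ of the true rounded minimum.

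At each level we perform the framework of \cref{lem:framework} for every pair of parts simultaneously. The polynomial products record the fine digits exactly, together with the coarse values modulo a fresh random prime $q_\lambda$ (or $q_\lambda=1$). We must subtract the erroneous contributions. For fixed $k$ and a pair of parts $(P,P')$, the indices $i$ on which the rounded values of $A_P[i]$ and $B_{P'}[k-i]$ are both constant form intervals, the \emph{segments}. Because $A_P$ and $B_{P'}$ are monotone on $[n]$, there are only $O(n/p_\lambda)$-ish segments per $k$, and they can be enumerated by merging breakpoints. A segment with the wrong coarse sum interferes only if its sum agrees with the target modulo the relevant modulus. We will argue, as in \cref{lem:erroneous-prob}, that a segment stays near the current minimum across levels only if a fixed integer determined before sampling is divisible by $p$. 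That happens with probability $O(\log n/p)$, so the expected number of surviving erroneous segments is small.

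For each erroneous segment that also agrees modulo $q_\lambda$, we subtract its contribution. This requires the number of $i$ in the segment with $i\in P$ and $k-i\in P'$, a range-restricted convolution count. We will compute these counts by scanning the segment directly. The probability of agreeing modulo $q_\lambda$ is $O(\log n/q_\lambda)$, so the expected scanning cost is about $(\text{segments}\cdot\text{length})/q_\lambda$. Meanwhile the polynomial sizes grow by the factor $q_\lambda$. The FFT cost per level is roughly $n\,m_Am_B\,q_\lambda\cdot(\text{digit range})$, and the scan cost is roughly $n^2 m_Am_B/(q_\lambda\cdot\text{scale})$ counted over pairs of parts. Balancing these at every level, and choosing $p\approx\sqrt n$ at the top level, should give $\Ot(n^{1.5}\sqrt{m_Am_B})$, plus $O(n)$ for reading the input. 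Correctness holds for every outcome of the random primes, since all subtractions are exact. The algorithm is therefore Las Vegas.

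The main obstacle is the probabilistic accounting for segments. We must show that every erroneous segment that survives to be scanned is charged either to the event $p\mid z$ for a fixed nonzero $z$, or to the event $q_\lambda\mid z'$. We must also bound the total length of segments over all pairs of parts so that the cost is $\sqrt{m_Am_B}$ rather than $m_Am_B$. I would first try to avoid building polynomials separately for every pair of parts: take the products over whole filled sequences and distinguish parts only in the scanning stage. This keeps the FFT cost at roughly $\sqrt{m_Am_B}$ after balancing, and the count of segments summed over pairs is what forces the $m_Am_B$ factor under the square root.
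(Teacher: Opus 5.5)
Your outline has the same architecture as the paper's proof: the random prime $p$ doubling as the split scale, filled sequences defining per-pair segments, one global product per level with the high parts recorded modulo a fresh prime $q_\lambda$, and scanning for the range-restricted counts. As written, however, it does not reach the claimed bound, and the step you defer as ``the main obstacle'' is the substance of the proof.

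First, $p\approx\sqrt n$ is the wrong scale. At the top level every false segment of every triple $(k,s,t)$ must be tracked, since all of them have $S^{(h)}=0$ and any of them may be aliased by a later $q_\lambda$. There can be $\Theta(U_*/p)$ of them per triple, hence $\Theta(nRU_*/p)$ in total with $R=m'_Am'_B$. For $p\approx\sqrt n$ this is of order $n^{1.5}m_Am_B$. The paper takes $\mathsf{P}=\ceil{\sqrt{RU_*}}$, which balances the expected $O(nRU_*\log n/\mathsf{P})$ tracked triples per level (\cref{lem:a-track}) against the $O(n\mathsf{P})$ cost of the unfiltered products.

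Similarly, a per-level FFT cost of $n\,m_Am_B\,q_\lambda(\cdot)$, as in your first estimate, balances to $n^{1.5}m_Am_B$. Only the global product you suggest at the end avoids this, and it must range over the actual finite entries, not over filled sequences; this works because the high and low parts of a valid pair depend only on $A[i]$ and $B[k-i]$. With a global product you can maintain only per-$k$ information: the two candidates $H[k]+g$, $g\in\set{0,1}$, with their rounded low-part minima and pruning. You cannot maintain minima per pair of parts.

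The probabilistic accounting you sketch is also not the right one. Retaining a false segment at level $\mu$ does not force a fixed integer to be divisible by $p$. \Cref{lem:a-pos} only gives that, for each position $i$ of the segment, $\Delta_{s,t}(k,i)=\bar A_s[i]+\bar B_t[k-i]-C[k]$ lies within $O(2^\mu)$ of a nonzero multiple of $p$. This event has probability $O(2^\mu\log n/\mathsf{P})$ rather than $O(\log n/\mathsf{P})$ (\cref{lem:a-peps}).

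Since segment endpoints depend on $p$, even this bound cannot be applied to segments directly. The paper charges tracked segments to $p$-independent cells: maximal intervals on which $\floor{\bar A_s/2^\lambda}$ and $\floor{\bar B_t[k-\cdot]/2^\lambda}$ are constant. There are $O(1+U_*/2^\lambda)$ cells per triple, each containing $O(1+\beta_Z)$ segment starts with $\sum_Z\beta_Z=O(U_*/p)$. Scanned length is charged to individual positions instead.

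For scanning you need both events simultaneously, not one or the other. Conditioned on $p$, a tracked segment is scanned with probability $O(\log n/\mathsf{Q}_\lambda)$, and each of its positions satisfies the $p$-event; this yields the $Rn^22^\lambda/(\mathsf{P}\mathsf{Q}_\lambda)$ shape of your estimate. The conditioning requires the retained lists to depend only on $p$. So you must keep every false piece with $|S^{(\lambda)}(\tau)-\Gamma^{(\lambda)}_g[k]|\le1$, whether or not it passed the filter.

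Correctness needs this retention rule as well, because the fresh $q_{\lambda-1}$ may alias segments that $q_\lambda$ did not. The symmetric radius-$1$ window together with \cref{lem:a-contain} keeps the lists equal to the false-near sets $F^{(\lambda)}_g(k)$. That is what makes the subtracted counts exactly the aliased false valid pairs with parent value in $\set{\Gamma,\Gamma+1}$. Without such a rule, your claim that correctness holds because all subtractions are exact is unsupported.
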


For $m_A=m_B=1$, \cref{thm:nearly-mono-1} gives $\Ot(n^{1.5})$ time whenever the finite entries of each sequence
are monotone, with infinite entries placed arbitrarily.
This common generalization of bounded monotone $(\min,+)$-convolution and minimum-witness convolution
(\cref{rem:min-witness-conv}) is not covered by the CDXZ algorithm: CDXZ introduce infinite entries in intermediate instances, but their analysis
relies on a bound on the number of intervals of infinite entries~\cite[Lemma~4.3]{ChiDXZ22}.

The algorithm of \cref{sec:nearly-2} processes each group of $A$ against individual entries of $B$. To exploit
structure in both inputs, we instead refine the values bit by bit, following CDXZ, and process
\emph{segments}: for a fixed output $k$, maximal intervals of indices $i$ on which all relevant rounded values of
$A[i]$ and $B[k-i]$ are constant. At each level, CDXZ subtract erroneous contributions segment by segment, using the number of pairs in each
segment. For monotone sequences, this count is the segment's length.

To keep the gaps in a part from breaking its segments, we extend each part to a monotone sequence on all of $[n]$
and define segments using these \emph{filled} sequences. A segment of the part pair $(P_s,Q_t)$ may contain indices $i$ with $i\notin P_s$ or $k-i\notin Q_t$.
Its length therefore need not equal the number of contributing pairs: for instance, if $k=5$, $P_s=\set{0,2,5}$
and $Q_t=\set{0,1,5}$, and $[0,5]$ is a segment, then it has length six, but only $i=0$ and $i=5$ satisfy $i\in P_s$ and
$k-i\in Q_t$. Computing that number requires a
range-restricted convolution count, for which we know no fast procedure. Instead, we use an
\emph{aliasing filter}: at each bit level, a global polynomial product also records coarse values modulo a fresh
random prime $q_\lambda$ (or $q_\lambda=1$ when no filtering is needed).
True contributions always pass this filter, and a false segment needs to be counted only if it also passes, in
which case we scan it position by position. Increasing $q_\lambda$ enlarges the polynomial product but reduces
the expected scanning cost.

At each level, we multiply two polynomials to count candidate pairs and subtract the contributions of aliased
false segments. We then refine the minimum for each candidate coarse sum and discard candidates that cannot be
optimal, and we retain the false segments close to the current minimum for the next level. The moduli $q_\lambda$ determine which segments are scanned, but the retained lists depend only on $p$. We therefore condition on $p$ when bounding the expected scanning
cost over the moduli $q_\lambda$, and then average over $p$.

\subsubsection{Setup}
\label{sec:a-setup}

Let $A,B$ be as in \cref{thm:nearly-mono-1}, and let $S_A=\set{i: A[i]<\infty}$ and $S_B=\set{j:B[j]<\infty}$.
Let $C=A\mc B$. If $S_A=\emptyset$ or $S_B=\emptyset$, then $C\equiv\infty$, which can be output in $O(n)$ time.
From now on, we assume that both supports are nonempty and that monotone partitions
$S_A=P_1\cup\dots\cup P_{m'_A}$ and $S_B=Q_1\cup\dots\cup Q_{m'_B}$ of the finite entries into nonempty parts are
given, so $m'_A,m'_B\ge1$. If partitions as in the hypothesis are given, then $m'_A\le m_A$ and $m'_B\le m_B$;
otherwise, \cref{lem:greedy-partition} computes partitions with $m'_A=O(m_A\log n)$ and $m'_B=O(m_B\log n)$.
Each part has a direction (non-decreasing or non-increasing). For $i\in S_A$ let $\pi_A(i)$ be the index of
the part containing $i$, and define $\pi_B$ analogously. Let $R=m'_Am'_B$, so that $R=O(m_Am_B\log^2n)$.

For $k\in\mathcal K:=[2n-1]$ let $I_k=\set{i: 0\le i<n,\ 0\le k-i<n}$, an interval. We call
$i\in I_k$ \emph{valid} for $k$ if $i\in S_A$ and $k-i\in S_B$; in this case its \emph{type} is
$(\pi_A(i),\pi_B(k-i))$. Let $V_k$ be the set of valid $i\in I_k$, and $V_k^{s,t}$ the set of valid $i$ of
type $(s,t)$, so that $V_k$ is the disjoint union of the sets $V_k^{s,t}$.
Let $\mathcal K^*=\set{k: V_k\neq\emptyset}$; clearly $C[k]=\infty$ if and only if
$k\notin\mathcal K^*$.

\emph{Parameters.}
In this subsection, the random prime $p$ also serves as the scale at which values are split into high and low parts;
this differs from \cref{sec:framework}, where the scale $D$ and the hashing prime $p$ are separate parameters.
Apart from this, the notation of this subsection is independent of \cref{sec:framework,sec:nearly-2}. The
additional moduli $q_\lambda$ introduced in \cref{sec:a-algo} filter high-part sums at individual bit levels.
The target time is already quadratic when $R$ is a sufficiently large fraction of $n$, so we handle that case
directly; the numerical constants below provide slack for later divisibility estimates.
Let $U_*=\max\set{U,n}=O(n)$. If $n<2^{22}R$, we use the naive $O(n^2)=O(n^{1.5}\sqrt R)$ algorithm; so assume
$n\ge 2^{22}R$, hence $U_*\ge n\ge 2^{22}R$ and $U_*\ge 2^{22}$.
Let $\mathsf{P}=\ceil{\sqrt{RU_*}}$; then $64\le\sqrt{U_*}\le \mathsf{P}\le U_*/1024$.
The polynomial products at levels without filtering take $\Ot(n\mathsf{P})$ time (\cref{lem:a-polycost}) and
tracking segments takes $\Ot(nRU_*/\mathsf{P})$ time (\cref{lem:a-track}); this choice of $\mathsf{P}$ makes the two
equal.
Let $p$ be a uniformly random prime in $[\mathsf{P},2\mathsf{P}]$ and let $h$ be the integer with $2^{h-1}<p\le2^{h}$; thus
$2^h<2p\le 4\mathsf{P}$.

\begin{figure}[t]
\centering
\begin{tikzpicture}[x=0.75cm,y=0.32cm]
  \draw[->] (-0.5,0) -- (8.6,0) node[right,font=\scriptsize] {$i$};
  \draw[->] (-0.5,0) -- (-0.5,10.5);
  \foreach \i in {0,...,7} { \node[below,font=\scriptsize] at (\i,0) {$\i$}; }
  \foreach \i/\v in {0/6,2/1,3/9,5/3,7/4} { \fill[black!35] (\i,\v) circle (2.4pt); }
  \draw[thick,NavyBlue] (-0.4,2) -- (1.5,2) -- (1.5,5) -- (4.5,5) -- (4.5,7) -- (6.5,7) -- (6.5,9.8) -- (7.4,9.8);
  \node[NavyBlue,font=\scriptsize,right] at (7.4,9.8) {$U_*$};
  \foreach \i/\v in {1/2,4/5,6/7} { \fill[NavyBlue] (\i,\v) circle (3pt); }
\end{tikzpicture}
\caption{A non-decreasing part $P_s=\set{1,4,6}$ (dark dots) interleaved with entries of other parts (grey dots), and
its filled sequence $\bar A_s=(2,2,5,5,5,7,7,U_*)$ (step function). Filling lets the segments of $P_s$ ignore the
indices of other parts; contributions are still only counted for indices in $P_s$.}
\label{fig:filled}
\end{figure}

\emph{Filled sequences.}
For every part $P_s$ and $i\in[n]$, if $\set{i'\in P_s: i'\ge i}$ is nonempty, let
$\bar A_s[i]=A[\min\set{i'\in P_s: i'\ge i}]$.
Otherwise, let $\bar A_s[i]=U_*$ if $P_s$ is non-decreasing, and $\bar A_s[i]=0$ if $P_s$ is non-increasing.
Then $\bar A_s$ is monotone on all of $[n]$, in the same direction as $A$ on $P_s$; it has entries in
$[0,U_*]$, and $\bar A_s[i]=A[i]$ for $i\in P_s$. Define $\bar B_t$ analogously.
The filled values only define segments; all counted contributions come from the original finite entries
(see \cref{fig:filled} for an example with $n=8$).
For $k\in\mathcal K^*$, parts $s,t$, and $i\in I_k$, let
\[
\mathrm{val}_{s,t}(k,i)=\bar A_s[i]+\bar B_t[k-i]\in[0,2U_*],\qquad \Delta_{s,t}(k,i)=\mathrm{val}_{s,t}(k,i)-C[k]\in[-2U_*,2U_*].
\]
For $i\in V_k^{s,t}$ we have $\mathrm{val}_{s,t}(k,i)=A[i]+B[k-i]$; in particular
$C[k]=\min_{s,t}\min_{i\in V_k^{s,t}}\mathrm{val}_{s,t}(k,i)$. The quantity $\Delta_{s,t}(k,i)$ does not depend on any random choices and is used only in the analysis.

\emph{High and low parts.}
Let $\widetilde A_s[i]=\floor{\bar A_s[i]/p}$ and $A^*_s[i]=\bar A_s[i]-p\widetilde A_s[i]\in[0,p-1]$, and
define $\widetilde B_t,B^*_t$ analogously. For valid $i$ of type $(s,t)$ we abbreviate
$\widetilde A[i]=\widetilde A_s[i]$, $A^*[i]=A^*_s[i]$, $\widetilde B[k-i]=\widetilde B_t[k-i]$ and
$B^*[k-i]=B^*_t[k-i]$; these depend only on $A[i]$ and $B[k-i]$.
For $k\in\mathcal K^*$ let $H[k]=\min_{i\in V_k}(\widetilde A[i]+\widetilde B[k-i])$, and for $g\in\set{0,1}$ let
\begin{gather*}
T_g(k)=\set{i\in V_k:\ \widetilde A[i]+\widetilde B[k-i]=H[k]+g},\\
\Gamma_g[k]=\min_{i\in T_g(k)}\bigl(A^*[i]+B^*[k-i]\bigr),\qquad
C_g[k]=p(H[k]+g)+\Gamma_g[k],
\end{gather*}
with $\Gamma_g[k]=C_g[k]=\infty$ if $T_g(k)=\emptyset$. The elements of $T_g(k)$ are called \emph{true} for $g$. Thus $C_g[k]$ is the optimum restricted to the valid pairs
whose coarse sum is $H[k]+g$. The algorithm refines these candidate optima without enumerating the sets $T_g(k)$, and discards a candidate once it cannot attain $C[k]$.

\begin{lemma}
\label{lem:a-groups}
For $k\in\mathcal K^*$ we have $T_0(k)\neq\emptyset$, $C_g[k]=\min_{i\in T_g(k)}(A[i]+B[k-i])$, and
$C[k]=\min\set{C_0[k],C_1[k]}$.
\end{lemma}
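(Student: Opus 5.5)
The three claims follow directly from the definitions in the setup; the only fact needed about the filled sequences is that $\bar A_s[i]=A[i]$ for $i\in P_s$ and $\bar B_t[j]=B[j]$ for $j\in Q_t$. I would prove them in the order stated.

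\emph{First claim.} Fix $k\in\mathcal K^*$, so $V_k$ is nonempty and finite. Then $H[k]$ is the minimum of finitely many integers and is attained by some $i\in V_k$. By the definition of $T_0(k)$, this $i$ lies in $T_0(k)$, so $T_0(k)\neq\emptyset$.

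\emph{Second claim.} Let $i\in T_g(k)$ have type $(s,t)$. By the abbreviations in the setup,
\[
A[i]=\bar A_s[i]=p\widetilde A[i]+A^*[i]
\qquad\text{and}\qquad
B[k-i]=\bar B_t[k-i]=p\widetilde B[k-i]+B^*[k-i].
\]
Adding these and using $\widetilde A[i]+\widetilde B[k-i]=H[k]+g$ gives
\[
A[i]+B[k-i]=p(H[k]+g)+A^*[i]+B^*[k-i].
\]
Taking the minimum over $i\in T_g(k)$ yields $C_g[k]=\min_{i\in T_g(k)}(A[i]+B[k-i])$. If $T_g(k)=\emptyset$, both sides equal $\infty$ by convention.

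\emph{Third claim.} This is the analogue of \cref{lem:coarse} with $D=p$, applied only to valid pairs. Since $T_0(k),T_1(k)\subseteq V_k$ and $C[k]=\min_{i\in V_k}(A[i]+B[k-i])$, we get $C[k]\le\min\{C_0[k],C_1[k]\}$. For the reverse inequality, use the low parts $A^*,B^*\in[0,p-1]$. By the second claim and the first claim, some element of $T_0(k)$ has value at most $pH[k]+2p-2$, so $C[k]\le pH[k]+2p-2$. On the other hand, any valid $i$ with coarse sum at least $H[k]+2$ has value at least $p(H[k]+2)>C[k]$, so it cannot be a witness. Every valid $i$ has coarse sum at least $H[k]$ by the definition of $H[k]$. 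Hence every witness of $C[k]$ lies in $T_0(k)\cup T_1(k)$, which gives $C[k]\ge\min\{C_0[k],C_1[k]\}$.

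The only point that needs care is the bookkeeping: the filled sequences agree with the original entries on valid indices, so the high/low decomposition applies exactly, independent of the type $(s,t)$. Once that is checked, no step is a real obstacle.
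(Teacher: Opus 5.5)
Your proof is correct and follows the paper's argument. The decomposition $A[i]+B[k-i]=p(\widetilde A[i]+\widetilde B[k-i])+A^*[i]+B^*[k-i]$, with low-part sum in $[0,2p-2]$, gives the middle identity, and comparing a pair attaining $H[k]$ (value below $p(H[k]+2)$) with pairs of coarse sum at least $H[k]+2$ confines all witnesses to $T_0(k)\cup T_1(k)$; you additionally make explicit the nonemptiness of $T_0(k)$ and the agreement of filled and original values on valid indices, which the paper leaves implicit.
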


\begin{proof}
For valid $i$ we have $A[i]+B[k-i]=p(\widetilde A[i]+\widetilde B[k-i])+A^*[i]+B^*[k-i]$ with
$0\le A^*[i]+B^*[k-i]\le 2p-2$, which gives the middle identity. A valid pair attaining $H[k]$ has value less
than $p(H[k]+2)$, while valid pairs with $\widetilde A[i]+\widetilde B[k-i]\ge H[k]+2$ have value at least
$p(H[k]+2)$. So all witnesses of $C[k]$ lie in $T_0(k)\cup T_1(k)$.
\end{proof}

For $\lambda\in[0,h]$ and valid $i$ let $A^{(\lambda)}[i]=\floor{A^*[i]/2^\lambda}$,
$B^{(\lambda)}[k-i]=\floor{B^*[k-i]/2^\lambda}$, and
$S^{(\lambda)}(k,i)=A^{(\lambda)}[i]+B^{(\lambda)}[k-i]$; in particular, $S^{(h)}\equiv0$. The filled versions
$A^{(\lambda)}_s[i]=\floor{A^*_s[i]/2^\lambda}$ and $B^{(\lambda)}_t[j]$ are defined for all indices.
For $\lambda<h$ let $a_\lambda[i]=A^{(\lambda)}[i]-2A^{(\lambda+1)}[i]\in\set{0,1}$ and
$b_\lambda[j]=B^{(\lambda)}[j]-2B^{(\lambda+1)}[j]\in\set{0,1}$, so that
\begin{equation}
\label{eq:a-bits}
S^{(\lambda)}(k,i)=2S^{(\lambda+1)}(k,i)+a_\lambda[i]+b_\lambda[k-i].
\end{equation}
If $T_g(k)\ne\emptyset$, let $\Gamma^{(\lambda)}_g[k]=\min_{i\in T_g(k)}S^{(\lambda)}(k,i)$; thus
$\Gamma^{(0)}_g=\Gamma_g$ and $\Gamma^{(h)}_g=0$.
Since each summand is rounded before minimizing, $\Gamma^{(\lambda)}_g[k]$ need not equal
$\floor{\Gamma_g[k]/2^\lambda}$, and the minimizing pair may differ between levels.

\begin{lemma}
\label{lem:a-levels}
Let $k\in\mathcal K^*$ with $T_g(k)\neq\emptyset$, and $\lambda\in[0,h]$.
\begin{enumerate}[label=(\alph*)]
\item For every valid $i$: $2^\lambda S^{(\lambda)}(k,i)\le A^*[i]+B^*[k-i]\le 2^\lambda S^{(\lambda)}(k,i)+2^{\lambda+1}-2$.
The same holds for the filled quantities of any type $(s,t)$ and any $i\in I_k$.
\item $2^\lambda\Gamma^{(\lambda)}_g[k]\le\Gamma_g[k]\le 2^\lambda\Gamma^{(\lambda)}_g[k]+2^{\lambda+1}-2$.
\item If $\lambda<h$: $2\Gamma^{(\lambda+1)}_g[k]\le\Gamma^{(\lambda)}_g[k]\le 2\Gamma^{(\lambda+1)}_g[k]+2$, and every
$i\in T_g(k)$ with $S^{(\lambda)}(k,i)=\Gamma^{(\lambda)}_g[k]$ has
$S^{(\lambda+1)}(k,i)\in\set{\Gamma^{(\lambda+1)}_g[k],\Gamma^{(\lambda+1)}_g[k]+1}$.
\end{enumerate}
\end{lemma}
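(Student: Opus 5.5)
The plan is to derive all three parts from the elementary floor inequality $2^\lambda\floor{x/2^\lambda}\le x\le 2^\lambda\floor{x/2^\lambda}+2^\lambda-1$, valid for every nonnegative integer $x$, together with the bit recursion~\eqref{eq:a-bits}. Part~(c) will then follow from part~(b) at two consecutive levels, applied to suitable elements of $T_g(k)$.

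For part~(a), I apply the floor inequality separately to $x=A^*[i]$ and to $x=B^*[k-i]$, using $A^{(\lambda)}[i]=\floor{A^*[i]/2^\lambda}$ and the analogous identity for $B$, and add the two inequalities. This gives $2^\lambda S^{(\lambda)}(k,i)\le A^*[i]+B^*[k-i]\le 2^\lambda S^{(\lambda)}(k,i)+2(2^\lambda-1)$, which is exactly the claim. The filled version is identical: $A^{(\lambda)}_s$ and $B^{(\lambda)}_t$ are defined as the same floors of $A^*_s$ and $B^*_t$, and these are nonnegative integers for every index.

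For part~(b), I take minima over $i\in T_g(k)$ in~(a). The lower bound holds because $2^\lambda S^{(\lambda)}(k,i)\le A^*[i]+B^*[k-i]$ for each $i$, so the minima satisfy $2^\lambda\Gamma^{(\lambda)}_g[k]\le\Gamma_g[k]$. For the upper bound, I choose $i_0\in T_g(k)$ attaining $\Gamma^{(\lambda)}_g[k]$. Then $\Gamma_g[k]\le A^*[i_0]+B^*[k-i_0]\le 2^\lambda\Gamma^{(\lambda)}_g[k]+2^{\lambda+1}-2$.

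For part~(c), first note that $S^{(\lambda)}(k,i)\ge 2S^{(\lambda+1)}(k,i)$ for each $i$ by~\eqref{eq:a-bits}, since the bits are nonnegative. Minimizing over $T_g(k)$ gives $\Gamma^{(\lambda)}_g[k]\ge 2\Gamma^{(\lambda+1)}_g[k]$. For the upper bound, I take $i_1\in T_g(k)$ attaining $\Gamma^{(\lambda+1)}_g[k]$. Then $\Gamma^{(\lambda)}_g[k]\le S^{(\lambda)}(k,i_1)\le 2\Gamma^{(\lambda+1)}_g[k]+2$, because the two bits sum to at most $2$.

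It remains to bound $S^{(\lambda+1)}(k,i)$ for an $i\in T_g(k)$ with $S^{(\lambda)}(k,i)=\Gamma^{(\lambda)}_g[k]$; this is the only step needing a short argument. Since $i\in T_g(k)$, minimality gives $S^{(\lambda+1)}(k,i)\ge\Gamma^{(\lambda+1)}_g[k]$. Combining~\eqref{eq:a-bits} with the upper bound just shown gives $2S^{(\lambda+1)}(k,i)\le S^{(\lambda)}(k,i)=\Gamma^{(\lambda)}_g[k]\le 2\Gamma^{(\lambda+1)}_g[k]+2$. Hence $S^{(\lambda+1)}(k,i)\le\Gamma^{(\lambda+1)}_g[k]+1$. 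Since everything is an integer, $S^{(\lambda+1)}(k,i)$ lies in $\set{\Gamma^{(\lambda+1)}_g[k],\Gamma^{(\lambda+1)}_g[k]+1}$.
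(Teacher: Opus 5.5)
Your proof is correct and follows the paper's argument essentially line by line: (a) comes from the floor inequality applied to each summand, (b) from minimizing over $T_g(k)$, and (c) from the pointwise bounds $2S^{(\lambda+1)}\le S^{(\lambda)}\le 2S^{(\lambda+1)}+2$ given by \eqref{eq:a-bits}, followed by the same sandwich argument for a minimizer of $\Gamma^{(\lambda)}_g[k]$. Your opening remark that (c) would follow from (b) at two consecutive levels does not match what you actually do; you instead use \eqref{eq:a-bits} directly, exactly as the paper does.
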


\begin{proof}
(a) follows from $2^\lambda\floor{x/2^\lambda}\le x\le 2^\lambda\floor{x/2^\lambda}+2^\lambda-1$ for $x\ge 0$.
(b) follows by minimizing (a) over $T_g(k)$.
(c) By~\eqref{eq:a-bits}, $2S^{(\lambda+1)}\le S^{(\lambda)}\le 2S^{(\lambda+1)}+2$ pointwise; minimizing over
$T_g(k)$ gives the first claim. For a minimizer $i$ of $\Gamma^{(\lambda)}_g[k]$ we get
$2S^{(\lambda+1)}(k,i)\le\Gamma^{(\lambda)}_g[k]\le2\Gamma^{(\lambda+1)}_g[k]+2$, and
$S^{(\lambda+1)}(k,i)\ge\Gamma^{(\lambda+1)}_g[k]$ since $i\in T_g(k)$.
\end{proof}

\subsubsection{Segments}
\label{sec:a-segments}

Within a segment the high parts are constant, so the filled low parts are monotone and segments can be refined
by binary search.

\begin{definition}
Let $k\in\mathcal K^*$, let $(s,t)$ be a pair of parts, and let $\lambda\in[0,h]$. A \emph{level-$\lambda$ segment
of $(k,s,t)$} is a maximal subinterval of $I_k$ on which the four quantities
$\widetilde A_s[i]$, $\widetilde B_t[k-i]$, $A^{(\lambda)}_s[i]$, $B^{(\lambda)}_t[k-i]$ are constant.
For each $(k,s,t)$ these segments partition $I_k$. For a segment $\sigma$ we write
$\chi(\sigma)=\widetilde A_s[i]+\widetilde B_t[k-i]$ and $S^{(\lambda)}(\sigma)=A^{(\lambda)}_s[i]+B^{(\lambda)}_t[k-i]$
for any $i\in\sigma$. The segment is \emph{false for $g$} if $\chi(\sigma)\neq H[k]+g$.
If $T_g(k)\ne\emptyset$, the \emph{false-near set} $F^{(\lambda)}_g(k)$ is the set of all triples $(s,t,\sigma)$
such that $\sigma$ is a level-$\lambda$ segment of $(k,s,t)$ that is false for $g$ and satisfies
$|S^{(\lambda)}(\sigma)-\Gamma^{(\lambda)}_g[k]|\le1$.
\end{definition}

The symmetric window of radius~$1$ ensures that every retained level-$\lambda$ segment has a retained parent
at level $\lambda+1$ (\cref{lem:a-contain}). To compute the level-$\lambda$ minimum, only the parent values
$\Gamma^{(\lambda+1)}_g[k]$ and $\Gamma^{(\lambda+1)}_g[k]+1$ are needed (\cref{lem:a-levels}).

If $i\in V_k^{s,t}$ lies in the segment $\sigma$ of $(k,s,t)$, then the filled quantities agree with the actual
ones, so $\chi(\sigma)=\widetilde A[i]+\widetilde B[k-i]$ and $S^{(\lambda)}(\sigma)=S^{(\lambda)}(k,i)$.
In particular, every $i\in V_k^{s,t}\setminus T_g(k)$ lies in a segment of $(k,s,t)$ that is false for $g$, and
no element of $T_g(k)\cap V_k^{s,t}$ does. Segments may contain non-valid indices and indices of other types; not splitting segments at such indices keeps
their number small.

\begin{lemma}[Splitting]
\label{lem:a-split}
Let $\lambda<h$ and let $\sigma$ be a level-$(\lambda+1)$ segment of $(k,s,t)$. Then $\sigma$ is the disjoint
union of at most three level-$\lambda$ segments of $(k,s,t)$, which can be found by two binary searches.
Every level-$\lambda$ segment of $(k,s,t)$ is contained in exactly one level-$(\lambda+1)$ segment of $(k,s,t)$.
\end{lemma}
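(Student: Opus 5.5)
On a level-$(\lambda+1)$ segment $\sigma$ of $(k,s,t)$, the high parts $\widetilde A_s[i]$ and $\widetilde B_t[k-i]$ are constant, and so are $A^{(\lambda+1)}_s[i]$ and $B^{(\lambda+1)}_t[k-i]$. The plan is to show that on $\sigma$ each of the two new bits $a$ and $b$ changes at most once. The level-$\lambda$ segments inside $\sigma$ are then cut out by at most two breakpoints, and each breakpoint is found by a binary search.

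The first step is to establish monotonicity within $\sigma$. The filled sequence $\bar A_s$ is monotone on all of $[n]$. On $\sigma$ we have $\bar A_s[i]=p\widetilde A_s[i]+A^*_s[i]$ with $\widetilde A_s[i]$ constant, so $A^*_s[i]$ is monotone in $i$ on $\sigma$, in the direction of $P_s$. Hence $A^{(\lambda)}_s[i]=\floor{A^*_s[i]/2^\lambda}$ is also monotone on $\sigma$. Its bit $A^{(\lambda)}_s[i]-2A^{(\lambda+1)}_s[i]$ is therefore monotone as well, because $A^{(\lambda+1)}_s$ is constant there, and it takes values in $\set{0,1}$. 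So there is at most one position in $\sigma$ where $A^{(\lambda)}_s$ changes value.

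The same argument applies to $B$. The map $i\mapsto k-i$ is monotone, so $B^*_t[k-i]$ is monotone on $\sigma$ (with reversed direction), and $B^{(\lambda)}_t[k-i]$ also changes at most once on $\sigma$.

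The two change points split $\sigma$ into at most three maximal subintervals. On each of them all four quantities are constant, so each is contained in a level-$\lambda$ segment. Adjacent pieces differ in $A^{(\lambda)}_s$ or in $B^{(\lambda)}_t$, so no level-$\lambda$ segment extends across a change point. A level-$\lambda$ segment also cannot extend beyond $\sigma$, because constancy of the level-$\lambda$ quantities implies constancy of the level-$(\lambda+1)$ quantities ($A^{(\lambda+1)}=\floor{A^{(\lambda)}/2}$). Maximality of $\sigma$ therefore shows that the pieces are exactly level-$\lambda$ segments.

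The same implication gives the containment claim. Any level-$\lambda$ segment lies inside some maximal interval of constant level-$(\lambda+1)$ quantities, which is a level-$(\lambda+1)$ segment. It lies in exactly one such segment because these segments partition $I_k$.

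Each change point is the first index in $\sigma$ at which the relevant monotone $0/1$ bit differs from its value at the left endpoint of $\sigma$. It can be located by binary search over $\sigma$, evaluating $\bar A_s$ or $\bar B_t$ at $O(\log n)$ positions. Evaluating $\bar A_s[i]$ needs the successor of $i$ in $P_s$; this is assumed to be available from precomputed successor arrays.

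The only real subtlety is checking that constancy of the high part on $\sigma$ makes the low part monotone. This holds because $\widetilde A_s$ is constant on $\sigma$ and $A^*_s$ is the remainder; it would fail if $\widetilde A_s$ could vary within $\sigma$.
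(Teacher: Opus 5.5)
Your proposal is correct and follows essentially the same argument as the paper. In both, constancy of the high part on $\sigma$ makes the low part monotone, so each new bit is a monotone $0/1$ function that changes at most once. Maximality and containment then follow because the level-$(\lambda+1)$ quantities are functions of the level-$\lambda$ quantities.

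The only difference is cosmetic: the paper evaluates filled values by binary search in the sorted list of the part, rather than through precomputed successor arrays. This does not affect the lemma.
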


\begin{proof}
On $\sigma$, $\widetilde A_s[i]$ is constant, so $A^*_s[i]=\bar A_s[i]-p\widetilde A_s[i]$ is monotone in $i$, and
hence so is $A^{(\lambda)}_s[i]$, which takes values in $\set{2c,2c+1}$, where $c$ is the constant value of
$A^{(\lambda+1)}_s=\floor{A^{(\lambda)}_s/2}$ on $\sigma$. Thus the bit $A^{(\lambda)}_s[i]-2c$ is a monotone
$0/1$ function of $i\in\sigma$, and it changes at most once. Similarly, $B^{(\lambda)}_t[k-i]-2c'$ changes at
most once on $\sigma$. (The directions of these changes are irrelevant.) Hence $\sigma$ is the union of at most
three consecutive maximal subintervals on which all four level-$\lambda$ quantities are constant; they are
maximal since crossing an endpoint of $\sigma$ changes a level-$(\lambda+1)$ quantity, which is a function of the
level-$\lambda$ quantities. The last claim holds for the same reason.
\end{proof}

\begin{lemma}[Containment]
\label{lem:a-contain}
Let $\lambda<h$ and $T_g(k)\neq\emptyset$. If $(s,t,\tau)\in F^{(\lambda)}_g(k)$ and $\sigma$ is the
level-$(\lambda+1)$ segment of $(k,s,t)$ containing $\tau$, then $(s,t,\sigma)\in F^{(\lambda+1)}_g(k)$.
\end{lemma}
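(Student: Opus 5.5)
We need to check three things about $\sigma$: that it is a level-$(\lambda+1)$ segment of $(k,s,t)$, that it is false for $g$, and that $|S^{(\lambda+1)}(\sigma)-\Gamma^{(\lambda+1)}_g[k]|\le1$. The first holds by choice of $\sigma$, which exists and is unique by \cref{lem:a-split}. Note that $T_g(k)\neq\emptyset$, so both $\Gamma^{(\lambda)}_g[k]$ and $\Gamma^{(\lambda+1)}_g[k]$ are defined, and the false-near sets at both levels make sense.

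\emph{Falseness.} The high parts $\widetilde A_s[i]$ and $\widetilde B_t[k-i]$ are among the four quantities that are constant on every level-$\lambda$ segment and on every level-$(\lambda+1)$ segment. Since $\tau\subseteq\sigma$, evaluating at any $i\in\tau$ gives $\chi(\sigma)=\chi(\tau)$. Hence $\chi(\sigma)\neq H[k]+g$, so $\sigma$ is false for $g$.

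\emph{Nearness.} Pick any $i\in\tau$, so $i\in\sigma$ as well. By definition $A^{(\lambda+1)}_s[i]=\floor{A^{(\lambda)}_s[i]/2}$, and similarly for $B$. The filled analogue of~\eqref{eq:a-bits} therefore gives
\[
2S^{(\lambda+1)}(\sigma)\le S^{(\lambda)}(\tau)\le 2S^{(\lambda+1)}(\sigma)+2 .
\]
\Cref{lem:a-levels}(c) gives $2\Gamma^{(\lambda+1)}_g[k]\le\Gamma^{(\lambda)}_g[k]\le2\Gamma^{(\lambda+1)}_g[k]+2$. Write $x=S^{(\lambda+1)}(\sigma)$ and $y=\Gamma^{(\lambda+1)}_g[k]$. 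The hypothesis $|S^{(\lambda)}(\tau)-\Gamma^{(\lambda)}_g[k]|\le1$ then yields
\[
2x\le S^{(\lambda)}(\tau)\le \Gamma^{(\lambda)}_g[k]+1\le 2y+3,
\]
so $x\le y+3/2$, and since $x,y$ are integers, $x\le y+1$. Symmetrically,
\[
2y\le\Gamma^{(\lambda)}_g[k]\le S^{(\lambda)}(\tau)+1\le 2x+3,
\]
so $y\le x+1$. Together these give $|x-y|\le1$.

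The only step needing care is this integer rounding of the half-integer slack. The window radius $1$ is exactly what makes $3/2$ round down to $1$, so the argument closes without loss.
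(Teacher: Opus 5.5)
Your proof is correct and follows the paper's argument. Falseness comes from $\chi(\sigma)=\chi(\tau)$, and nearness from sandwiching both $S^{(\lambda)}(\tau)$ and $\Gamma^{(\lambda)}_g[k]$ between twice their parent-level values and that plus $2$, then using integrality to turn $|x-y|\le 3/2$ into $|x-y|\le 1$.
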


\begin{proof}
$\chi(\sigma)=\chi(\tau)$, so $\sigma$ is false for $g$. Let $v=S^{(\lambda+1)}(\sigma)$ and
$\Gamma=\Gamma^{(\lambda+1)}_g[k]$. By the filled analogue of~\eqref{eq:a-bits} and \cref{lem:a-levels}(c),
$2v\le S^{(\lambda)}(\tau)\le\Gamma^{(\lambda)}_g[k]+1\le2\Gamma+3$ and
$2v+2\ge S^{(\lambda)}(\tau)\ge\Gamma^{(\lambda)}_g[k]-1\ge2\Gamma-1$; since $v$ is an integer, $|v-\Gamma|\le1$.
\end{proof}

\subsubsection{The Algorithm}
\label{sec:a-algo}

\emph{The aliasing moduli.}
We choose $q_\lambda$ to balance polynomial multiplication against scanning: up to logarithmic factors, their
costs at level $\lambda$ will be $n2^{h-\lambda}q_\lambda$ and $Rn^2 2^\lambda/(\mathsf{P}q_\lambda)$, and since
$2^h=\Theta(\mathsf{P})$, this suggests $q_\lambda\approx\sqrt{nR}\,2^\lambda/\mathsf{P}$.
Formally, for $\lambda\in[0,h-1]$ let $\theta_\lambda=\sqrt{nR}\,2^\lambda/\mathsf{P}$. If $\theta_\lambda\ge32$, let
$\mathsf{Q}_\lambda=\ceil{\theta_\lambda}$ and let $q_\lambda$ be a uniformly random prime in $[\mathsf{Q}_\lambda,2\mathsf{Q}_\lambda]$;
otherwise let $q_\lambda=1$, so that every coarse sum passes the filter. All random choices are independent.

The algorithm maintains, for $\lambda=h,h-1,\dots,0$, a set $\mathrm{Act}_\lambda\subseteq\mathcal K^*\times\set{0,1}$
of \emph{active} pairs, numbers $\gamma^{(\lambda)}_g[k]$, and lists $L^{(\lambda)}_g(k)$ of triples
$(s,t,\sigma)$, where $\sigma$ is a level-$\lambda$ segment of $(k,s,t)$, stored with its endpoints,
$\chi(\sigma)$ and $S^{(\lambda)}(\sigma)$.
An active pair $(k,g)$ represents a coarse sum $H[k]+g$ still under consideration for output $k$.
By the invariant below, the stored value $\gamma^{(\lambda)}_g[k]$ equals the rounded minimum
$\Gamma^{(\lambda)}_g[k]$ whenever $T_g(k)\neq\emptyset$. At level $h$, a candidate with
$T_g(k)=\emptyset$ is initialized with $0$ and all its false segments; the first refinement removes it.
For other candidates, $L^{(\lambda)}_g(k)$ stores the false segments near the rounded minimum.

\begin{enumerate}[label=Step \arabic*.,leftmargin=*]
\item Compute the arrays $\pi_A,\pi_B$, the high and low parts, and $H$ (and hence $\mathcal K^*$).
The sequence $H$ is the $(\min,+)$-convolution of $(\widetilde A[i])_{i}$ and $(\widetilde B[j])_j$, with $+\infty$ at
non-finite positions; their finite entries are at most $U_*/p$, so we can use \cref{lem:fft}.
For every part, store the sorted list of positions at which $\widetilde A_s$ (resp.\ $\widetilde B_t$) changes;
there are at most $U_*/p+1$ of them per part, and the filled values can be evaluated by binary search in the
sorted list of the part.
\item (Level $h$.) Let $\mathrm{Act}_h=\mathcal K^*\times\set{0,1}$ and $\gamma^{(h)}_g[k]=0$, and let
$L^{(h)}_g(k)$ be the list of all triples $(s,t,\sigma)$ such that $\sigma$ is a level-$h$ segment of $(k,s,t)$
that is false for $g$.
\item For $\lambda=h-1,\dots,0$, with $q=q_\lambda$:
\begin{enumerate}[label=(\alph*)]
\item Compute, for all $k\in\mathcal K$, $d\in[0,2^{h-\lambda}]$, $c\in\set{0,1,2}$ and $\rho\in[0,2q-2]$, the number
\begin{multline*}
N(k,d,c,\rho)=\Bigl|\Bigl\{i\in V_k:\ S^{(\lambda+1)}(k,i)=d,\ a_\lambda[i]+b_\lambda[k-i]=c,\\
(\widetilde A[i]\bmod q)+(\widetilde B[k-i]\bmod q)=\rho\Bigr\}\Bigr|
\end{multline*}
as the coefficient of $z^ky^dx^cw^\rho$ in the product
\[
\Bigl(\sum_{i\in S_A}z^iy^{A^{(\lambda+1)}[i]}x^{a_\lambda[i]}w^{\widetilde A[i]\bmod q}\Bigr)\cdot
\Bigl(\sum_{j\in S_B}z^jy^{B^{(\lambda+1)}[j]}x^{b_\lambda[j]}w^{\widetilde B[j]\bmod q}\Bigr).
\]
These polynomials contain all finite entries, irrespective of their parts.
We use the convention that $N(k,d,c,\rho)=0$ for all arguments outside the stated ranges.
\item For each $(k,g)\in\mathrm{Act}_{\lambda+1}$, perform the following operations.

Let $\Gamma=\gamma^{(\lambda+1)}_g[k]$, $\mathcal D=\set{\Gamma,\Gamma+1}$ and
$r_0=(H[k]+g)\bmod q$. For $d\in\mathcal D$ and $c\in\set{0,1,2}$, let $N_q(k,d,c)=N(k,d,c,r_0)+N(k,d,c,r_0+q)$, omitting
the second term if $q=1$. (The sum of two residues lies in $[0,2q-2]$, so $r_0$ and $r_0+q$ are the only values
congruent to $r_0$ modulo $q$ that can occur.)

Split every $(s,t,\sigma)\in L^{(\lambda+1)}_g(k)$ into its level-$\lambda$ pieces
$\tau$ using \cref{lem:a-split}, and record $\chi(\tau)=\chi(\sigma)$ and
$S^{(\lambda)}(\tau)=2S^{(\lambda+1)}(\sigma)+c_\tau$, where $c_\tau\in\set{0,1,2}$ is the sum of the two filled bits
on $\tau$.

Call $(s,t,\sigma)$ \emph{aliased} if $q$ divides
$\chi(\sigma)-H[k]-g$. \emph{Scan} every piece $\tau$ of every aliased triple with $S^{(\lambda+1)}(\sigma)\in\mathcal D$,
that is, compute $|\tau\cap V^{s,t}_k|$ by testing for each $i\in\tau$ whether $i\in P_s$ and $k-i\in Q_t$. For
$d\in\mathcal D$ and $c\in\set{0,1,2}$, let $N^{\neg}(k,d,c)$ be the sum of $|\tau\cap V_k^{s,t}|$ over all scanned pieces with
$S^{(\lambda+1)}(\sigma)=d$ and $c_\tau=c$.

The difference $N_q(k,d,c)-N^\neg(k,d,c)$ counts the true pairs with parent-level value $d$ and bit sum $c$ (see
\cref{lem:a-inv}), each of which has next-level value $2d+c$. If $N_q(k,d,c)=N^\neg(k,d,c)$ for all $d\in\mathcal D$ and $c\in\set{0,1,2}$, declare
$(k,g)$ inactive; otherwise let
\[
\gamma^{(\lambda)}_g[k]=\min\set{2d+c:\ d\in\mathcal D,\ c\in\set{0,1,2},\ N_q(k,d,c)>N^\neg(k,d,c)}.
\]
\item (Pruning.) For each $k$ for which some pair $(k,g)$ is still active, let
$W_g=p(H[k]+g)+2^\lambda\gamma^{(\lambda)}_g[k]$ for each such $g$, and let $W^*_k=\min_gW_g$;
declare $(k,g)$ inactive if
$W_g>W^*_k+2^{\lambda+1}-2$. The remaining pairs form $\mathrm{Act}_\lambda$.
By \cref{lem:a-levels}, $C_g[k]\in[W_g,W_g+2^{\lambda+1}-2]$, so $W^*_k+2^{\lambda+1}-2$ is an upper bound on the best
remaining candidate, and a candidate whose lower bound exceeds it cannot attain $C[k]$. Pruning also keeps every
surviving $C_g[k]$ close to $C[k]$, which is needed in the running-time analysis.
\item For $(k,g)\in\mathrm{Act}_\lambda$, let $L^{(\lambda)}_g(k)$ be the list of pieces $(s,t,\tau)$ produced in (b)
from $L^{(\lambda+1)}_g(k)$ with $|S^{(\lambda)}(\tau)-\gamma^{(\lambda)}_g[k]|\le1$. This uses all pieces produced in~(b), including those that were not scanned: the aliasing filter only determines
which counts are needed at the current level, not which segments are retained.
\end{enumerate}
\item Output $C[k]=\infty$ for $k\notin\mathcal K^*$, and
$C[k]=\min\set{p(H[k]+g)+\gamma^{(0)}_g[k]:(k,g)\in\mathrm{Act}_0}$ for $k\in\mathcal K^*$.
\end{enumerate}

\subsubsection{Correctness}
\label{sec:a-correct}

We prove that an optimal candidate stays active and that each surviving candidate has the correct rounded
minimum and list of false segments.

\begin{lemma}[Invariant]
\label{lem:a-inv}
At level $h$: for every $(k,g)\in\mathrm{Act}_h$, the list $L^{(h)}_g(k)$ consists of all triples $(s,t,\sigma)$
such that $\sigma$ is a level-$h$ segment of $(k,s,t)$ that is false for $g$.
For every such $(k,g)$ with $T_g(k)\neq\emptyset$, we also have $L^{(h)}_g(k)=F^{(h)}_g(k)$ (since $S^{(h)}\equiv0=\Gamma^{(h)}_g[k]$),
$\gamma^{(h)}_g[k]=\Gamma^{(h)}_g[k]$, and $C_g[k]\le C[k]+2^{h+2}-4$.
For every $\lambda\in[0,h-1]$, after the iteration for $\lambda$:
\begin{enumerate}[label=(I\arabic*)]
\item for every $k\in\mathcal K^*$ there is $g$ with $(k,g)\in\mathrm{Act}_\lambda$ and $C_g[k]=C[k]$;
\item every $(k,g)\in\mathrm{Act}_\lambda$ has $T_g(k)\ne\emptyset$, $\gamma^{(\lambda)}_g[k]=\Gamma^{(\lambda)}_g[k]$, and
$L^{(\lambda)}_g(k)=F^{(\lambda)}_g(k)$;
\item every $(k,g)\in\mathrm{Act}_\lambda$ has $C_g[k]\le C[k]+2^{\lambda+2}-4$.
\end{enumerate}
These statements hold for every choice of $p$ and of the moduli $q_\lambda$. In particular, $\mathrm{Act}_\lambda$,
$\gamma^{(\lambda)}$ and $L^{(\lambda)}$ depend only on $p$: changing $q_\lambda$ changes the candidate counts and
the subtracted false counts by the same amount, so the refinement and pruning decisions are unaffected.
\end{lemma}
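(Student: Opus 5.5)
The plan is to argue by downward induction on $\lambda$, starting from the level-$h$ statement and proving (I1)--(I3) for $\lambda$ from the corresponding statements at level $\lambda+1$. For the purpose of the induction, (I1)--(I3) at level $h$ are read as the level-$h$ claims.

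\emph{Level $h$.} At level $h$ we have $S^{(h)}\equiv0$, so $\Gamma^{(h)}_g[k]=0=\gamma^{(h)}_g[k]$. Every false segment then has $|S^{(h)}(\sigma)-\Gamma^{(h)}_g[k]|=0\le1$, which gives $L^{(h)}_g(k)=F^{(h)}_g(k)$ directly by Step~2. For the value bound we use \cref{lem:a-groups}. Valid pairs in $T_0(k)$ have value at least $pH[k]$, and all valid pairs have value less than $p(H[k]+2)$. Hence $C_g[k]-C[k]<2p\le 2^{h+1}$, which is within $2^{h+2}-4$ since $h\ge 6$. (I1) at level $h$ is the statement that some $g$ with $C_g[k]=C[k]$ is active, and this holds by \cref{lem:a-groups} because $\mathrm{Act}_h$ contains everything.

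\emph{Inductive step, key claim.} Fix $(k,g)\in\mathrm{Act}_{\lambda+1}$ with $T_g(k)\neq\emptyset$ and write $\Gamma=\Gamma^{(\lambda+1)}_g[k]$. I would show that for $d\in\mathcal D$ and every $c$,
\[
N_q(k,d,c)-N^\neg(k,d,c)=\bigl|\{i\in T_g(k): S^{(\lambda+1)}(k,i)=d,\ a_\lambda[i]+b_\lambda[k-i]=c\}\bigr|.
\]
The count $N_q$ covers all valid $i$ with the given $(d,c)$ whose coarse sum is $\equiv H[k]+g\pmod q$. Each such $i$ of type $(s,t)$ that is not true lies in a unique level-$(\lambda+1)$ segment $\sigma$ of $(k,s,t)$. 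This segment is false, and the filled quantities on it agree with the actual ones, so $S^{(\lambda+1)}(\sigma)=d\in\mathcal D$, hence $|d-\Gamma|\le1$ and $\sigma\in F^{(\lambda+1)}_g(k)=L^{(\lambda+1)}_g(k)$ by induction. It is also aliased, since $\chi(\sigma)$ equals the coarse sum of $i$. Therefore $i$ is counted exactly once in $N^\neg$, in the piece $\tau\ni i$ with $c_\tau=c$. Conversely, the scanned pieces count only valid indices of the correct type that lie in false segments, and these are exactly the non-true ones. True $i$ pass the filter trivially.

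\emph{Consequences.} By \cref{lem:a-levels}(c), every minimizer of $\Gamma^{(\lambda)}_g[k]$ has its parent value in $\mathcal D$. The formula for $\gamma^{(\lambda)}_g[k]$ therefore returns $\Gamma^{(\lambda)}_g[k]$, and the pair is not declared inactive in (b). Candidates with $T_g(k)=\emptyset$ are only possible at level $h$; for them the same computation gives all differences $0$, so they become inactive at the first refinement, which gives (I2) for the nonemptiness part. Next, the list $L^{(\lambda)}_g(k)$ equals $F^{(\lambda)}_g(k)$ by \cref{lem:a-split} and \cref{lem:a-contain}: every member of $F^{(\lambda)}_g(k)$ is a piece of a retained parent, pieces of false segments are false, and Step (d) filters by exactly the window defining $F^{(\lambda)}$.

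\emph{Pruning.} Using \cref{lem:a-levels}(b), we get $C_g[k]\in[W_g,W_g+2^{\lambda+1}-2]$. Consider the optimal candidate $g^*$ from (I1) at level $\lambda+1$. It satisfies $W_{g^*}\le C[k]\le C_g[k]\le W_g+2^{\lambda+1}-2$ for every $g$, so $W_{g^*}\le W^*_k+2^{\lambda+1}-2$ and $g^*$ survives, which gives (I1). For (I3), any survivor satisfies $C_g[k]\le W_g+2^{\lambda+1}-2\le W^*_k+2^{\lambda+2}-4\le C[k]+2^{\lambda+2}-4$, using $W^*_k\le C[k]$.

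\emph{Independence of the moduli.} Finally, the decisions made in (b), (c) and (d) depend only on the differences $N_q-N^\neg$ and on the lists. By the key claim, these quantities are independent of $q$, so $\mathrm{Act}_\lambda$, $\gamma^{(\lambda)}$ and $L^{(\lambda)}$ depend only on $p$.

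The main obstacle is the key counting claim. It requires careful bookkeeping showing that each non-true valid index with parent value in $\mathcal D$ lies in exactly one scanned piece, which needs both that the windows of radius one contain $\mathcal D$ and that filled and actual values coincide on valid indices.
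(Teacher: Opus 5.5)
Your proof follows the paper's argument step by step. It uses the same downward induction and identifies $N_q-N^\neg$ with the count of true pairs for $d\in\mathcal D$, using uniqueness of the type and of the containing segment. It removes candidates with $T_g(k)=\emptyset$ at the first refinement, and the pruning computation and the use of \cref{lem:a-split,lem:a-contain} for the lists are the paper's as well.

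The one incorrect step is the level-$h$ value bound. Your claim that all valid pairs have value less than $p(H[k]+2)$ is false. Pairs with coarse sum at least $H[k]+2$ have value at least $p(H[k]+2)$, and even pairs in $T_1(k)$ can reach $p(H[k]+1)+2p-2$. Consequently your conclusion $C_g[k]-C[k]<2p$ fails for $g=1$. For example, suppose some $i\in T_0(k)$ has $A[i]+B[k-i]=pH[k]$, and every $i\in T_1(k)$ has $A^*[i]=B^*[k-i]=p-1$. Then $C_1[k]-C[k]=3p-2$.

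\cref{lem:a-groups} only gives $C_0[k]<p(H[k]+2)$. For general $g$, use $C_g[k]\le p(H[k]+g)+2p-2\le pH[k]+3p-2$ together with $C[k]\ge pH[k]$. This gives $C_g[k]-C[k]\le 3p-2\le 3\cdot 2^h-2\le 2^{h+2}-4$, which is why the invariant uses the constant $2^{h+2}$ rather than $2^{h+1}$. With this correction your proof is complete.
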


\begin{proof}
\emph{Level $h$.} The statement about the lists holds by Step~2. If $T_g(k)\neq\emptyset$ then
$\Gamma^{(h)}_g[k]=0=\gamma^{(h)}_g[k]$, and $C_g[k]\le p(H[k]+1)+2p-2$ while $C[k]\ge pH[k]$, so
$C_g[k]-C[k]\le3p-2\le2^{h+2}-4$.

\emph{Induction step.} Let $\lambda<h$, and assume the statements for level $\lambda+1$; for $\lambda+1=h$, (I1)
holds by \cref{lem:a-groups} since all pairs are active. Fix $(k,g)\in\mathrm{Act}_{\lambda+1}$ and let
$q=q_\lambda$, $d\in\mathcal D$ and $c\in\set{0,1,2}$. Since $(\widetilde A[i]\bmod q)+(\widetilde B[k-i]\bmod q)\in[0,2q-2]$
is congruent to $\widetilde A[i]+\widetilde B[k-i]$, and every element of $[0,2q-2]$ congruent to $r_0$ is either
$r_0$ or $r_0+q$ (an out-of-range second term contributes zero), the number $N_q(k,d,c)$ counts the $i\in V_k$ with $S^{(\lambda+1)}(k,i)=d$,
$a_\lambda[i]+b_\lambda[k-i]=c$ and $\widetilde A[i]+\widetilde B[k-i]\equiv H[k]+g\pmod q$. These $i$ are of two kinds:
those in $T_g(k)$, and the \emph{aliased false} ones, for which $\widetilde A[i]+\widetilde B[k-i]\neq H[k]+g$.

\emph{Case $T_g(k)\neq\emptyset$.} By induction, $\Gamma:=\gamma^{(\lambda+1)}_g[k]=\Gamma^{(\lambda+1)}_g[k]$ and
$L^{(\lambda+1)}_g(k)=F^{(\lambda+1)}_g(k)$. Let $d\in\mathcal D$, and let $i$ be aliased false with parameters $(d,c)$, of type
$(s,t)$. Then $i$ lies in a unique level-$(\lambda+1)$ segment $\sigma$ of $(k,s,t)$, which is false for $g$, aliased,
and has $S^{(\lambda+1)}(\sigma)=d$ with $|d-\Gamma|\le1$. Hence $(s,t,\sigma)\in F^{(\lambda+1)}_g(k)$ is split and
its pieces are scanned, and $i$ lies in exactly one piece $\tau$, for which $c_\tau=c$. Conversely, every
$i\in\tau\cap V_k^{s,t}$ for a scanned piece $\tau$ of $(s,t,\sigma)$ is valid, aliased false, and has the recorded
parameters $(d,c_\tau)$. Since every valid $i$ has exactly one type, $N^\neg(k,d,c)$ is exactly the number of
aliased false $i$ with parameters $(d,c)$, and therefore
\[
N_q(k,d,c)-N^\neg(k,d,c)=\bigl|\set{i\in T_g(k):\ S^{(\lambda+1)}(k,i)=d,\ a_\lambda[i]+b_\lambda[k-i]=c}\bigr|.
\]
By~\eqref{eq:a-bits} these $i$ have $S^{(\lambda)}(k,i)=2d+c$. By \cref{lem:a-levels}(c), some minimizer of
$\Gamma^{(\lambda)}_g[k]$ has $S^{(\lambda+1)}\in\mathcal D$, and every $i\in T_g(k)$ has $S^{(\lambda)}(k,i)\ge\Gamma^{(\lambda)}_g[k]$.
Hence $(k,g)$ is not declared inactive in~(b), and $\gamma^{(\lambda)}_g[k]=\Gamma^{(\lambda)}_g[k]$.

\emph{Case $T_g(k)=\emptyset$.} By (I2) this is only possible if $\lambda+1=h$. Then every $i$ counted by
$N_q(k,d,c)$ is aliased false, and lies in a level-$h$ segment of its type that is false for $g$ and aliased, with
$S^{(h)}=0$. All such triples are in $L^{(h)}_g(k)$, and $\Gamma=0$, so the same argument gives
$N_q(k,0,c)=N^\neg(k,0,c)$, while $N_q(k,1,c)=0=N^\neg(k,1,c)$ since $S^{(h)}\equiv0$. Hence $(k,g)$ is declared
inactive in~(b).

\emph{Pruning.} Every pair surviving~(b) satisfies $W_g\le C_g[k]\le W_g+2^{\lambda+1}-2$ by
\cref{lem:a-levels}(b). By (I1) at level $\lambda+1$, some active $g^*$ has $C_{g^*}[k]=C[k]$.
Since $T_{g^*}(k)\neq\emptyset$, it survives~(b). Moreover, $W_{g^*}\le C[k]=\min_gC_g[k]\le W^*_k+2^{\lambda+1}-2$,
so it also survives pruning. This shows (I1). Every survivor has $C_g[k]\le W_g+2^{\lambda+1}-2\le W^*_k+2^{\lambda+2}-4\le
C[k]+2^{\lambda+2}-4$, which shows (I3).

\emph{Lists.} For $(k,g)\in\mathrm{Act}_\lambda$, every piece produced from $F^{(\lambda+1)}_g(k)$ is a level-$\lambda$
segment that is false for $g$ (\cref{lem:a-split}). Since $\gamma^{(\lambda)}_g[k]=\Gamma^{(\lambda)}_g[k]$, every
triple retained by Step~3(d) belongs to $F^{(\lambda)}_g(k)$. Conversely, by \cref{lem:a-contain}, every triple of
$F^{(\lambda)}_g(k)$ arises as such a piece and passes the test in Step~3(d). Thus $L^{(\lambda)}_g(k)=F^{(\lambda)}_g(k)$,
which shows (I2).

Finally, all objects in (I1)--(I3) are defined in terms of $p$ only, and the pairs declared inactive in~(b) are
exactly those with $T_g(k)=\emptyset$, whatever $q_\lambda$ is.
\end{proof}

\begin{corollary}
\label{cor:a-correct}
The algorithm outputs $A\mc B$ for every choice of the random prime and auxiliary moduli.
\end{corollary}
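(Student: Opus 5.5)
The plan is to derive the corollary directly from \cref{lem:a-inv} applied at the final level $\lambda=0$, together with a separate treatment of outputs outside $\mathcal K^*$. Every statement of \cref{lem:a-inv} holds for every choice of $p$ and of the moduli $q_\lambda$. The corollary therefore needs no probabilistic argument; only the running time depends on the random choices. I would first dispose of the trivial cases. If $S_A$ or $S_B$ is empty, the output $C\equiv\infty$ is correct. If $n<2^{22}R$, the naive algorithm is used, and it is correct by definition.

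For $k\notin\mathcal K^*$, the set $V_k$ is empty. No pair $(i,k-i)$ then has both entries finite, so $C[k]=\infty$, which is what Step~4 outputs. Note that $\mathcal K^*$ is computed exactly in Step~1, since $H[k]<\infty$ if and only if $V_k\neq\emptyset$, and \cref{lem:fft} computes $H$ exactly.

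For $k\in\mathcal K^*$, I would invoke (I1)--(I3) at $\lambda=0$; this requires $h\ge1$, which holds since $p\ge\mathsf{P}\ge64$. By (I2), every $(k,g)\in\mathrm{Act}_0$ has $T_g(k)\neq\emptyset$ and $\gamma^{(0)}_g[k]=\Gamma^{(0)}_g[k]$. Since $\Gamma^{(0)}_g=\Gamma_g$ (because $S^{(0)}(k,i)=A^*[i]+B^*[k-i]$), this gives
\[
p(H[k]+g)+\gamma^{(0)}_g[k]=C_g[k].
\]
By \cref{lem:a-groups}, $C_g[k]\ge C[k]$, because $C_g[k]$ is a minimum over a subset of valid pairs. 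By (I1), some active $g$ attains $C_g[k]=C[k]$. Hence the minimum taken in Step~4 equals $C[k]$. One could alternatively obtain the lower bound from (I3) with $\lambda=0$, which gives $C_g[k]\le C[k]$; together with $C_g[k]\ge C[k]$ this shows that every surviving candidate is exact. Either route suffices.

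The only point needing care is that $\mathrm{Act}_0$ is nonempty for each $k\in\mathcal K^*$, but this is exactly (I1). The substantive work is entirely in \cref{lem:a-inv}, so the main obstacle is merely checking the base of the induction and the index bookkeeping at $\lambda=0$, namely that $\Gamma^{(0)}_g=\Gamma_g$ and that the output formula matches $C_g[k]$. I expect no genuine difficulty there.
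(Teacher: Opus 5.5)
Your proposal is correct and follows the paper's own argument. Outputs $k\notin\mathcal K^*$ are handled separately, (I2) at $\lambda=0$ identifies each output candidate $p(H[k]+g)+\gamma^{(0)}_g[k]$ with $C_g[k]\ge C[k]$, and (I1) supplies an active candidate attaining $C[k]$. One wording slip: your (I3) remark gives an \emph{upper} bound $C_g[k]\le C[k]$ on each surviving candidate, not a lower bound, and that route still needs (I1) to ensure that some pair $(k,g)$ is active, as you note yourself.
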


\begin{proof}
For $k\notin\mathcal K^*$ the output is correct. For $k\in\mathcal K^*$, by (I2) at $\lambda=0$ every active $(k,g)$
has $p(H[k]+g)+\gamma^{(0)}_g[k]=C_g[k]\ge C[k]$, and by (I1) one of them attains $C[k]$.
\end{proof}

\subsubsection{Running Time}
\label{sec:a-time}

We bound the polynomial products, the number of tracked segments, and the total scanned length.
The moduli $q_\lambda$ control the product sizes. To bound the segment count, we compare segments with
\emph{cells} fixed before $p$ is sampled. The tracked lists depend only on $p$ (\cref{lem:a-inv}), so after
conditioning on $p$ we use the fresh modulus $q_\lambda$ to bound each segment's probability of being scanned.

Let $\mathcal L_\lambda=\sum_{(k,g)\in\mathrm{Act}_\lambda}|L^{(\lambda)}_g(k)|$ be the number of tracked triples at
level $\lambda$, and let $\mathrm{Scan}_\lambda$ be the total length of the pieces scanned in Step~3(b) at level
$\lambda$. Since $h$ depends on $p$, we sum over levels up to the deterministic bound
$h_{\max}=\ceil{\log_2(2\mathsf{P})}\ge h$, and set
$\mathcal L_\lambda=0$ for $h<\lambda\le h_{\max}$ and $\mathrm{Scan}_\lambda=0$ for $h\le\lambda\le h_{\max}$.
For each fixed $\lambda\in[0,h_{\max}]$, we bound $\mathbb E[\mathcal L_\lambda]$ and $\mathbb E[\mathrm{Scan}_\lambda]$,
with the expectation over all random choices, and then sum over these $O(\log n)$ levels.

\begin{lemma}[Deterministic costs]
\label{lem:a-detcost}
For every choice of the prime and auxiliary moduli, the running time is
\[
\Ot\Bigl(n\sqrt{RU_*}+\sum_{\lambda=0}^{h-1}n2^{h-\lambda}q_\lambda+\sum_{\lambda=0}^{h}\mathcal L_\lambda+\sum_{\lambda=0}^{h-1}\mathrm{Scan}_\lambda\Bigr).
\]
Moreover, for every $(k,s,t)$ and $\lambda$, the number of level-$\lambda$ segments of $(k,s,t)$ is at most
$3+2U_*/p+2(U_*+p)/2^\lambda$.
\end{lemma}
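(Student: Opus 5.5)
We account for the steps of the algorithm in order and then prove the segment-count bound separately, since it is used to bound Step~2.

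\emph{Step~1.} Partitioning (if needed) costs $\Ot(n\min\set{m_A+m_B,\sqrt n})$ by \cref{lem:greedy-partition}. This is within $\Ot(n\sqrt{RU_*})$ since $m_A,m_B\le R$ and $n\le U_*$. The arrays $\pi_A,\pi_B$ and the high and low parts take $O(n)$ time. The sequence $H$ is a bounded convolution with finite entries at most $U_*/p$, so \cref{lem:fft} gives $\Ot(nU_*/p)=\Ot(nU_*/\mathsf{P})$. Because $\mathsf{P}\ge\sqrt{RU_*}$ and $R\ge1$, this is $\Ot(n\sqrt{U_*/R})\le\Ot(n\sqrt{RU_*})$. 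The per-part change lists are built in linear time by one sweep of each part. Each filled value is then one binary search, so every later evaluation costs $O(\log n)$.

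\emph{Step~2.} For each $(k,s,t)$ we enumerate the level-$h$ segments, i.e., the maximal intervals of $I_k$ on which $\widetilde A_s[i]$ and $\widetilde B_t[k-i]$ are constant ($A^{(h)}_s\equiv0$). Their breakpoints are the change positions of $\widetilde A_s$ in $I_k$ merged with the reflected change positions of $\widetilde B_t$. So enumeration takes time linear in their number, up to logs. Summed over all $k$, $(s,t)$ and $g\in\set{0,1}$, and using the segment bound at $\lambda=h$ (where $2^h\ge p$), this is $\Ot(nR(1+U_*/p))=\Ot(nRU_*/\mathsf{P})$. Since $\mathsf{P}\ge\sqrt{RU_*}$, this is $\Ot(n\sqrt{RU_*})$. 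It is also $\mathcal L_h$ up to the factor two for $g$, so it may be charged to either term.

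\emph{Step~3.} At level $\lambda$, the product in (a) has $z$-degree $O(n)$, $y$-degree $O(2^{h-\lambda})$, $x$-degree $2$ and $w$-degree $O(q_\lambda)$. By Kronecker substitution and FFT as in \cref{lem:fft}, it costs $\Ot(n2^{h-\lambda}q_\lambda)$. Here we use $A^{(\lambda+1)}\le p/2^{\lambda+1}\le2^{h-\lambda}$, and, when $q_\lambda>1$, that $q_\lambda$ is sampled in $\Ot(\mathsf{Q}_\lambda)$ by \cref{lem:prime}. In (b), each active pair reads $O(1)$ coefficients. Each tracked triple is split by two binary searches over filled values, costing $O(\log^2 n)$ per triple. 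Scanning costs $O(1)$ per scanned index, i.e., $\mathrm{Scan}_\lambda$. Steps (c) and (d) are linear in the number of pieces, which is at most $3\mathcal L_{\lambda+1}$. The number of active pairs is at most $2|\mathcal K^*|=O(n)$ per level, which gives $\Ot(n)$ over $h=O(\log n)$ levels. Step~4 is $O(n)$. Summing over levels gives the claimed expression.

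\emph{Segment bound.} Fix $(k,s,t)$ and $\lambda$. A segment boundary inside $I_k$ occurs only where one of the four functions changes. $\widetilde A_s$ is monotone with values in $[0,U_*/p]$, so it changes at most $U_*/p$ times; likewise $\widetilde B_t[k-i]$. Between consecutive changes of $\widetilde A_s$, the low part $A^*_s$ is monotone with values in $[0,p-1]$, so $A^{(\lambda)}_s$ changes at most $(p-1)/2^\lambda$ times there. This would give a bound of order $(U_*/p)\cdot p/2^\lambda$, which is too large. The sharper count, and the step we expect to be the main obstacle, observes that $\floor{\bar A_s[i]/2^\lambda}$ is itself monotone in $i$ with range $[0,U_*/2^\lambda]$. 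Since $p\widetilde A_s+2^\lambda A^{(\lambda)}_s$ determines $\floor{\bar A_s/2^\lambda}$ up to the boundary effects at the changes of $\widetilde A_s$, every change of the pair $(\widetilde A_s,A^{(\lambda)}_s)$ either is a change of $\widetilde A_s$ or strictly moves $\floor{\bar A_s/2^\lambda}$. One checks this carefully: within a constant high part, $A^{(\lambda)}_s=\floor{(\bar A_s-pc)/2^\lambda}$ changes only when $\bar A_s$ crosses a point of $pc+2^\lambda\Z$. Over the whole range these are at most $U_*/2^\lambda+U_*/p+1$ crossings, which absorbs the extra $p/2^\lambda$ term. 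Hence the $A$-side has at most $U_*/p+(U_*+p)/2^\lambda+1$ changes, and the same holds for the $B$-side. The number of segments is one more than the total number of changes, giving at most $3+2U_*/p+2(U_*+p)/2^\lambda$.
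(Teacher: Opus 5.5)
Your accounting of the running time matches the paper's proof step by step: the FFT for $H$, merging change lists in Step~2, Kronecker substitution in Step~3(a), $O(\log^2 n)$ per split triple, and $O(1)$ per scanned index and per active pair. Charging the partitioning here is harmless (the paper does it in the proof of \cref{thm:nearly-mono-1}), as is omitting the $\Ot(\mathsf{P})$ cost of sampling $p$. The problem is the segment bound.

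You dismiss the per-run count as ``too large'', but it is exactly the paper's argument and already gives the statement. There are at most $1+U_*/p$ runs on which $\widetilde A_s$ is constant, and on each of them $A^{(\lambda)}_s$ is monotone with values in $[0,(p-1)/2^\lambda]$. So in total it changes at most $(1+U_*/p)(p-1)/2^\lambda<(U_*+p)/2^\lambda$ times, which is precisely the term in the lemma.

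The argument you substitute does not prove the stated inequality.
\begin{itemize}
\item Its first claim is false. A change of $A^{(\lambda)}_s$ inside a run need not move $\floor{\bar A_s/2^\lambda}$, because for odd $c$ and $\lambda\ge1$ the grid $pc+2^\lambda\Z$ is offset from $2^\lambda\Z$. For instance, with $p=67$, $2^\lambda=4$ and $c=1$, going from $\bar A_s=70$ to $71$ changes $A^{(\lambda)}_s$ from $0$ to $1$, while $\floor{\bar A_s/4}=17$ stays fixed.
\item Your corrected crossing count $U_*/2^\lambda+U_*/p+1$ is valid. Together with the $U_*/p$ changes of $\widetilde A_s$, however, it gives $2U_*/p+U_*/2^\lambda+1$ changes per side. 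This is at most the claimed $U_*/p+(U_*+p)/2^\lambda+1$ only when $2^\lambda U_*\le p^2$.
\item That condition fails at the top levels: at $\lambda=h$ we have $p/2^h\le1$ but $U_*/p\ge512$.
\end{itemize}
So nothing ``absorbs'' the extra $U_*/p$, and you only obtain $3+4U_*/p+2U_*/2^\lambda$ segments. That is of the right order for the later lemmas, but it is not the bound in the statement. Restoring the per-run argument you discarded closes the gap.
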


\begin{proof}
\emph{Number of segments.} Along $I_k$, the value $\widetilde A_s[i]$ changes at most $U_*/p$ times, and so does
$\widetilde B_t[k-i]$. On each of the at most $1+U_*/p$ maximal intervals on which $\widetilde A_s$ is constant,
$A^{(\lambda)}_s$ is monotone with values in $[0,(p-1)/2^\lambda]$, so in total it changes at most $(U_*+p)/2^\lambda$
times; similarly for $B^{(\lambda)}_t[k-\cdot]$. Every segment starts at $\min I_k$ or at a change.

\emph{Costs.} Step~1 takes $\Ot(n+\mathsf{P})$ time, plus $\Ot(nU_*/p)=\Ot(n\sqrt{U_*/R})$ for computing $H$ by \cref{lem:fft}.
Step~2 takes $\Ot(|\mathcal K|\cdot R\cdot(1+U_*/p))=\Ot(n\sqrt{RU_*})$ time, by merging the precomputed lists of change
positions of $\widetilde A_s$ and $\widetilde B_t$ for every $(k,s,t)$. In Step~3(a), the polynomials have $z$-degree
less than $n$, $y$-degree less than $2^{h-\lambda-1}$ (as $A^{(\lambda+1)}[i]\le(p-1)/2^{\lambda+1}$), $x$-degree at most
$1$ and $w$-degree less than $q$. Thus Kronecker substitution and the FFT (as in \cref{lem:fft}) compute their
product in $\Ot(n2^{h-\lambda}q_\lambda)$ time. Choosing $q_\lambda$ takes $\Ot(q_\lambda)$ time. Steps~3(b)--(d) take $O(\log^2 n)$ time per split triple and per piece: each binary search evaluates filled
values by another binary search in the sorted list of the part. Including $O(1)$ time per active pair and the
scanning cost, these steps take $\Ot(\mathcal L_{\lambda+1}+|\mathcal K|+\mathrm{Scan}_\lambda)$.
Step~4 takes $O(|\mathcal K|)$ time. Finally, $h=O(\log n)$.
\end{proof}

\begin{lemma}[Polynomial cost]
\label{lem:a-polycost}
$\sum_{\lambda<h}n2^{h-\lambda}q_\lambda=O(n\sqrt{RU_*}+n^{1.5}\sqrt R\log n)$.
\end{lemma}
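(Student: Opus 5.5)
We split the sum according to whether filtering is used at level $\lambda$, that is, whether $\theta_\lambda<32$ (so $q_\lambda=1$) or $\theta_\lambda\ge32$ (so $q_\lambda\le2\mathsf{Q}_\lambda\le2\theta_\lambda+2\le 3\theta_\lambda$). The only facts needed are $2^h<2p\le4\mathsf{P}$, $\mathsf{P}=\ceil{\sqrt{RU_*}}$, and $h=O(\log n)$.

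\emph{Levels without filtering.} Here the term is $n2^{h-\lambda}$. Summing the geometric series over $\lambda\in[0,h-1]$ gives at most $n2^{h+1}\le 8n\mathsf{P}=O(n\sqrt{RU_*})$; this uses $\mathsf{P}\le 2\sqrt{RU_*}$, which holds since $RU_*\ge1$.

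\emph{Levels with filtering.} Here each term is at most
\[
n2^{h-\lambda}\cdot3\theta_\lambda=3n2^{h-\lambda}\sqrt{nR}\,2^\lambda/\mathsf{P}=3n^{1.5}\sqrt R\,2^h/\mathsf{P}\le 12n^{1.5}\sqrt R,
\]
since $2^h\le4\mathsf{P}$. The term is independent of $\lambda$, so summing over at most $h=O(\log n)$ such levels gives $O(n^{1.5}\sqrt R\log n)$. Adding the two cases yields the claimed bound. The argument holds for every choice of $p$ and of the $q_\lambda$, so no expectation is required.

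The only point that needs care is the upper bound on the random prime $q_\lambda\in[\mathsf{Q}_\lambda,2\mathsf{Q}_\lambda]$ in terms of $\theta_\lambda$. Since $\theta_\lambda\ge32$, we have $\mathsf{Q}_\lambda=\ceil{\theta_\lambda}\le\theta_\lambda+1\le\frac{33}{32}\theta_\lambda$, which gives the constant-factor bound $q_\lambda\le 3\theta_\lambda$ used above. The remaining steps are routine bookkeeping.
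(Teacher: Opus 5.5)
Your proof is correct and follows the paper's argument essentially verbatim. You bound the unfiltered levels by the geometric sum $2n2^h\le 8n\mathsf{P}$. For the filtered levels, $q_\lambda\le 3\theta_\lambda$ makes each term at most $12n^{1.5}\sqrt R$, and you sum this over the $h=O(\log n)$ levels.
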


\begin{proof}
The levels with $q_\lambda=1$ contribute at most $2n2^h\le 8n\mathsf{P}=O(n\sqrt{RU_*})$.
If $q_\lambda>1$, then $\theta_\lambda\ge32$ and $q_\lambda\le2\mathsf{Q}_\lambda\le3\theta_\lambda$, so
$n2^{h-\lambda}q_\lambda\le3n2^h\sqrt{nR}/\mathsf{P}\le12n^{1.5}\sqrt R$; there are $h=O(\log n)$ levels.
\end{proof}

\begin{lemma}[False segments give approximate multiples of $p$]
\label{lem:a-pos}
Let $\mu\in[0,h]$, $(k,g)\in\mathrm{Act}_\mu$ with $T_g(k)\neq\emptyset$, $(s,t,\sigma)\in F^{(\mu)}_g(k)$ and
$i\in\sigma$. Then $\Delta_{s,t}(k,i)=p\xi+\eta$, where $\xi=\chi(\sigma)-H[k]-g\neq0$ and $\eta$ is an integer
with $-3\cdot2^\mu<\eta<7\cdot2^\mu$.
\end{lemma}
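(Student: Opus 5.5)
We decompose $\mathrm{val}_{s,t}(k,i)$ into high and low parts. For $i\in\sigma$, the filled values satisfy $\bar A_s[i]+\bar B_t[k-i]=p\,\chi(\sigma)+A^*_s[i]+B^*_t[k-i]$. By \cref{lem:a-levels}(a) applied to the filled quantities of type $(s,t)$,
\[
2^\mu S^{(\mu)}(\sigma)\le A^*_s[i]+B^*_t[k-i]\le 2^\mu S^{(\mu)}(\sigma)+2^{\mu+1}-2 .
\]
On the other side, we write $C[k]=C_{g'}[k]$ for the optimal candidate. It is cleaner to compare with $C_g[k]$ instead, since $T_g(k)\neq\emptyset$. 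We have $C_g[k]=p(H[k]+g)+\Gamma_g[k]$, and by \cref{lem:a-levels}(b), $2^\mu\Gamma^{(\mu)}_g[k]\le\Gamma_g[k]\le2^\mu\Gamma^{(\mu)}_g[k]+2^{\mu+1}-2$.

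Subtracting, set $\eta=\Delta_{s,t}(k,i)-p\xi$. Then
\[
\eta=\bigl(A^*_s[i]+B^*_t[k-i]-\Gamma_g[k]\bigr)+\bigl(C_g[k]-C[k]\bigr).
\]
For the first bracket, the membership $(s,t,\sigma)\in F^{(\mu)}_g(k)$ gives $|S^{(\mu)}(\sigma)-\Gamma^{(\mu)}_g[k]|\le1$. So the first bracket lies between $2^\mu(S^{(\mu)}(\sigma)-\Gamma^{(\mu)}_g[k])-(2^{\mu+1}-2)\ge -3\cdot2^\mu+2$ and $2^\mu(S^{(\mu)}(\sigma)-\Gamma^{(\mu)}_g[k])+2^{\mu+1}-2\le 3\cdot2^\mu-2$.

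For the second bracket, we have $C_g[k]\ge C[k]$ trivially, and from the invariant (\cref{lem:a-inv}), (I3) for $\mu<h$ or the level-$h$ statement for $\mu=h$, we get $C_g[k]\le C[k]+2^{\mu+2}-4$. Combining the two brackets gives $-3\cdot2^\mu+2\le\eta\le 3\cdot 2^\mu-2+4\cdot2^\mu-4<7\cdot2^\mu$, which are the claimed strict bounds. Finally, $\xi\ne0$ is exactly the condition that $\sigma$ is false for $g$, and $\eta$ is an integer since every other quantity in the decomposition is.

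The only point needing care is that $\mu=h$ is covered: there $S^{(h)}\equiv 0$ and $\Gamma^{(h)}_g=0$, and the bounds of \cref{lem:a-levels} still hold with $\lambda=h$, since $A^*+B^*\le 2p-2\le 2^{h+1}-2$. The level-$h$ clause of \cref{lem:a-inv} supplies $C_g[k]\le C[k]+2^{h+2}-4$. With that, the argument is uniform in $\mu$. I expect no real obstacle beyond tracking which invariant clause applies at which level.
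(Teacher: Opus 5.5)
Your proof is correct and follows the paper's argument. The paper uses the same split $\eta=(\ell-\Gamma_g[k])+(C_g[k]-C[k])$, bounds the first term via \cref{lem:a-levels}(a),(b) and the window $|S^{(\mu)}(\sigma)-\Gamma^{(\mu)}_g[k]|\le1$, and bounds the second via \cref{lem:a-groups} together with the level-$h$ clause or (I3) of \cref{lem:a-inv}; your explicit bounds $-3\cdot2^\mu+2\le\eta\le7\cdot2^\mu-6$ are just a slightly sharper form of the stated strict inequalities.
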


\begin{proof}
Let $\xi=\chi(\sigma)-H[k]-g\neq0$, $\hat\Gamma=\Gamma^{(\mu)}_g[k]$ and $v=S^{(\mu)}(\sigma)$, so $|v-\hat\Gamma|\le1$.
By \cref{lem:a-levels}(a),(b), $\ell:=A^*_s[i]+B^*_t[k-i]\in[2^\mu v,2^\mu v+2^{\mu+1}-2]$ and
$\Gamma_g[k]\in[2^\mu\hat\Gamma,2^\mu\hat\Gamma+2^{\mu+1}-2]$, hence $|\ell-\Gamma_g[k]|<3\cdot2^\mu$.
Now $\mathrm{val}_{s,t}(k,i)-C_g[k]=p\xi+(\ell-\Gamma_g[k])$, and $0\le C_g[k]-C[k]\le2^{\mu+2}-4$ by
\cref{lem:a-groups} and \cref{lem:a-inv}. So $\eta:=\Delta_{s,t}(k,i)-p\xi=(\ell-\Gamma_g[k])+(C_g[k]-C[k])$ satisfies
$-3\cdot2^\mu<\eta<7\cdot2^\mu$.
\end{proof}

The next lemma is used at cell endpoints in \cref{lem:a-track} and at individual positions in \cref{lem:a-scan}.

\begin{lemma}[Divisibility near a fixed integer]
\label{lem:a-peps}
Let $\omega\ge1$ and $\mu\ge0$ be integers with $\omega\,2^\mu\le U_*$, and let $x_0\in[-2U_*,2U_*]$ be an integer that does not depend on $p$. The
probability that there is an integer $\eta$ with $|\eta|<\omega\,2^\mu$, $x_0\neq\eta$ and $p\mid x_0-\eta$ is
$O(\omega\,2^\mu\log n/\mathsf{P})$.
\end{lemma}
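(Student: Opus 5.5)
We apply a union bound over the at most $2\omega 2^\mu$ candidate values of $\eta$, and use \cref{lem:prime} for each candidate.

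Fix any integer $\eta$ with $|\eta|<\omega 2^\mu$ and $\eta\neq x_0$. The integer $z=x_0-\eta$ is nonzero and does not depend on $p$, since neither $x_0$ nor $\eta$ does. Its size is controlled by $|z|\le |x_0|+|\eta|<2U_*+\omega 2^\mu\le 3U_*$. Because $U_*=O(n)$, this gives $\ln(|z|+1)=O(\log n)$. By \cref{lem:prime}, $p$ is uniform among the primes in $[\mathsf{P},2\mathsf{P}]$, so $\Pr[p\mid z]\le c_0\ln(3U_*+1)/\mathsf{P}=O(\log n/\mathsf{P})$.

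The event in question is the union, over the at most $2\omega 2^\mu-1$ integers $\eta$ in $(-\omega 2^\mu,\omega 2^\mu)\setminus\set{x_0}$, of the events $p\mid x_0-\eta$. A union bound therefore gives probability $O(\omega 2^\mu\log n/\mathsf{P})$, as claimed.

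The only point requiring care is that the set of candidate $\eta$ is fixed before $p$ is sampled. This holds because $\omega$, $\mu$ and $x_0$ are all independent of $p$. In particular, the hypothesis that $\mu$ is a fixed integer, rather than the $p$-dependent $h$, is essential here. The hypothesis $\omega 2^\mu\le U_*$ is used only to keep $|z|$ polynomially bounded.
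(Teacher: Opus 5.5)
Your proof is correct and follows the paper's route: a union bound over the fewer than $2\omega\,2^\mu$ candidates $x_0-\eta$, each of which is a nonzero integer fixed before $p$ is sampled and satisfies $|x_0-\eta|\le 3U_*$. The only difference is the per-candidate bound: the paper notes that $3U_*<\mathsf{P}^3$, so at most two primes in $[\mathsf{P},2\mathsf{P}]$ divide the candidate, giving $O(\log\mathsf{P}/\mathsf{P})$, whereas you apply the divisibility bound of \cref{lem:prime} directly to get $O(\log U_*/\mathsf{P})$; both are $O(\log n/\mathsf{P})$.
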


\begin{proof}
Each of the fewer than $2\omega\,2^\mu$ integers $x=x_0-\eta\ne0$ satisfies $|x|\le3U_*<\mathsf{P}^3$, so it is divisible by at most
two primes from $[\mathsf{P},2\mathsf{P}]$. By the prime-counting bound of \cref{lem:prime}, the probability that $p$ is one of them is
$O(\log \mathsf{P}/\mathsf{P})$. A union bound finishes the proof.
\end{proof}

Since $2^h<4\mathsf{P}\le U_*/256$, we have $\omega\,2^\mu\le U_*$ for all $\mu\le h$ and $\omega\le 64$.

\begin{lemma}[Tracked triples]
\label{lem:a-track}
For every fixed $\lambda\in[0,h_{\max}]$, $\mathbb E[\mathcal L_\lambda]=O(nRU_*\log n/\mathsf{P})=O(n\sqrt{RU_*}\log n)$.
\end{lemma}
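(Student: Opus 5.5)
We bound $\mathcal L_\lambda$ by charging each tracked triple to a cell that is fixed before $p$ is sampled, and then showing that a given cell can carry a tracked triple only with small probability. Only levels $\lambda\le h$ contribute, and there $(k,g)\in\mathrm{Act}_\lambda$ has $T_g(k)\ne\emptyset$, except at level $h$. At level $h$ we count all false segments directly: by \cref{lem:a-detcost} there are at most $|\mathcal K|\cdot R\cdot O(1+U_*/p+(U_*+p)/2^h)=O(nRU_*/\mathsf{P})$ level-$h$ segments, since $2^h\ge p\ge\mathsf{P}$. So fix $\lambda<h$, or more precisely a deterministic level index $\lambda\le h_{\max}$, with the event $\lambda<h$ folded in. By \cref{lem:a-inv}, $L^{(\lambda)}_g(k)=F^{(\lambda)}_g(k)$.

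\textbf{Cells.} For each $(k,s,t)$, partition $I_k$ into \emph{cells}, which are consecutive intervals on which $\mathrm{val}_{s,t}(k,\cdot)=\bar A_s[i]+\bar B_t[k-i]$ varies by less than $2^\lambda$. We build them greedily from the monotone step functions $\bar A_s$ and $\bar B_t[k-\cdot]$: cut whenever either summand has moved by at least $2^{\lambda-1}$ since the start of the current cell. This is well defined because each summand is monotone on $[n]$ with range $[0,U_*]$. It yields $O(1+U_*/2^\lambda)$ cells per $(k,s,t)$, and these cells do not depend on $p$, only on $\lambda$, which we range over deterministically up to $h_{\max}$.

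\textbf{Charging.} A level-$\lambda$ segment $\sigma$ of $(k,s,t)$ is charged to the cell containing its left endpoint. Segments are disjoint, so a cell receives several segments only if segment boundaries fall inside it. Each segment $\sigma$ beyond the first in a cell begins at a change point of one of the four quantities, and that change point lies inside the cell; these are at most $O(U_*/p)+O(U_*/2^\lambda)$ per $(k,s,t)$ anyway. The cleaner route is to bound the number of tracked triples by the number of pairs (cell, segment meeting it) for which the segment is in $F^{(\lambda)}_g(k)$. Each cell meets at most $O(1+2^\lambda/p+1)=O(1)$ segments, because within a cell $\mathrm{val}$ moves by less than $2^\lambda\le 2p$, hence the high parts change $O(1)$ times and the level-$\lambda$ rounded low parts change $O(1)$ times.

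\textbf{Probability.} This is the main step. If a cell $\kappa$ with left endpoint $i_0$ meets some $(s,t,\sigma)\in F^{(\lambda)}_g(k)$, take $i\in\sigma\cap\kappa$. By \cref{lem:a-pos}, $\Delta_{s,t}(k,i)=p\xi+\eta$ with $\xi\ne0$ and $|\eta|<7\cdot2^\lambda$. Since $|\Delta_{s,t}(k,i)-\Delta_{s,t}(k,i_0)|<2^\lambda$, the integer $x_0=\Delta_{s,t}(k,i_0)$ satisfies $p\mid x_0-\eta'$ for some $|\eta'|<8\cdot2^\lambda$ with $x_0\ne\eta'$, because $\xi\ne0$. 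Here $x_0$ depends only on $C$ and the filled sequences, not on $p$. So \cref{lem:a-peps} with $\omega=8$ bounds this event by $O(2^\lambda\log n/\mathsf{P})$. The subtlety is that \cref{lem:a-pos} uses the invariant $C_g[k]\le C[k]+2^{\lambda+2}-4$, which holds for every $p$, so the event containment is deterministic and no conditioning issue arises.

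\textbf{Conclusion.} Summing over $|\mathcal K|\cdot R\cdot O(1+U_*/2^\lambda)$ cells, two values of $g$, and $O(1)$ segments per cell gives
\[
\mathbb E[\mathcal L_\lambda]=O\Bigl(nR\bigl(1+U_*/2^\lambda\bigr)\cdot 2^\lambda\log n/\mathsf{P}\Bigr)=O\bigl(nR(2^\lambda+U_*)\log n/\mathsf{P}\bigr)=O(nRU_*\log n/\mathsf{P}),
\]
using $2^\lambda\le 2^{h_{\max}}=O(\mathsf{P})\le U_*$. Finally, $\mathsf{P}=\ceil{\sqrt{RU_*}}$ gives the second form of the bound. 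The step I expect to need the most care is making the cell construction genuinely $p$-independent while still guaranteeing that each cell meets $O(1)$ segments. If that guarantee fails, I would instead charge each segment to the cell of its left endpoint and use \cref{lem:a-pos} at that endpoint directly.
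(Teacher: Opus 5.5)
Your proposal is correct and follows essentially the paper's approach: you fix $p$-independent cells for each $(k,s,t)$, transfer \cref{lem:a-pos} to the cell's left endpoint (with $\omega=8$ instead of the paper's $9$), and sum \cref{lem:a-peps} over the $O(1+U_*/2^\lambda)$ cells. The differences are cosmetic. You bound all of level $h$ deterministically, and for $\lambda<h$ you count cell--segment incidences with an $O(1)$ bound per cell, where the paper counts segment starts with the bound $3+3\beta_Z$ and pays $\sum_Z\beta_Z\le 2U_*/p$ deterministically. Your $O(1)$ bound is valid because your greedy cuts keep each summand varying by less than $2^{\lambda-1}<p$; the variation of $\mathrm{val}$ alone would not suffice, so that is the justification to cite.
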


\begin{proof}
Fix $k\in\mathcal K^*$, $g$, and a pair of parts $(s,t)$, and suppose $(k,g)\in\mathrm{Act}_\lambda$. If
$T_g(k)=\emptyset$, then $\lambda=h$ by \cref{lem:a-inv}, and the number of triples of type $(s,t)$ in
$L^{(h)}_g(k)$ is at most $5+4U_*/p$ by \cref{lem:a-detcost} (as $2^h\ge p$). Otherwise, we charge segment starts to cells defined from the filled values. Unlike segment endpoints, the cell
boundaries do not depend on $p$, so a fixed-integer divisibility bound applies there. Define a \emph{$\lambda$-cell}
of $(k,s,t)$
as a maximal subinterval $Z$ of $I_k$ on which both $\floor{\bar A_s[i]/2^\lambda}$ and
$\floor{\bar B_t[k-i]/2^\lambda}$ are constant, and let $e(Z)=\min Z$; $(k,s,t)$ has at most $1+2U_*/2^\lambda$ cells.
Let $(s,t,\tau)\in F^{(\lambda)}_g(k)$ and $i=\min\tau$. Let $Z$ be the cell containing $i$, and let $e=e(Z)$. Since $e$ and $i$
lie in the same cell, $|\bar A_s[i]-\bar A_s[e]|<2^\lambda$ and $|\bar B_t[k-i]-\bar B_t[k-e]|<2^\lambda$, so
$|\Delta_{s,t}(k,e)-\Delta_{s,t}(k,i)|<2^{\lambda+1}$. By \cref{lem:a-pos} applied to $(s,t,\tau)$ and $i$, we have $\Delta_{s,t}(k,e)=p\xi+r$ with
$\xi=\chi(\tau)-H[k]-g\ne0$ and $|r|<9\cdot2^\lambda$. Hence the event
\[
\mathcal E(k,s,t,Z):\quad\exists r,\ |r|<9\cdot2^\lambda,\ \Delta_{s,t}(k,e(Z))\neq r,\ p\mid\Delta_{s,t}(k,e(Z))-r
\]
occurs. A segment starting in $Z$ starts at $\min Z$, at one of the $\beta_Z$ indices
$i'\in Z\setminus\set{\min Z}$ where $\widetilde A_s$ or $\widetilde B_t[k-\cdot]$ changes, or at an index where
$A^{(\lambda)}_s$ or $B^{(\lambda)}_t[k-\cdot]$ changes while the corresponding high part does not. On each of the at most
$1+\beta_Z$ runs of $Z$ on which $\widetilde A_s$ is constant, $A^*_s$ is monotone and varies by less than $2^\lambda$,
so $A^{(\lambda)}_s$ changes at most once there, and similarly for $B$. Hence at most $3+3\beta_Z$ segments start in
$Z$, and $\sum_Z\beta_Z\le2U_*/p$. The contribution of $Z$ is at most $(3+3\beta_Z)\mathbf 1[\mathcal E(k,s,t,Z)]\le 3\,\mathbf 1[\mathcal E(k,s,t,Z)]+3\beta_Z$.
Therefore, for every choice of the prime and auxiliary moduli, the number of triples of type $(s,t)$ in
$F^{(\lambda)}_g(k)$ is at most $3\sum_Z\mathbf 1[\mathcal E(k,s,t,Z)]+6U_*/\mathsf{P}$.
The cells and the integers $\Delta_{s,t}(k,e(Z))$ do not depend on $p$. If $2^\lambda\ge4\mathsf{P}$, then $\lambda>h$ for
every $p$ and there is nothing to show; otherwise, by \cref{lem:a-peps} with $\omega=9$,
$\Pr[\mathcal E(k,s,t,Z)]=O(2^\lambda\log n/\mathsf{P})$, and summing over the at most $1+2U_*/2^\lambda$ cells gives
$O(U_*\log n/\mathsf{P})$. Summing over $k\in\mathcal K$, $g\in\set{0,1}$ and the $R$ pairs $(s,t)$ proves the claim.
\end{proof}

\begin{lemma}[Scanned length]
\label{lem:a-scan}
For every fixed $\lambda\in[0,h_{\max}]$, $\mathbb E[\mathrm{Scan}_\lambda]=O(n^{1.5}\sqrt R\log^2 n)$.
\end{lemma}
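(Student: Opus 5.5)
Fix $\lambda<h$; levels with $\lambda\ge h$ contribute nothing. If $q_\lambda=1$, then $\theta_\lambda<32$. Every piece that is scanned comes from a tracked triple $(s,t,\sigma)\in L^{(\lambda+1)}_g(k)$, and its length is at most $|\sigma|\le n$. So I would first bound the scanned length deterministically by $n\mathcal L_{\lambda+1}$. That bound is too weak, so instead I will bound the scanned length by counting positions rather than segments.

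Write $\mathrm{Scan}_\lambda=\sum \mathbf 1[\text{$i$ is scanned for $(k,g,s,t)$}]$ over all $(k,g,s,t,i)$ with $i\in I_k$. Position $i$ is scanned only if three things hold. First, $(k,g)\in\mathrm{Act}_{\lambda+1}$ with $T_g(k)\ne\emptyset$; the case $T_g(k)=\emptyset$ occurs only at $\lambda+1=h$ and is handled the same way using the level-$h$ list. Second, the level-$(\lambda+1)$ segment $\sigma\ni i$ of $(k,s,t)$ lies in $F^{(\lambda+1)}_g(k)$. Third, $\sigma$ is aliased.

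By \cref{lem:a-pos} with $\mu=\lambda+1$, the second condition gives $\Delta_{s,t}(k,i)=p\xi+\eta$ with $\xi\ne0$ and $|\eta|<7\cdot2^{\lambda+1}$. The integer $\Delta_{s,t}(k,i)$ does not depend on $p$. Hence, by \cref{lem:a-peps} with $\omega=14$ and $\mu=\lambda+1$, the probability over $p$ that this happens is $O(2^\lambda\log n/\mathsf P)$. The case where the $\eta$ equal to $\Delta$ would be needed cannot occur, since $\xi\neq0$ and $|p\xi|\ge p>|\eta|$.

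Next I condition on $p$. By \cref{lem:a-inv} the list, and hence $\xi=\chi(\sigma)-H[k]-g$, is determined by $p$. The modulus $q_\lambda$ is independent of $p$. If $q_\lambda>1$, then $\xi$ is a nonzero integer with $|\xi|=O(U_*)$, so \cref{lem:prime} gives $\Pr[q_\lambda\mid\xi]=O(\log n/\mathsf Q_\lambda)$. Multiplying the two probabilities and summing over the $O(n)$ values of $k$, the two values of $g$, the $R$ pairs $(s,t)$ and the $n$ indices $i$ gives
\[
\mathbb E[\mathrm{Scan}_\lambda]=O\Bigl(Rn^2\cdot\frac{2^\lambda\log n}{\mathsf P}\cdot\frac{\log n}{\theta_\lambda}\Bigr)=O\bigl(n^{1.5}\sqrt R\log^2 n\bigr),
\]
using $\theta_\lambda=\sqrt{nR}\,2^\lambda/\mathsf P$. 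If $q_\lambda=1$, then $\theta_\lambda<32$, so $2^\lambda<32\mathsf P/\sqrt{nR}$. In that case the first factor alone gives $O(Rn^2\log n/\sqrt{nR})=O(n^{1.5}\sqrt R\log n)$.

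The main obstacle is justifying the product of the two probabilities. The first event concerns $p$ and depends on the fixed integer $\Delta_{s,t}(k,i)$. The aliasing event concerns $q_\lambda$, given $p$. I would formalize this by first writing $\Pr[\text{scanned}]\le\mathbb E_p\bigl[\mathbf 1[E_p]\cdot\Pr_{q_\lambda}[q_\lambda\mid\xi\mid p]\bigr]$, where $E_p$ is the event that $i$ lies in a false-near segment of an active pair. Then I bound the inner probability uniformly by $O(\log n/\mathsf Q_\lambda)$, which holds because $\xi\ne0$ on $E_p$. Finally I apply \cref{lem:a-peps} to $\Pr[E_p]$.

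A secondary care point is the boundary case where $T_g(k)=\emptyset$ and $\lambda+1=h$. Here the scanned segments are all false level-$h$ segments, so \cref{lem:a-pos} does not apply. I would bound these directly: $2^{\lambda}=\Theta(\mathsf P)$, so the first-factor probability is $O(1)$ anyway, and the $q$-filter alone, with $\mathsf Q_{h-1}=\Theta(\sqrt{nR})$ or $q=1$ with $\sqrt{nR}<O(1)$, yields $O(Rn^2\log n/\sqrt{nR})$.
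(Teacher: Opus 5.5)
Your proof is correct and follows the paper's argument: condition on $p$ so that $\mathrm{Act}_{\lambda+1}$, the lists and $\xi$ are fixed (\cref{lem:a-inv}), charge the fresh modulus $q_\lambda$ a factor $O(\log n/\mathsf Q_\lambda)$ (or $1$ when $\theta_\lambda<32$), bound the positions of false-near segments via \cref{lem:a-pos} and \cref{lem:a-peps}, and handle the case $T_g(k)=\emptyset$ at $\lambda=h-1$ through $\theta_\lambda\ge\sqrt{nR}/2$, exactly as the paper does. One small correction: your justification $|p\xi|\ge p>|\eta|$ is false near the top levels, where $|\eta|$ can be as large as about $7\cdot 2^{h}>p$. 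It is also unnecessary, since $\Delta_{s,t}(k,i)-\eta=p\xi\neq0$ already gives $\Delta_{s,t}(k,i)\neq\eta$.
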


\begin{proof}
Condition on $p$, assume $\lambda<h$, and let $\mu=\lambda+1$. By \cref{lem:a-inv}, $\mathrm{Act}_\mu$ and the lists
$L^{(\mu)}$ depend only on $p$, and $q_\lambda$ is independent of $p$. Let $\varpi_\lambda=1$ if $\theta_\lambda<32$, and
$\varpi_\lambda=c_0\ln(U_*+1)/\mathsf{Q}_\lambda$ otherwise, where $c_0$ is the constant of \cref{lem:prime}. A triple $(s,t,\sigma)\in L^{(\mu)}_g(k)$ is scanned only if $q_\lambda$ divides
$\xi=\chi(\sigma)-H[k]-g$, where $0<|\xi|\le2U_*/p+1\le U_*$; by \cref{lem:prime} this has conditional probability at most $\varpi_\lambda$. Hence
\[
\mathbb E[\mathrm{Scan}_\lambda\mid p]\le\varpi_\lambda\sum_{(k,g)\in\mathrm{Act}_\mu}\ \sum_{(s,t,\sigma)\in L^{(\mu)}_g(k)}|\sigma|.
\]
For pairs $(k,g)$ with $T_g(k)\neq\emptyset$, the segments of a fixed type $(s,t)$ in $F^{(\mu)}_g(k)$ are disjoint, and
by \cref{lem:a-pos} each of their positions $i$ satisfies the event
\[
\mathcal E_\mu(k,s,t,i):\quad\exists\eta,\ |\eta|<7\cdot2^\mu,\ \Delta_{s,t}(k,i)\ne\eta,\ p\mid\Delta_{s,t}(k,i)-\eta.
\]
So their contribution is at most $\varpi_\lambda\sum_g\sum_k\sum_{s,t}|\set{i\in I_k:\mathcal E_\mu(k,s,t,i)}|$.
Pairs with $T_g(k)=\emptyset$ occur only if $\mu=h$, and contribute at most $\varpi_\lambda\sum_k\sum_{s,t}|I_k|\le\varpi_\lambda Rn^2$.

Now fix $\lambda$ and take the expectation over $p$. The quantities $\theta_\lambda$ and $\varpi_\lambda$, as well as
$\mathsf{Q}_\lambda$ when $\theta_\lambda\ge32$, do not depend on $p$. If $2^{\mu}\ge4\mathsf{P}$, then $\lambda\ge h$ for every $p$, so assume $2^\mu<4\mathsf{P}$. Since the integers
$\Delta_{s,t}(k,i)$ do not depend on $p$, \cref{lem:a-peps} with $\omega=7$ gives
$\Pr[\mathcal E_\mu(k,s,t,i)]=O(\min\set{1,2^\lambda\log n/\mathsf{P}})$. Using $\sum_k|I_k|\le n^2$, we obtain
\[
\mathbb E[\mathrm{Scan}_\lambda]=O\Bigl(\varpi_\lambda Rn^2\min\Bigl\{1,\frac{2^\lambda\log n}{\mathsf{P}}\Bigr\}\Bigr)+\varpi_\lambda Rn^2\Pr[\lambda=h-1].
\]
The two terms account for the pairs with $T_g(k)\ne\emptyset$ and $T_g(k)=\emptyset$, respectively.
If $\theta_\lambda\ge32$, then $\varpi_\lambda\le c_0\ln(U_*+1)\cdot \mathsf{P}/(\sqrt{nR}\,2^\lambda)$, and the first term is
$O(n^{1.5}\sqrt R\log^2n)$. For the second term, if $\Pr[\lambda=h-1]>0$, then some admissible $p$ has
$2^\lambda=2^{h-1}\ge p/2\ge \mathsf{P}/2$, so $\theta_\lambda\ge\sqrt{nR}/2$ and $\varpi_\lambda\le2c_0\ln(U_*+1)/\sqrt{nR}$, which bounds the
second term by $2c_0n^{1.5}\sqrt R\ln(U_*+1)$.

If $\theta_\lambda<32$, then $\varpi_\lambda=1$ and $2^\lambda<32\mathsf{P}/\sqrt{nR}$, so the first term is
$O(Rn^2\log n/\sqrt{nR})=O(n^{1.5}\sqrt R\log n)$. For the second term, if $\Pr[\lambda=h-1]>0$, then $2^\lambda\ge \mathsf{P}/2$
as before, hence $\sqrt{nR}<64$ and $Rn^2=n^{1.5}\sqrt R\cdot\sqrt{nR}<64n^{1.5}\sqrt R$.
\end{proof}

\begin{proof}[Proof of \cref{thm:nearly-mono-1}]
Correctness follows from \cref{cor:a-correct}. By \cref{lem:a-detcost,lem:a-polycost,lem:a-track,lem:a-scan},
the expected running time, summed over the $O(\log n)$ levels, is
$\Ot(n\sqrt{RU_*}+n^{1.5}\sqrt R)=\Ot(n^{1.5}\sqrt{m_Am_B})$, since $U_*=O(n)$ and $R=O(m_Am_B\log^2n)$. Together
with the $O(n)$ time for empty supports and the time of \cref{lem:greedy-partition} for computing the partitions
if they are not given, this gives $\Ot(n+n^{1.5}\sqrt{m_Am_B})$.
\end{proof}

\begin{remark}
\label{rem:a-large-values}
All estimates in the proof of \cref{thm:nearly-mono-1} remain valid for $U\le n^{O(1)}$: with $U_*=\max\set{U,n}$ we
still have $\log U_*=O(\log n)$, while the dependence on $U_*$ is retained explicitly until the final simplification.
Hence, for finite entries in $\set{0,\dots,U}$, the same algorithm and analysis give expected time
$\Ot(n+n\sqrt{m_Am_BU_*}+n^{1.5}\sqrt{m_Am_B})$, where $U_*=\max\set{U,n}$. For $m_A=m_B=1$, this is
$\Ot(n^{1.5}+n\sqrt U)$, matching the rectangular bound of Bringmann, D\"urr, and Polak~\cite{BringmannDP24} for
$U\ge n$ while allowing arbitrarily placed infinite entries.
\end{remark}

\begin{remark}
\label{rem:a-onesided}
The term $n^{1.5}\sqrt R$ comes from the trade-off between the polynomial size $n2^{h-\lambda}q_\lambda$ and the
expected scanning cost $\approx Rn^2 2^\lambda/(\mathsf{P}q_\lambda)$, whose product is $\Theta(Rn^3)$ independently of
$q_\lambda$. If the parts of one of the two sequences, say $B$, are \emph{intervals} (as is the case for a single
monotone sequence without infinities), then $|\tau\cap V_k^{s,t}|$ is a difference of two prefix sums of the
indicator of $P_s$, so no scanning and no aliasing filter are needed, and the running time becomes
$\Ot(n\sqrt{RU_*})$. Since $U_*\ge n$, this matches the bound of \cref{rem:a-large-values}, so the interval case gives a simpler
algorithm but no asymptotic improvement.
\end{remark}

\section{From Bounded to Monotone \texorpdfstring{$(\min,+)$}{(min,+)}-Convolution}
\label{sec:mono-red}

The reduction in \cref{thm:intro-mono}, illustrated in \cref{fig:mono-red}, relies on the following identity for
floors.

\begin{lemma}
\label{lem:floor}
Let $w\ge 1$ be an integer, and let $x,y,s$ be integers. Let $J$ be a set of at least $w$ consecutive integers. Then
\[
\min_{j\in J}\left(\floor{\frac{x+j}{w}}+\floor{\frac{y+s-j}{w}}\right)=\floor{\frac{x+y+s+1}{w}}-1.
\]
\end{lemma}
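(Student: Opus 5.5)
The plan is to prove a lower bound valid for every $j$, and then show that it is attained for some $j$ in every window of $w$ consecutive integers.

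\emph{Lower bound.} For integers $u,v$, write $u=w\floor{u/w}+\rho_u$ and $v=w\floor{v/w}+\rho_v$ with $0\le\rho_u,\rho_v\le w-1$. Then
\[
\floor{u/w}+\floor{v/w}=\frac{u+v-\rho_u-\rho_v}{w}\ge\frac{u+v-2w+2}{w}.
\]
Since the left-hand side is an integer, it is at least $\ceil{(u+v-2w+2)/w}$, and this equals $\floor{(u+v+1)/w}-1$. The last identity comes from $\ceil{m/w}=\floor{(m+w-1)/w}$ for integers $m$, applied with $m=u+v+2-2w$. Now take $u=x+j$ and $v=y+s-j$, so that $u+v=x+y+s$ does not depend on $j$. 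Hence every term of the minimum is at least $\floor{(x+y+s+1)/w}-1$.

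\emph{Attainment.} Put $N=x+y+s$ and write $N+1=wQ+\rho$ with $0\le\rho<w$; the target value is $Q-1$. I will choose $j$ so that $\rho_u=w-1$, that is, $x+j\equiv -1\pmod w$. Such a $j$ exists in $J$, because $J$ contains $w$ consecutive integers and therefore hits every residue class modulo $w$. With this choice, $u+v=N$ gives
\[
\rho_v\equiv N-\rho_u\equiv N+1\equiv\rho\pmod w,
\]
so $\rho_v=\rho$. Substituting into the displayed identity,
\[
\floor{u/w}+\floor{v/w}=\frac{N-(w-1)-\rho}{w}=\frac{wQ-1-w+1}{w}=Q-1,
\]
which is the claimed value. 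The same computation also re-verifies the ceiling identity used in the lower bound.

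No step is a real obstacle. The only point needing care is the residue bookkeeping, namely taking the least nonnegative remainders $\rho_u,\rho_v\in[0,w-1]$ so that $\rho_v=\rho$ exactly and not merely modulo $w$. The case $w=1$ is trivial and is covered by the same argument.
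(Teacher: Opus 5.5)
Your proof is correct and follows the paper's argument. The lower bound uses the same remainder bound, which you phrase through a ceiling identity and the paper through $1\le r_1+r_2+1\le 2w-1$. Attainment uses the same choice of $j\in J$ with $x+j\equiv w-1\pmod w$.
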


\begin{proof}
Let $T=x+y+s$, and consider any integers $a,b$ with $a+b=T$.
Write $a=q_1w+r_1$ and $b=q_2w+r_2$ with $0\le r_1,r_2<w$.
Then $T+1=(q_1+q_2)w+r_1+r_2+1$ with $1\le r_1+r_2+1\le 2w-1$, so
$\floor{(T+1)/w}\le q_1+q_2+1$, that is,
$\floor{a/w}+\floor{b/w}\ge\floor{(T+1)/w}-1$.
This proves that the left-hand side is at least the right-hand side.
Conversely, since $J$ contains $w$ consecutive integers, there is $j\in J$ with $x+j\equiv w-1\pmod w$.
For $a=x+j$ and $b=y+s-j$ we then have $r_1=w-1$, hence $\floor{a/w}=(a-w+1)/w$ and
\[
\floor{\frac{a}{w}}+\floor{\frac{b}{w}}=\floor{\frac{a-w+1+b}{w}}=\floor{\frac{T+1}{w}}-1. \qedhere
\]
\end{proof}

\begin{figure}[t]
\centering
\begin{tikzpicture}[x=0.42cm,y=0.22cm]
  \foreach \i/\v in {0/3,1/0,2/2} {
    \fill[NavyBlue!55] (\i*1.4,0) rectangle ++(1,\v);
    \node[below,font=\scriptsize] at (\i*1.4+0.5,0) {$\i$};
    \node[above,font=\scriptsize] at (\i*1.4+0.5,\v) {$\v$};
  }
  \draw[->] (-0.3,0) -- (4.6,0);
  \node[font=\small] at (2.1,-3.2) {$A=(3,0,2)$};
  \draw[->,thick] (5.4,5) -- (7.2,5) node[midway,above,font=\scriptsize] {$w=2$};
  \foreach \t/\v in {0/1,1/2,2/2,3/3,4/5,5/5,6/6,7/6,8/11,9/11,10/12,11/12} {
    \pgfmathtruncatemacro{\blk}{int(\t/4)}
    \ifnum\blk=0 \def\col{OliveGreen!55}\fi
    \ifnum\blk=1 \def\col{BrickRed!45}\fi
    \ifnum\blk=2 \def\col{Goldenrod!70}\fi
    \fill[\col] (8+\t,0) rectangle ++(0.8,\v);
    \node[above,font=\tiny] at (8.4+\t,\v) {$\v$};
  }
  \draw[->] (7.7,0) -- (20.4,0);
  \foreach \b in {0,1,2} { \node[below,font=\scriptsize] at (9.9+4*\b,0) {block $\b$}; }
  \node[font=\small] at (14,-3.2) {$A'[4i+j]=5i+\lfloor (A[i]+j)/2\rfloor$};
\end{tikzpicture}
\caption{The reduction of \cref{thm:mono-red} for $A=(3,0,2)$, $U=3$ and $w=2$, so that $K=2\lfloor U/w\rfloor+3=5$.
Every entry $A[i]$ becomes a non-decreasing block of $2w$ entries $iK+\lfloor(A[i]+j)/w\rfloor$, and consecutive
blocks are separated by the offset $K$, so the resulting sequence is monotone.}
\label{fig:mono-red}
\end{figure}

\begin{theorem}
\label{thm:mono-red}
Let $A,B\in\set{0,\dots,U}^n$ and let $w\ge 1$ be an integer.
Set $K=2\floor{U/w}+3$ and define the sequences $A',B'$ of length $N=2nw$ by
\[
A'[2wi+j]=iK+\floor{\frac{A[i]+j}{w}},\qquad B'[2wi+j]=iK+\floor{\frac{B[i]+j}{w}},
\qquad i\in[n],\ j\in[2w].
\]
Then $A'$ and $B'$ are non-decreasing with entries in $\set{0,\dots,nK}$, and for every $m\in[2n-1]$ and
$u\in\set{1,\dots,w}$,
\begin{equation}
\label{eq:readout}
(A'\mc B')[2w(m+1)+u-1]=mK+1+\floor{\frac{(A\mc B)[m]+u}{w}}.
\end{equation}
Consequently, $A\mc B$ can be computed from $A'\mc B'$ in $O(nw)$ time.
\end{theorem}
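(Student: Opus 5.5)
The plan has three parts: check monotonicity and the value range directly, derive the read-out identity \eqref{eq:readout} from \cref{lem:floor}, and then recover $C=A\mc B$ from the read-out values.

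\textbf{Monotonicity and range.} Within block $i$ the entries $iK+\floor{(A[i]+j)/w}$ are non-decreasing in $j$. Across a block boundary, the last entry of block $i$ is at most $iK+\floor{(U+2w-1)/w}\le iK+\floor{U/w}+2$, while the first entry of block $i+1$ is at least $(i+1)K=iK+2\floor{U/w}+3$, which is larger. The smallest entry is at least $0$. The largest entry is at most $(n-1)K+\floor{U/w}+2\le nK$. The same argument applies to $B'$.

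\textbf{Read-out identity.} Fix $m$ and $u$, and let $k=2w(m+1)+u-1$. A pair of positions $2wi+j$ and $2wi'+j'$ summing to $k$ contributes
\[
(i+i')K+\floor{(A[i]+j)/w}+\floor{(B[i']+j')/w}.
\]
I split the pairs by $d=i+i'$; then $j+j'=k-2wd$ must lie in $[0,4w-2]$.

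For $d=m$ we have $j+j'=2w+u-1$. Write $s=2w+u-1$ and $j'=s-j$. The admissible $j$ form the interval $[\max(0,s-2w+1),\min(2w-1,s)]=[u,2w-1]$. This interval contains $2w-u\ge w$ consecutive integers, so \cref{lem:floor} gives the minimum $\floor{(A[i]+B[i']+2w+u)/w}-1$. Minimizing over $i+i'=m$ and using that the floor is monotone in its argument yields $mK+1+\floor{(C[m]+u)/w}$.

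For $d=m+1$, $j+j'=u-1$, and the lower bound from the first half of the proof of \cref{lem:floor} (which needs no length condition on $J$) gives a value of at least $(m+1)K+\floor{(u)/w}-1\ge (m+1)K-1$. This is strictly larger than $mK+1+\floor{(2U+w)/w}$, since $K=2\floor{U/w}+3$ and $\floor{(2U+w)/w}\le 2\floor{U/w}+2$. Values of $d\ge m+2$ are impossible because $j+j'\ge0$. Values of $d\le m-1$ would need $j+j'\ge 4w+u-1>4w-2$, so they are impossible too.

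I expect this comparison of the $d=m+1$ term against the claimed minimum to be the main obstacle. The lower bound on that term needs to be made precise, and the constant $K$ must be checked to give enough slack. The case $d=m$ is a direct application of \cref{lem:floor}. The boundary cases $m=0$ and $m=2n-2$ require $d\le 2n-2$; there the $d=m+1$ pairs may simply not exist, which only helps.

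\textbf{Recovery.} From \eqref{eq:readout} we know $f(u)=\floor{(C[m]+u)/w}$ for $u=1,\dots,w$. As $u$ ranges over these $w$ consecutive values, $C[m]+u$ hits exactly one multiple of $w$. So $f$ increases by one exactly once, at the unique $u^*$ with $C[m]+u^*\equiv 0\pmod w$, or at $u^*=w$ if $f$ is constant. Then $C[m]=wf(u^*)-u^*$. This takes $O(w)$ time per $m$, so $O(nw)$ in total.
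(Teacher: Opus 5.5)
Your plan follows the paper's proof step by step: block monotonicity, the split by $d=i+i'$, \cref{lem:floor} on $J=\set{u,\dots,2w-1}$, and the read-out. However, the step you flagged as the main obstacle is carried out incorrectly. Write $\Phi=mK+1+\floor{(C[m]+u)/w}$ for the $d=m$ minimum. For $u<w$, your lower bound $(m+1)K+\floor{u/w}-1$ equals $(m+1)K-1=mK+2\floor{U/w}+2$. On the other hand, $mK+1+\floor{(2U+w)/w}=mK+2+\floor{2U/w}\ge mK+2\floor{U/w}+2$. So your claimed strict inequality never holds. Your own estimate $\floor{(2U+w)/w}\le 2\floor{U/w}+2$ only gives $\Phi\le(m+1)K$, which is one above your lower bound.

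A strict separation is in fact false. Take $w=4$, $U=3$ (so $K=3$), $A=B=(2,0)$, $m=0$ and $u=4$. Then $\Phi=3$, and the $d=1$ pair of positions $0$ and $11$ has value $A'[0]+B'[11]=0+3=3$ as well. The repair is the paper's. Once you drop $A[i]+B[i']$, the bound from \cref{lem:floor} is weaker than plain nonnegativity of the two floors. Nonnegativity gives value at least $(m+1)K$ for every $d=m+1$ pair, while $\Phi\le mK+1+(2\floor{U/w}+2)=(m+1)K$. A non-strict comparison suffices, since ties do not change the minimum.

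The recovery rule also fails in one case. If $f$ is constant on $\set{1,\dots,w}$, the multiple of $w$ among $C[m]+1,\dots,C[m]+w$ is $C[m]+1$, so $u^*=1$, not $u^*=w$. With $u^*=w$ your formula outputs $C[m]+1-w$; for example, $w=3$ and $C[m]=2$ give $f\equiv 1$, and you would output $0$. Define $u^*$ as the smallest $u\in\set{1,\dots,w}$ with $f(u)=f(w)$, as the paper does. Then $C[m]=wf(w)-u^*$ in all cases.
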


\begin{proof}
Write $C=A\mc B$ and $C'=A'\mc B'$.
Let $q=\floor{U/w}$, so that $K=2q+3$.
For every $x\in\set{0,\dots,U}$ and $j\in[2w]$ we have
$0\le\floor{(x+j)/w}\le\floor{(U+2w-1)/w}\le q+2$.
Let $F=q+2$, so that $F<K$.

Within the block of index $i$, the entries $iK+\floor{(A[i]+j)/w}$ are non-decreasing in $j$.
The last entry of block $i$ is at most $iK+F<(i+1)K$, which is at most the first entry of block $i+1$.
Hence $A'$ is non-decreasing, and its entries lie in $\set{0,\dots,(n-1)K+F}\subseteq\set{0,\dots,nK}$.
The same holds for $B'$.

Fix $m\in[2n-1]$ and $u\in\set{1,\dots,w}$, and let $\kappa=2w(m+1)+u-1$.
A pair of positions $(2wi+j,\,2wi'+j')$ with $i,i'\in[n]$ and $j,j'\in[2w]$ contributes to $C'[\kappa]$ if and
only if $2w(i+i')+(j+j')=\kappa$.
Since $0\le j+j'\le 4w-2$ and $u-1\le w-1$, this happens exactly in the following two cases:
\begin{enumerate}[label=(\roman*)]
\item $i+i'=m+1$ and $j+j'=u-1$; such a pair has value at least $(m+1)K$;
\item $i+i'=m$ and $j+j'=2w+u-1$; such a pair has value
$mK+\floor{(A[i]+j)/w}+\floor{(B[i']+j')/w}$.
\end{enumerate}
Pairs of type~(ii) exist: since $m\in[2n-1]$ there are $i,i'\in[n]$ with $i+i'=m$, and then every
$j\in J:=\set{u,u+1,\dots,2w-1}$ gives $j'=2w+u-1-j\in[2w]$.
Let $\Phi$ denote the minimum value of a type~(ii) pair. We have
\[
\Phi=mK+\min_{\substack{i+i'=m}}\ \min_{j\in J}\left(\floor{\frac{A[i]+j}{w}}+\floor{\frac{B[i']+2w+u-1-j}{w}}\right).
\]
Since $|J|=2w-u\ge w$, applying \cref{lem:floor} with $x=A[i]$, $y=B[i']$, and $s=2w+u-1$ shows that the inner
minimum equals $\floor{(A[i]+B[i']+2w+u)/w}-1=\floor{(A[i]+B[i']+u)/w}+1$.
As $t\mapsto\floor{(t+u)/w}$ is non-decreasing, minimizing over $i+i'=m$ gives
$\Phi=mK+1+\floor{(C[m]+u)/w}$.
Finally, since $C[m]\le 2U$ and $u\le w$, we have $\Phi\le mK+1+\floor{(2U+w)/w}\le mK+1+(2q+2)=(m+1)K$, so
no type~(i) pair has a smaller value than $\Phi$. Hence $C'[\kappa]=\Phi$, which proves~\eqref{eq:readout}.

Fix $m$, and let $f(u)=\floor{(C[m]+u)/w}$ for $u\in\set{1,\dots,w}$; by~\eqref{eq:readout} all these
values can be read off from $C'$.
Write $C[m]=q_mw+r_m$ with $0\le r_m<w$. Then $f(u)=q_m+\mathbf 1[u\ge w-r_m]$; in particular $f(w)=q_m+1$, and
the smallest $u\in\set{1,\dots,w}$ with $f(u)=f(w)$ is $u^*=w-r_m$.
Hence $C[m]=w\cdot f(w)-u^*$, which can be computed in $O(w)$ time per $m$.
\end{proof}

\begin{remark}
\label{rem:dec}
Reversing the index order of both constructed sequences gives non-increasing instances and reverses the output
order. Negating both sequences and adding constant offsets converts the construction into one for
$(\max,+)$-convolution. Recovering $A\mc B$ uses only the $(2n-1)w$ entries of $A'\mc B'$ listed
in~\eqref{eq:readout}.
\end{remark}

\begin{corollary}
\label{cor:four-thirds}
Suppose that $(\min,+)$-convolution on non-decreasing sequences of length $N$ with entries in $\set{0,\dots,cN}$ can be
solved in time $T(N)$, for a sufficiently large constant $c$, where $T(N)\ge N$ is non-decreasing.
Then $(\min,+)$-convolution on sequences of length $n$ with entries in $\set{0,\dots,n}$ can be solved in time
$T(2n\ceil{\sqrt n})+O(n^{1.5})$.
In particular, if $T(N)=O(N^{4/3-\eps})$ for some $\eps>0$, then bounded $(\min,+)$-convolution can be solved in
$O(n^{2-3\eps/2})$ time.
\end{corollary}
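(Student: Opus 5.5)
We apply \cref{thm:mono-red} with $U=n$ and $w=\ceil{\sqrt n}$, and then check that the resulting monotone instance has length and value range of the right sizes.

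\textbf{Construction.} Given $A,B\in\set{0,\dots,n}^n$, build $A',B'$ as in \cref{thm:mono-red} with these parameters. This takes $O(nw)=O(n^{1.5})$ time. The new sequences are non-decreasing and have length $N=2n\ceil{\sqrt n}$.

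\textbf{Value range.} Their entries lie in $\set{0,\dots,nK}$ with $K=2\floor{n/w}+3\le 2\sqrt n+3$. Hence $nK\le 2n^{1.5}+3n$. On the other hand $N\ge 2n^{1.5}$, so $nK\le N+3n\le cN$ for any constant $c\ge 3$. This is where we use that $c$ is a sufficiently large constant.

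\textbf{Solving and recovering.} We solve the monotone instance in time $T(N)$. By \cref{thm:mono-red}, $A\mc B$ is recovered from $A'\mc B'$ in $O(nw)=O(n^{1.5})$ additional time. The total is $T(2n\ceil{\sqrt n})+O(n^{1.5})$.

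\textbf{The ``in particular'' part.} Suppose $T(N)=O(N^{4/3-\eps})$. Since $N=\Theta(n^{1.5})$, we get $T(N)=O(n^{1.5(4/3-\eps)})=O(n^{2-3\eps/2})$. This term dominates the $O(n^{1.5})$ overhead whenever $\eps\le 1/3$. For larger $\eps$ we would have $T(N)=o(N)$, contradicting the hypothesis $T(N)\ge N$; so the case does not arise, and in any event the claimed bound is only weakened by replacing $\eps$ with $\min\set{\eps,1/3}$.

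\textbf{Remaining details.} Monotonicity of $T$ justifies evaluating it at the exact length $N$ rather than at an upper bound on it. The only step needing care is the constant bookkeeping showing that the value bound $nK$ fits within $cN$; no genuine obstacle is expected.
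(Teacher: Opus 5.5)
Your proposal is correct and follows essentially the same route as the paper: apply \cref{thm:mono-red} with $U=n$ and $w=\ceil{\sqrt n}$, bound the entries by $n(2\floor{n/w}+3)\le 2n^{1.5}+3n\le cN$, use $N=\Theta(n^{1.5})$, and note that $T(N)\ge N$ forces $\eps\le 1/3$, so the $O(n^{1.5})$ overhead is dominated. One minor point: the constructed instance has length exactly $N$, so your appeal to the monotonicity of $T$ is not actually needed in this step.
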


\begin{proof}
Apply \cref{thm:mono-red} with $U=n$ and $w=\ceil{\sqrt n}$. The resulting sequences have length
$N=2n\ceil{\sqrt n}$ and entries at most $n(2\floor{n/w}+3)\le 2n^{1.5}+3n\le cN$.
The remaining claims follow from $N=\Theta(n^{1.5})$. Since $T(N)\ge N$ forces $\eps\le1/3$, the
$O(n^{1.5})$ construction cost is dominated by $O(n^{2-3\eps/2})$.
\end{proof}

Choosing $w=\ceil{\sqrt n}$ balances length against value range. More generally, $w=\ceil{n^\gamma}$ gives the
trade-off below: $\gamma=1/2$ recovers \cref{cor:four-thirds}, and $\gamma=0$ (so $w=1$) is essentially the
linear-offset reduction that adds $iK=i(2U+3)$ to $A[i]$.

\begin{corollary}
\label{cor:tradeoff}
Let $\gamma\in[0,1)$ be a constant.
If $(\min,+)$-convolution on non-decreasing sequences of length $N=\Theta(n^{1+\gamma})$ with entries in
$\set{0,\dots,O(n^{2-\gamma})}$ can be solved in $O(n^{2-\eps})$ time for some $\eps>0$, then
$(\min,+)$-convolution on sequences of length $n$ with entries in $\set{0,\dots,n}$ can be solved in
$O(n^{2-\eps}+n^{1+\gamma})$ time.
\end{corollary}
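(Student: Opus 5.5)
The plan is to apply \cref{thm:mono-red} with $U=n$ and $w=\ceil{n^\gamma}$, exactly as in the proof of \cref{cor:four-thirds}, and to check that the resulting instance has the stated length and value range. Given $A,B\in\set{0,\dots,n}^n$, the theorem constructs non-decreasing sequences $A',B'$ of length $N=2nw$ in $O(nw)$ time. Their entries lie in $\set{0,\dots,nK}$ with $K=2\floor{n/w}+3$.

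First, the parameters. Since $n^\gamma\le w\le n^\gamma+1\le 2n^\gamma$, we get $N=2nw=\Theta(n^{1+\gamma})$. The value bound is $nK\le 2n^{2}/w+3n\le 2n^{2-\gamma}+3n$. Because $\gamma<1$, we have $n\le n^{2-\gamma}$, so this is $O(n^{2-\gamma})$. The hypothesis is stated for some fixed constants hidden in $\Theta(\cdot)$ and $O(\cdot)$, so one must make sure the instance fits them. If the instance falls short, the sequences can be padded: append copies of their last (maximal) entry to reach the required length, which keeps them non-decreasing, and ignore the extra outputs. The value constant is fixed by the construction, so I read the hypothesis as applying to that constant, as in \cref{cor:four-thirds}, where it is ``a sufficiently large constant $c$''. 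Padding changes the length only by a constant factor, so the running time $O(n^{2-\eps})$ is unaffected.

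Then run the assumed algorithm on $(A',B')$ in $O(n^{2-\eps})$ time. Recover $A\mc B$ from the entries listed in~\eqref{eq:readout} in $O(nw)=O(n^{1+\gamma})$ time, as in the final paragraph of the proof of \cref{thm:mono-red}. Padding at the end does not affect those entries: the read-out positions are at most $2w(2n-1)+w-1<N$, and any pair involving a padded position has index sum at least $N$. The total time is $O(nw)+O(n^{2-\eps})+O(nw)=O(n^{2-\eps}+n^{1+\gamma})$.

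The only slightly delicate point is this bookkeeping of the constants in the hypothesis and the padding. There is no real mathematical obstacle, since all the substance is already in \cref{thm:mono-red} and \cref{lem:floor}.
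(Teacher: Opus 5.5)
Your core argument is exactly the paper's proof. You apply \cref{thm:mono-red} with $U=n$ and $w=\ceil{n^\gamma}$, note that $N=2nw=\Theta(n^{1+\gamma})$ and that the entries are at most $n(2n/w+3)=O(n^{2-\gamma})$, and charge $O(n^{1+\gamma})$ for construction and readout. That is complete on its own, because $2n\ceil{n^\gamma}\in[2n^{1+\gamma},4n^{1+\gamma}]$ already has the required order, and no padding is needed.

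The padding paragraph, however, is wrong, so drop it or repair it.

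\begin{itemize}
\item The read-out positions in \eqref{eq:readout} go up to $2w(2n-1)+w-1=4nw-w-1$. This exceeds the original length $2nw$, since the output has length $4nw-1$. So your inequality $2w(2n-1)+w-1<N$ fails, and pairs involving a padded position $p\ge 2nw$ do land on read-out positions.
\item Appending copies of the maximal entry then creates pairs that are too cheap. Take $m=2n-2$, $u=w$ and $A[n-1]=B[n-1]=0$. The pair $(2nw,\,2nw-w-1)$ has value $\bigl((n-1)K+1\bigr)+(n-1)K=(2n-2)K+1$, which is below the true read-out value $(2n-2)K+2$.
\item Concretely, for $n=w=1$ and $A=B=(0)$ you get $A'=B'=(0,1)$ and $(A'\mc B')[2]=2$. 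After padding both to $(0,1,1)$ this entry becomes $1$, and the recovery formula returns $C[0]=-1$.
\end{itemize}

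If you want padding, pad with a large value such as $2nK$ instead. This keeps both sequences non-decreasing with entries $O(n^{2-\gamma})$. Every pair involving a padded entry then has value at least $2nK>(2n-1)K$. That exceeds every read-out value, since these are at most $(m+1)K$ by the proof of \cref{thm:mono-red}.
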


\begin{proof}
Apply \cref{thm:mono-red} with $U=n$ and $w=\ceil{n^{\gamma}}$. The length is $2nw=\Theta(n^{1+\gamma})$
and the entries are at most $n(2n/w+3)=O(n^{2-\gamma})$. The construction and readout take $O(n^{1+\gamma})$ time,
which is truly subquadratic since $\gamma<1$.
\end{proof}

In terms of the length $N=\Theta(n^{1+\gamma})$ and the value bound $V=\Theta(n^{2-\gamma})$, we have
$NV=\Theta(n^3)$. So by \cref{cor:tradeoff}, for every fixed $\gamma\in[0,1)$, that is, roughly in the range
$N^{1/2}<V\le N^{2}$, an $O((NV)^{2/3-\eps})$-time algorithm for monotone $(\min,+)$-convolution would give an
$O(n^{2-3\eps}+n^{1+\gamma})$-time algorithm for bounded inputs of length $n$.
For comparison, the algorithms of~\cite{ChiDXZ22,BringmannDP24} run in expected time $\Ot(N\sqrt V)$.

\section{Bounded Value Ranges}
\label{sec:values}

\subsection{Reducing the Value Range}
\label{sec:value-red}

We prove \cref{thm:intro-value} by splitting the entries into value layers and concatenating the layers.

\begin{theorem}
\label{thm:value-red}
Let $A,B\in\set{0,\dots,U}^n$, let $1\le L\le U$ be an integer, and let $q=\floor{U/L}+1$.
Define sequences $A',B'$ of length $N=2nq$ as follows: for $c\in[q]$ and $i\in[2n]$, let
\[
A'[2nc+i]=\begin{cases}
A[i]-cL & \text{if } i<n \text{ and } \floor{A[i]/L}=c,\\
2L & \text{otherwise},
\end{cases}
\]
and define $B'$ analogously.
Then $A',B'\in\set{0,\dots,2L}^N$, and for every $k\in[2n-1]$,
\begin{equation}
\label{eq:value-readout}
(A\mc B)[k]=\min\set{dL+(A'\mc B')[2nd+k]\;:\; d\in[2q-1],\ (A'\mc B')[2nd+k]<2L}.
\end{equation}
In particular, $A\mc B$ can be computed from $A'\mc B'$ in $O(N)$ time.
\end{theorem}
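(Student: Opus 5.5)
The plan is to verify the range claim directly and then analyze which pairs of positions of $A'$ and $B'$ can contribute to the output position $2nd+k$.

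\emph{Range.} For a genuine entry with $\floor{A[i]/L}=c$ we have $cL\le A[i]<(c+1)L$. Hence $A[i]-cL\in[0,L-1]$, and the dummy value is $2L$. So $A',B'\in\set{0,\dots,2L}^N$.

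\emph{Which pairs contribute.} Fix $k\in[2n-1]$ and $d\in[2q-1]$. Consider a pair of positions $(2nc+i,\,2nc'+i')$ with $c,c'\in[q]$ and $i,i'\in[2n]$ summing to $2nd+k$. Then $2n(c+c')+(i+i')=2nd+k$ with $0\le i+i'\le 4n-2$. This leaves two possibilities:
\begin{itemize}
\item $c+c'=d$ and $i+i'=k$;
\item $c+c'=d-1$ and $i+i'=2n+k$.
\end{itemize}
In the second case $i+i'\ge 2n$, so $i\ge n$ or $i'\ge n$, and one of the two entries is the dummy $2L$. Hence such pairs have value at least $2L$. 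In the first case the pair has value below $2L$ only if neither entry is a dummy. Indeed, a dummy contributes $2L$ and every other entry is nonnegative.

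\emph{Genuine pairs.} A pair with both entries non-dummy satisfies $i,i'<n$, $\floor{A[i]/L}=c$ and $\floor{B[i']/L}=c'$. Its value is $A[i]+B[i']-dL<2L$. Conversely, every pair $(i,i')$ with $i+i'=k$ and $i,i'<n$ appears exactly once, namely at $c=\floor{A[i]/L}$ and $c'=\floor{B[i']/L}$. It therefore appears at $d=c+c'$, which lies in $[2q-1]$. So $(A'\mc B')[2nd+k]<2L$ holds if and only if some genuine pair with level sum $d$ exists, and in that case the entry equals $\min$ over such pairs of $A[i]+B[i']-dL$. Adding $dL$ and minimizing over all $d$ with entry below $2L$ ranges over all pairs $(i,i')$ with $i+i'=k$. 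This gives \eqref{eq:value-readout}. The minimum is over a nonempty set, because $k\in[2n-1]$ always has some pair.

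\emph{Running time.} The readout computes, for each $k$, a minimum over $2q-1$ values. This takes $O(nq)=O(N)$ time.

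The only delicate step is the carry case $c+c'=d-1$. The padding of each layer to length $2n$ is exactly what makes a dummy entry unavoidable there.
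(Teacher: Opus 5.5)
Your proof is correct and follows essentially the same route as the paper. Both show three things: any pair involving a dummy has value at least $2L$; a pair of two real positions satisfies $i+i'\le 2n-2$, so $c+c'=d$, $i+i'=k$, and its value is $A[i]+B[i']-dL<2L$; and minimizing over $d\in[2q-1]$ then recovers $(A\mc B)[k]$. The only difference is ordering: you first split by the two arithmetic cases (with and without a carry), whereas the paper splits directly into real and dummy pairs and rules out the carry from $i,i'<n$.
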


For example, for $A=(1,4,2,0)$, $U=4$, and $L=2$, we have $q=3$ layers of length $2n=8$, and
\[
A'=(\underbrace{1,4,4,0,4,4,4,4}_{c=0},\ \underbrace{4,4,0,4,4,4,4,4}_{c=1},\ \underbrace{4,0,4,4,4,4,4,4}_{c=2}),
\]
where every entry equal to $2L=4$ is a dummy.

\begin{proof}
Every entry of $A'$ is either $2L$ or of the form $A[i]-cL\in[0,L)$, so $A'\in\set{0,\dots,2L}^N$;
similarly for $B'$.
Call a position $2nc+i$ of $A'$ \emph{real} if $i<n$ and $\floor{A[i]/L}=c$, and a \emph{dummy} otherwise; every
$i\in[n]$ has exactly one real position, namely in layer $c=\floor{A[i]/L}$.
Layers have length $2n$ so that the within-layer offsets of two real positions sum to less than $2n$.

Fix $d\in[2q-1]$ and $k\in[2n-1]$, and let $\kappa=2nd+k$.
Consider a pair of positions $(2nc+i,\,2nc'+i')$ with $2n(c+c')+i+i'=\kappa$.
If both positions are real, then $i,i'<n$, so $i+i'\le 2n-2$ and therefore $c+c'=d$ and $i+i'=k$; the value
of the pair is $A[i]+B[i']-dL$, which is less than $2L$ because $A[i]-cL<L$ and $B[i']-c'L<L$.
If at least one position is a dummy, the value of the pair is at least $2L$, since all entries are
nonnegative.
Hence $(A'\mc B')[\kappa]<2L$ holds if and only if there is a pair $(i,i')$ with $i+i'=k$ and
$\floor{A[i]/L}+\floor{B[i']/L}=d$, and in this case
\[
dL+(A'\mc B')[\kappa]=\min\set{A[i]+B[i'] \;:\; i+i'=k,\ \floor{A[i]/L}+\floor{B[i']/L}=d}.
\]
Every pair $(i,i')$ with $i+i'=k$ has $\floor{A[i]/L}+\floor{B[i']/L}\in[2q-1]$, so taking the minimum over
$d$ proves~\eqref{eq:value-readout}.
\end{proof}

\begin{corollary}
\label{cor:value-lb}
Let $x\in(0,1]$ be a constant, and suppose that $(\min,+)$-convolution on sequences of length $N$ with entries in
$\set{0,\dots,N^{x}}$ can be solved in time $O(N^{1+x-\eps})$ for some $\eps>0$.
Then bounded $(\min,+)$-convolution on sequences of length $n$ can be solved in time $O(n^{2-2\eps/(1+x)})$.

\end{corollary}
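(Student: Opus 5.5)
We apply \cref{thm:value-red} with a suitable choice of $L$ to a bounded instance $A,B\in\set{0,\dots,U}^n$ with $U=O(n)$. The instance it produces has length $N=2n(\floor{U/L}+1)$ and entries in $\set{0,\dots,2L}$. We choose $L$ so that $2L\le N^x$, which lets the hypothesized algorithm run on it. Since $N=\Theta(nU/L)=\Theta(n^2/L)$ whenever $L\le U$, the condition $L\approx N^x$ reads $L\approx (n^2/L)^x$, that is, $L=\Theta(n^{2x/(1+x)})$ and $N=\Theta(n^{2/(1+x)})$. This agrees with the introduction. Because $x\le1$, we have $2x/(1+x)\le1$, so $1\le L\le U$ can be arranged; if $U<L$ we may simply raise $U$, since entries in $\set{0,\dots,U}$ also lie in any larger range.

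The constants need a little care. Concretely, set $L=\max\set{1,\floor{c\,n^{2x/(1+x)}}}$ for a small constant $c>0$, and view the entries as lying in $\set{0,\dots,U}$ with $U=\Theta(n)$ chosen so that $L\le U$. Then $N=2n(\floor{U/L}+1)\ge 2nU/L\ge n^2/L$. From this, $N^x\ge n^{2x}/L^x = L\cdot n^{2x}/L^{1+x}\ge L\cdot c^{-(1+x)}\ge 2L$ once $c$ is small enough. So all entries of $A',B'$ lie in $\set{0,\dots,N^x}$.

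Should a fixed range $\set{0,\dots,N^x}$ with exactly this $N$ be required, one can also pad $A',B'$ with the value $2L$ (or any large dummy) to an exact length. We will not need this.

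For the running time, the construction of \cref{thm:value-red} costs $O(nU/L)=O(N)$ and the readout costs $O(N)$. Calling the assumed algorithm costs $O(N^{1+x-\eps})$. Substituting $N=\Theta(n^{2/(1+x)})$ gives
\[
N^{1+x-\eps}=\Theta\bigl(n^{2-2\eps/(1+x)}\bigr),
\]
and $N=O(n^{2/(1+x)})$ is dominated by this bound when $\eps\le x$. We may assume $\eps\le x$ without loss of generality, because any algorithm must read its input, so $N^{1+x-\eps}\ge N$.

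The only delicate step is matching the exponents so that the entry bound $2L\le N^x$ holds while $N$ stays $\Theta(n^{2/(1+x)})$. Since $N$ depends on $L$ through a floor, I would fix $L$ first with an explicit small constant and verify the inequality directly, as above. Everything else is a direct invocation of \cref{thm:value-red}.
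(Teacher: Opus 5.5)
Your proposal is correct and follows essentially the same route as the paper: apply \cref{thm:value-red} with $L=\Theta(n^{2x/(1+x)})$ (the paper takes the constant $c=1/2$ and $U=n$), verify $2L\le N^x$ via $N\ge 2n^2/L$, and assume $\eps\le x$ because the algorithm needs at least linear time. One point to state explicitly is that your bound $L^{1+x}\le c^{1+x}n^{2x}$ requires $\floor{c\,n^{2x/(1+x)}}\ge1$, that is, $n$ large enough, which the paper also assumes.
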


\begin{proof}
Let $A,B\in\set{0,\dots,n}^n$. Apply \cref{thm:value-red} with $U=n$ and
$L=\max\bigl(1,\floor{\tfrac12 n^{2x/(1+x)}}\bigr)\le n$.
The resulting length is $N=2n(\floor{n/L}+1)=\Theta(n^{2/(1+x)})$; moreover, for $n$ large enough,
$N\ge 2n^2/L\ge 4n^{2/(1+x)}$, so the entries are at most $2L\le n^{2x/(1+x)}\le N^{x}$.
The running time is $O(N^{1+x-\eps})=O(n^{2-2\eps/(1+x)})$, plus $O(N)=O(n^{2/(1+x)})$ for the
reduction, which is truly subquadratic since $x>0$. Since writing the output takes linear time, we may assume
$\eps\le x$, in which case the first term dominates.
\end{proof}

\subsection{One Bounded Sequence Suffices}
\label{sec:one-bounded}

We now prove \cref{thm:intro-one-bounded} by splitting the instance into subinstances that are covered by
\cref{lem:fft}.

\begin{theorem}
\label{thm:one-bounded}
Let $A\in\set{0,\dots,M}^{n}$ and $B\in(\Zinf)^{n}$. Then $A\mc B$ can be computed deterministically in
$O(n(M+1)\log n)$ time.
\end{theorem}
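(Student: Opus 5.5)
We follow the sketch from the technical overview. First, the outputs $k\ge n$ reduce to outputs $k<n$ after reversing both sequences. Set $A^R[i]=A[n-1-i]$ and $B^R[j]=B[n-1-j]$; then $(A\mc B)[k]=(A^R\mc B^R)[2n-2-k]$. The outputs $k\ge n-1$ of the original thus become outputs $k'\le n-1$ of the reversed instance, so it suffices to compute $C[k]$ for $k\in[n]$. We may assume $M\ge1$, since the case $M=0$ is a prefix-minimum computation. Let $B_{\min}[k]=\min_{j\le k}B[j]$, let $j_0$ be the first index with $B[j_0]<\infty$, and note that $C[k]=\infty$ for $k<j_0$. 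For $k\in[j_0,n-1]$ we have $I_k\ni j_0$-type pairs, so
\[
B_{\min}[k]\le C[k]\le B_{\min}[k]+M .
\]
The lower bound uses $A\ge0$ and the fact that every $j$ with $i+j=k$ satisfies $j\le k$. The upper bound uses the pair $(k-j^*,j^*)$, where $j^*\le k$ attains the prefix minimum; this pair is available because $k-j^*\in[n]$ and $A[k-j^*]\le M$. Hence only pairs with $B[j]\le B_{\min}[k]+M$ matter for $C[k]$.

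Next, group the outputs. For $k\ge j_0$ let $\phi(k)=\floor{(B_{\min}[j_0]-B_{\min}[k])/M}$. This is non-decreasing in $k$, so each group $K_r=\set{k:\phi(k)=r}$ is an interval $[\ell_r,r_r]$, and only nonempty groups are considered. Let $\beta_r=B_{\min}[j_0]-rM$, so that $B_{\min}[k]\in(\beta_r-M,\beta_r]$ on $K_r$. For $k\in K_r$, a relevant $j$ satisfies $j\le k\le r_r$ and $B[j]\le\beta_r+M$. It also satisfies $B[j]\ge B_{\min}[k]>\beta_r-M$; if $j<\ell_{r'}$ for some group $r'<r-1$, then $B[j]\ge B_{\min}[\ell_{r'}-1]\dots$ Rather than rely on this, it is cleaner to argue directly: $B_{\min}[j]\le B[j]\le\beta_r+M$ implies $\phi(j)\ge r-2$, that is, $j\ge\ell_{r''}$, where $r''$ is the smallest existing group index with $r''\ge r-2$. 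So define the window
\[
J_r=[\ell_{r''},r_r]\cap\set{j:\beta_r-M<B[j]\le\beta_r+M},
\]
with the other entries set to $\infty$. Then $C[k]=\min_{i+j=k,\,j\in J_r}(A[i]+B[j])$ for $k\in K_r$, because every dropped pair has value exceeding $B_{\min}[k]+M\ge C[k]$.

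Each group is then computed by one call to \cref{lem:fft}. Take the restriction of $B$ to $J_r$ (shifted to start at index $\ell_{r''}$), with value range $2M$, and the prefix $A[0..r_r-\ell_{r''}]$, with value range $M$. The cost is $O((r_r-\ell_{r''}+1)(M+1)\log n)$. The main obstacle is bounding the total length: the index intervals $[\ell_{r''},r_r]$ span at most three consecutive groups, so each index of $[n]$ lies in $O(1)$ of them. Here it matters that we take the existing group indices $\ge r-2$ rather than empty ones, so that jumps of $\phi$ do not create overlap. Summing over $r$ gives $O(n(M+1)\log n)$. If the overlap bound via $r-2$ turns out to be off by one, I would widen the window to the previous two nonempty groups, which still gives $O(1)$ multiplicity. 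Computing $B_{\min}$, the groups, and the windows takes $O(n)$, the reversed instance is handled identically, and the two answers are merged entrywise.
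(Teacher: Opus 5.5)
Your proposal is correct and follows essentially the same route as the paper: the prefix-minimum prediction $B_{\min}[k]\le C[k]\le B_{\min}[k]+M$, grouping outputs by $\floor{(B_{\min}[j_0]-B_{\min}[k])/M}$, one call to \cref{lem:fft} per nonempty group with a value window of width $2M$, and reversal for the outputs $k\ge n-1$. The only difference is that your bound $\phi(j)\ge r-2$ is loose, since $B_{\min}[j]\le\beta_r+M$ already gives $\phi(j)\ge r-1$, so the paper's index range $K_{r-1}\cup K_r$ suffices; your three-group range is still correct and only raises the total length from $2n$ to $3n$, so the aborted side argument and the closing hedge can simply be dropped.
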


Write $C=A\mc B$. It suffices to compute the outputs $C[k]$ for $k\in[n]$:
the outputs with $k\ge n-1$ follow by applying the same procedure to the reversed sequences
$A^R[i]=A[n-1-i]$ and $B^R[j]=B[n-1-j]$, since $(A^R\mc B^R)[k]=C[2n-2-k]$.
If $M=0$ then $C[k]=\min_{j\le k}B[j]$ for $k<n$, so assume $M\ge1$.

Let $B_{\min}[j]=\min_{j'\le j}B[j']$ denote the prefix minima of $B$.
If $B_{\min}[k]=\infty$, then $C[k]=\infty$. Let $j_0$ be the smallest index with $B_{\min}[j_0]<\infty$; if no
such index exists, then $C\equiv\infty$, and otherwise $C[k]=\infty$ for $k<j_0$.

\begin{lemma}
\label{lem:prediction}
For every $k$ with $j_0\le k<n$ we have $B_{\min}[k]\le C[k]\le B_{\min}[k]+M$. Moreover, every witness $(i,j)$ of $C[k]$
satisfies $B_{\min}[k]\le B_{\min}[j]\le B[j]\le B_{\min}[k]+M$.
\end{lemma}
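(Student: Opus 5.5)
The plan is to prove the two inequalities for $C[k]$ directly from the definitions, using only that $A$ has entries in $\set{0,\dots,M}$ and that $k<n$. The chain for witnesses will then follow by combining the upper bound with the monotonicity of prefix minima. Throughout, fix $k$ with $j_0\le k<n$. Since $k<n$, every $j\le k$ has the partner index $i=k-j\in[n]$. Hence the pairs contributing to $C[k]$ are exactly $(k-j,j)$ for $j\in[0,k]$.

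\emph{Lower bound.} For any pair $(k-j,j)$ with $j\le k$, nonnegativity gives $A[k-j]+B[j]\ge B[j]\ge B_{\min}[k]$. Minimizing over $j$ yields $C[k]\ge B_{\min}[k]$. This holds trivially for infinite values as well.

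\emph{Upper bound.} Let $j^*\le k$ be an index attaining $B_{\min}[k]$. This value is finite because $k\ge j_0$. The pair $(k-j^*,j^*)$ is a valid pair, so
\[
C[k]\le A[k-j^*]+B[j^*]\le M+B_{\min}[k].
\]
In particular $C[k]<\infty$, so witnesses exist.

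\emph{Witness chain.} Let $(i,j)$ be a witness, so $i+j=k$ and $j\le k$. Prefix minima are non-increasing, so $B_{\min}[k]\le B_{\min}[j]$, and $B_{\min}[j]\le B[j]$ holds by definition. Finally, $A[i]\ge 0$ gives $B[j]\le A[i]+B[j]=C[k]\le B_{\min}[k]+M$ by the upper bound.

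None of these steps is a real obstacle. The only point needing care is the restriction $k<n$, which guarantees that $k-j^*\in[n]$ so the pair used in the upper bound exists; this is exactly why the outputs with $k\ge n$ are handled by reversal.
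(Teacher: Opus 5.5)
Your proposal is correct and follows the paper's proof essentially step for step. You use nonnegativity of $A$ with the monotonicity of prefix minima for the lower bound, the pair $(k-j^*,j^*)$ (valid since $k<n$) for the upper bound, and $A[i]\ge 0$ together with the upper bound to get the witness chain.
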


\begin{proof}
Every pair $(i,j)$ with $i+j=k$ has $j\le k$, hence $A[i]+B[j]\ge B[j]\ge B_{\min}[j]\ge B_{\min}[k]$.
On the other hand, let $j^*\le k$ with $B[j^*]=B_{\min}[k]$ and $i^*=k-j^*$. Then $0\le i^*\le k<n$, so
$C[k]\le A[i^*]+B[j^*]\le M+B_{\min}[k]$.
Finally, a witness $(i,j)$ satisfies $B[j]\le A[i]+B[j]=C[k]\le B_{\min}[k]+M$, and $B_{\min}[k]\le B_{\min}[j]\le B[j]$
holds as $j\le k$.
\end{proof}

\begin{proof}[Proof of \cref{thm:one-bounded}]
For $j_0\le k<n$ define the \emph{level} $g(k)=\floor{(B_{\min}[j_0]-B_{\min}[k])/M}\ge 0$.
Since $B_{\min}$ is non-increasing, $g$ is non-decreasing, so for every $\gamma\ge0$ the set
$K_\gamma=\set{k:g(k)=\gamma}$ is a (possibly empty) interval.
Let $J_\gamma=K_{\gamma-1}\cup K_\gamma$ (with $K_{-1}=\emptyset$), which is an interval as well.
Fix $\gamma$ with $K_\gamma\neq\emptyset$ and $k\in K_\gamma$, and let $(i,j)$ be a witness of $C[k]$.
Since $B[j]<\infty$, we have $j\ge j_0$, so $g(j)$ is defined. By \cref{lem:prediction}, $B_{\min}[j]\le B_{\min}[k]+M$, so
$g(j)\ge \floor{(B_{\min}[j_0]-B_{\min}[k]-M)/M}=g(k)-1$. Since $j\le k$ and $g$ is
non-decreasing, also $g(j)\le g(k)$, so $j\in J_\gamma$.
Moreover, $B[j]\in[B_{\min}[k],B_{\min}[k]+M]$, and $B_{\min}[k]\in(B_{\min}[j_0]-(\gamma+1)M,\,B_{\min}[j_0]-\gamma M]$, so
$B[j]$ lies in the window $W_\gamma=(B_{\min}[j_0]-(\gamma+1)M,\ B_{\min}[j_0]-(\gamma-1)M]$ of length $2M$.

Let $s_\gamma=\min J_\gamma$ and $\ell_\gamma=|J_\gamma|$, and define the sequence
$B_\gamma$ of length $\ell_\gamma$ by $B_\gamma[t]=B[s_\gamma+t]$ if $B[s_\gamma+t]\in W_\gamma$, and
$B_\gamma[t]=\infty$ otherwise.
Let $A_\gamma=(A[0],\dots,A[\ell_\gamma-1])$.
We show that $C[k]=(A_\gamma\mc B_\gamma)[k-s_\gamma]$ for all $k\in K_\gamma$. Every pair $(t,t')$ contributing to $(A_\gamma\mc B_\gamma)[k-s_\gamma]$ with finite value
corresponds to the pair $(t,s_\gamma+t')$ for $C[k]$, so the right-hand side is at least $C[k]$.
Conversely, a witness $(i,j)$ of $C[k]$ has $j\in J_\gamma$ and $B[j]\in W_\gamma$ as shown above, and
$i=k-j\le\max J_\gamma-s_\gamma<\ell_\gamma$; so it corresponds to the pair $(i,j-s_\gamma)$ on the right-hand side.

The finite entries of $A_\gamma$ lie in $[0,M]$ and those of $B_\gamma$ in an interval of length $2M$, so by
\cref{lem:fft} the convolution $A_\gamma\mc B_\gamma$ can be computed in $O(\ell_\gamma (M+1)\log n)$ time.
An index in $K_\delta$ belongs only to $J_\delta$ and $J_{\delta+1}$, so $\sum_\gamma\ell_\gamma\le 2n$, and the total running
time is $O(n(M+1)\log n)$. Computing $B_{\min}$, the levels, and the intervals takes $O(n)$ time: we process only the nonempty levels, which
are the maximal runs of the non-decreasing sequence $g$.
\end{proof}

\begin{remark}
\label{rem:one-bounded-rect}
The proof uses that $A$ has no infinite entries: the comparison pair $(k-j^*,j^*)$ in the proof of
\cref{lem:prediction} must have finite value.
If $A$ has length $n_A$ and $B$ has length $n_B$, the same running time $\Ot((n_A+n_B)(M+1))$ is obtained by
padding $B$ with $+\infty$ entries when $n_B<n_A$, and by splitting $B$ into consecutive chunks of length $n_A$
(the last one padded with $+\infty$) when $n_B>n_A$; in the latter case, the output is the entrywise minimum of the
chunk outputs, each shifted by the starting index of its chunk.
\end{remark}

\subsection{Bounded Entries with Infinities}
\label{sec:bounded-inf}

To prove \cref{thm:intro-bounded-inf}, we replace the prefix-minimum prediction by a search for the first relevant
value bucket of $B$.

\begin{theorem}
\label{thm:bounded-inf}
Let $A\in(\set{0,\dots,M}\cup\set{+\infty})^{n}$ and $B\in(\Zinf)^{n}$, where $1\le M\le n$.
Then $A\mc B$ can be computed deterministically in $O(n^{1.5}\sqrt{M}\log n)$ time.
\end{theorem}

\begin{proof}[Proof of \cref{thm:bounded-inf}]
Write $C=A\mc B$ and let $\tau=\ceil{\sqrt{nM}}$.
For every $j$ with $B[j]<\infty$ let $\beta(j)=\floor{B[j]/M}$ be its \emph{bucket}, and for every integer $b$
let $T_b=\set{j: B[j]<\infty,\ \beta(j)=b}$.
A bucket is \emph{heavy} if $|T_b|\ge\tau$ and \emph{light} otherwise; there are at most $n/\tau$ heavy buckets.
Sort the indices $j$ with $B[j]<\infty$ by $B[j]$, and partition the sorted order into \emph{blocks}.
Each heavy bucket forms its own block. In each maximal run of light buckets, greedily add whole buckets to a
block until it contains at least $\tau$ elements, then start a new block.
Each block consisting of light buckets has fewer than $2\tau$ elements, and all but at most one block in each
such run have at least $\tau$ elements. There are at most $n/\tau$ heavy buckets, hence at most $n/\tau+1$ runs of
light buckets, and at most $n/\tau$ blocks with at least $\tau$ elements; so there are $O(n/\tau)$ blocks.
By construction, blocks are ordered by value: all elements of an earlier block have smaller buckets than all
elements of a later block.

For an output $k$, call $j$ \emph{feasible} if $B[j]<\infty$, $0\le k-j<n$ and $A[k-j]<\infty$. If no $j$ is
feasible, $C[k]=\infty$. Otherwise, let $b^*(k)=\min\set{\beta(j): j \text{ feasible for } k}$.

\emph{Step 1: computing $b^*(k)$.} We follow~\cite[Lemma~1]{ChanH20}. For every block $\mathcal B$, we multiply
the indicator polynomials of $\set{i: A[i]<\infty}$ and of $\mathcal B$ with one FFT, which gives the number
$E_{\mathcal B}[k]$ of feasible $j\in \mathcal B$ for every $k$. The exponents are the original indices; only the grouping into blocks uses the sorted order.
This takes $O(n\log n)$ time per block, and $O((n^2/\tau)\log n)$ time in total.
For every $k$, the first block $\mathcal B(k)$ with $E_{\mathcal B(k)}[k]>0$ contains a feasible $j$ of minimal bucket,
because blocks are ordered by value and consist of whole buckets.
If $\mathcal B(k)$ is a heavy bucket $b$, then $b^*(k)=b$. Otherwise, we scan the elements of $\mathcal B(k)$ in sorted order and
return the bucket of the first feasible one; this takes $O(\tau)$ time.
In total, Step~1 takes $O(n^2\log n/\tau+n\tau)$ time.

\emph{Step 2: the relevant buckets.}
Let $k$ have a feasible $j$, and let $b^*=b^*(k)$.
Every feasible $j$ has $B[j]\ge b^*M$, so $C[k]\ge b^*M$. A feasible $j$ with $\beta(j)=b^*$ has
$B[j]\le b^*M+M-1$ and $A[k-j]\le M$, so $C[k]\le (b^*+2)M-1$.
A feasible $j$ with $\beta(j)\ge b^*+2$ has $A[k-j]+B[j]\ge(b^*+2)M>C[k]$.
Hence every witness $(k-j,j)$ satisfies $\beta(j)\in\set{b^*,b^*+1}$, and
\[
C[k]=\min_{b\in\set{b^*,b^*+1}}\ \min\set{A[k-j]+B[j]\;:\;j\in T_b,\ 0\le k-j<n}.
\]

\emph{Step 3: evaluating a bucket.}
For every heavy bucket $b$, precompute $D_b=A\mc B_b$, where $B_b[j]=B[j]$ for $j\in T_b$ and $B_b[j]=\infty$
otherwise. The finite entries of $B_b$ lie in $[bM,bM+M-1]$, so by \cref{lem:fft} this takes
$O(nM\log n)$ time per heavy bucket, and $O((n^2M/\tau)\log n)$ time in total.
For every $k$ having a feasible index and every $b\in\set{b^*(k),b^*(k)+1}$, the inner minimum above is
$D_b[k]$ if $b$ is heavy, and we compute it by brute force over $T_b$ in $O(\tau)$ time if $b$ is light. With the
buckets stored in a sorted array, this takes $O(n\tau\log n)$ time in total.

The total running time is $O\bigl((n^2M/\tau+n\tau)\log n\bigr)=O(n^{1.5}\sqrt M\log n)$.
\end{proof}

\begin{remark}
\label{rem:min-witness-conv}
In the \emph{minimum-witness convolution} problem we are given sets $S,T\subseteq[n]$ and must compute, for
every $k$, the smallest $j\in T$ with $k-j\in S$. This is the special case of \cref{thm:bounded-inf} with
$M=1$, $A[i]=0$ for $i\in S$, $A[i]=\infty$ otherwise, and $B[j]=j$ for $j\in T$, $B[j]=\infty$ otherwise.
The best known algorithms for minimum-witness convolution take $\Ot(n^{1.5})$
time~\cite{BabaiF09,ChanH20} (see also~\cite{LincolnPW20}), so
an $O(n^{1.5-\eps})$-time algorithm for the problem of \cref{thm:bounded-inf}, even for $M=1$ and some fixed
$\eps>0$, would improve the state of the art for minimum-witness convolution.
\end{remark}

\section{Research History and Use of AI Tools}
\label{sec:history}

Since October 2025, we have been working toward a truly subcubic algorithm for APSP, and toward truly subcubic and
truly subquadratic algorithms for the many special cases of APSP and of the $(\min,+)$-product studied in
fine-grained complexity. Since $(\min,+)$-convolution reduces to the $(\min,+)$-product, and since its monotone and
bounded-difference variants are among the best-understood special cases, much of our effort has focused on monotone
$(\min,+)$-convolution and on understanding how much of its structure is really needed for fast algorithms.

The first artifact of this program is our SODA~2027 paper~\cite{HajiaghayiMS27}, which gives truly sublinear dynamic
algorithms for both $(\min,+)$-convolution and $(\min,+)$-matrix multiplication under marginal updates. Its central
contribution is a framework that carries the structural advantages of static, bounded-value problems over to a
dynamic setting in which the values may grow without bound. For $(\min,+)$-convolution, we combined this framework
with structure from additive combinatorics and obtained update and query times of $\Ot(n^{21/22})$; for
$(\min,+)$-matrix multiplication, we combined it with graph-theoretic structure specific to bounded matrices and
obtained truly sublinear algorithms whenever the updates are not too concentrated, which is tight under the OMv
conjecture. The paper also ties the dynamic and static problems together: static bounded-difference
$(\min,+)$-convolution reduces to its dynamic counterpart, so sufficiently fast dynamic algorithms would improve the
static exponent, and our algorithms yield solutions to the bounded-difference versions of the problems introduced by
Chan and Lewenstein~\cite{ChanL15} and by Bringmann, Grandoni, Saha, and Vassilevska
Williams~\cite{BringmannGSW16}.

Very recently, Alman and Vassilevska Williams~\cite{AlmanV26} obtained a truly subcubic algorithm for APSP with
polynomially bounded integer weights, answering the central question that motivated our program for this
setting. In light of their
result, we are posting the present paper as an artifact of our progress so far on the structured special cases,
which remain of independent interest: for inputs of small monotonicity, our algorithms achieve exponents far below
the exponent close to $2$ that follows from~\cite{AlmanV26}.

All results in this paper were obtained without AI, except two, which were obtained with AI models: the
reduction from bounded to bounded monotone $(\min,+)$-convolution (\cref{thm:mono-red} and
\cref{cor:four-thirds,cor:tradeoff}), and the reduction of the value range (\cref{thm:value-red} and
\cref{cor:value-lb}). AI tools were additionally used in writing the paper.

\bibliographystyle{plainurl}
\bibliography{refs}

\end{document}